\documentclass[journal,onecolumn,Manuscript]{IEEEtran}

\ifCLASSINFOpdf
  
\else
  
\fi

\usepackage{amsmath}
\usepackage{amsthm}
\usepackage{cases}
\usepackage{amsfonts}
\usepackage{cite}
\usepackage{amssymb}
\usepackage{enumerate}
\usepackage{graphicx}
\usepackage{float}
\usepackage{caption}
\usepackage{verbatim}
\usepackage{dblfloatfix}
\usepackage[cmintegrals]{newtxmath}
\usepackage[ruled,linesnumbered]{algorithm2e}
\usepackage{accents}
\newcommand{\ubar}[1]{\underaccent{\bar}{#1}}

\newtheorem{theorem}{Theorem}
\newtheorem{lemma}{Lemma}
\newtheorem{definition}{Definition}
\newtheorem{corollary}{Corollary}
\newtheorem{remark}{Remark}

\hyphenation{op-tical net-works semi-conduc-tor}

\begin{document}

\title{On General Capacity-Distortion Formulas of Integrated Sensing and Communication}

\author{
  \IEEEauthorblockN{Yiqi Chen\IEEEauthorrefmark{1},
  Tobias Oechtering\IEEEauthorrefmark{2},
  Mikael Skoglund\IEEEauthorrefmark{2},
  and Yuan Luo\IEEEauthorrefmark{1}}\\
\IEEEauthorblockA{\IEEEauthorrefmark{1}%
Shanghai Jiao Tong University,  
Shanghai,
China, 
\{chenyiqi,yuanluo\}@sjtu.edu.cn}\\
\IEEEauthorblockA{\IEEEauthorrefmark{2}%
KTH Royal Institute of Technology,  
100 44 Stockholm,
Sweden, 
\{oech,skoglund\}@kth.se}
}


\maketitle
\thispagestyle{empty}
\pagestyle{empty}
\begin{abstract}
The integrated sensing and communication (ISAC) problem with general state and channel distributions is investigated. General formulas of the capacity-distortion tradeoff for the ISAC problem under maximal and average distortion constraints are provided. The results cover some existing communication models such as the general point-to-point channel and Gel'fand-Pinsker channel. More details including memoryless states and channels, mixed states and channels, and rate-limited CSI at one side are considered. Numerical results focus on writing-on-dirty paper model and channel with ergodic/non-ergodic fading.
\end{abstract}

\begin{IEEEkeywords}
  Integrated sensing and communication (ISAC), information-spectrum method, capacity-distortion tradeoff, state estimation.
\end{IEEEkeywords}

%
\IEEEpeerreviewmaketitle

\section{introduction}
One attractive feature of the next-generation wireless communication systems is the ability to autonomously learn and adapt to the environment, which enables the participants in the system to react to changes. Such an intelligent behavior of the system relies on the ability to sense/estimate the environment, which is characterized by `states' from the point of view of information theory. In the integrated sensing and communication (ISAC) problem, on top of the reliable communication between the sender and receiver over a state-dependent channel, one of the participants is required to estimate the channel state.

In \cite{sutivong2005channel} and \cite{zhang2011joint}, ISAC problems where the estimation constraint is imposed on the receiver side were investigated. In \cite{sutivong2005channel}, the transmitter has state information knowledge and wants to reveal it to the receiver through communication. Minimal achievable distortion and the capacity-distortion function were given in the paper. In \cite{zhang2011joint}, a similar model was considered while neither the transmitter nor the receiver has state information knowledge, except its statistics. The ISAC with causal CSI at the transmitter and distortion constraint at the receiver was studied in \cite{choudhuri2010capacity}.

The information-theoretic analysis of ISAC problem where both transmitting and estimation are performed at the transmitter side was initiated by \cite{ahmadipour2022information}, where point-to-point channel, multiple access channel, and broadcast channels were considered. The authors gave the capacity-distortion tradeoff and also used a Blahut-Arimoto algorithm to evaluate the results numerically. In their setting, the transmitter tries to transmit a pure message to the receiver, and then receives feedback from the receiver. The transmitter uses all the resources (e.g. input, feedback) to estimate the channel states. Secure ISAC was studied in \cite{gunlu2023secure}, where the transmitter uses feedback to both perform estimation and enhance the secrecy transmission. Fundamental limits of ISAC over Gaussian channels were provided in \cite{xiong2023fundamental}. However, the model in \cite{ahmadipour2022information} and its following works make an assumption that the states are i.i.d. generated by a fixed distribution, which is not usually the case in real-world communication systems. Hence, investigating the ISAC with a more general state/channel setting is the topic of this paper.

Another line of works of state-dependent channels is the action-dependent channel, which is first studied in \cite{weissman2010capacity}. The model was then extended to multi-user case\cite{dikstein2014mac,steinberg2012degraded,steinberg2013degraded}. Action-dependent channel with side information and reconstruction requirement was investigated in \cite{kittichokechai2015coding}. Secrecy problems of action-dependent communication were studied in \cite{kittichokechai2015secure} for source coding problems and \cite{dai2013wiretap}\cite{dai2020impact} for channel coding problems. 

Coding for general source/channel problems was discussed in \cite{verdu1994general,han1993approximation,han2006information,koga2013information}, where the distributions of the source and channel can be arbitrary, and so are the alphabets of input and output symbols. The general wiretap channel coding problem was studied in \cite{bloch2008secrecy}. Gel'fand-Pinsker coding was extended to general state and general channel case in \cite{tan2014formula}. The author further gave the capacity results for coded side information at one side and full information at another side, and the case for mixed state distribution and mixed channel. The Wyner-Ziv coding for general sources was discussed in \cite{iwata2002information} under the maximal distortion criterion, and then studied in \cite{yang2007wyner} under the average distortion criterion.

In this paper, we consider the ISAC problem with a general action-dependent state and channel setting, with different noisy side information available at the encoder and decoder sides. The distributions of state and channel may be arbitrarily nonstationary and/or arbitrarily nonergodic with abstract input, output and state alphabets. Our problem arises in real-world communication for instance by considering a base station monitoring and controlling a vehicle. The vehicle moves according to the instructions from the base station, each corresponding to a fixed route. The selection of the action determines the route that the vehicle is going to go through, and the base station tries to estimate some states or properties related to the vehicle. We assume the noisy side information at the encoder and decoder sides since once the route is determined, the base station may have a prior estimation of what they are interested in, and the vehicle itself can be equipped with some sensors to detect the data that is related to the states. Capacity-distortion tradeoff results for both maximal distortion and average distortion are provided. We further investigate the case for mixed states and mixed channels, and channels with rate-limited side information. Channel with rate-limited side information was first studied in \cite{heegard1983capacity} and an inner bound of the capacity was given. We extended the results to general cases in this paper. We also give some numerical results for ergodic/non-ergodic state sources by considering some special cases of the model including writing-on-dirty-paper channel and fading channel.

The rest of the paper is organized as follows. In Section \ref{sec: definitions} we provide the notations and definitions used in this paper. Section \ref{sec: main results} presents the results of this paper. We give our main theorems at the beginning of Section \ref{sec: main results} and provide the results for the memoryless case, mixed state/channel case and rate-limited CSI case at Sections \ref{sec: memoryless}, \ref{sec: mixed case} and \ref{sec: rate-limited csi}, respectively. Sections \ref{sec: proof of average distortion} and \ref{sec: proof of maximal distortion} prove Theorems \ref{the: average distortion capacity} and \ref{the: maximal distortion capacity}, respectively. In Section \ref{sec: numerical examples}, we show numerical examples by applying our results to the writing-on-dirty paper model and fading channel model.

\section{models and definitions}\label{sec: definitions}
\subsection{Notations}
Throughout this paper, random variables and sample values are denoted by capital letters and lowercase letters, e.g. $X$ and $x$. Sets are denoted by calligraphic letters. We use $\boldsymbol{X}=\{X^n=(X^{(n)}_1,X^{(n)}_2,\dots,X^{(n)}_n)\}_{n=1}^{\infty}$ to denote a general source, where $X^n$ represents an $n-$length random sequence. If the random sequence has i.i.d. components, the notation can be simplified to $X^n=(X_1,X_2,\dots,X_n)$.  The n-length sample sequence is written by $x^n=(x_1,x_2,\dots,x_n)$. Let $\mathcal{X}^n$  be the n-fold Cartesian product of $\mathcal{X}$, which is the set of all possible $x^n$. To denote substrings, let $X^{i}=(X_1,X_2,\dots,X_i)$ and $X^{n}_{i+1}=(X_{i+1},X_{i+2},\dots,X_n)$.  We use $P_X$ to denote the probability distribution of a random variable $X$ and $P_{XY},P_{X|Y}$ to denote the joint distribution and conditional distribution, respectively. The corresponding n-length general distributions are $P_{X^n},P_{X^nY^n}$, and $P_{X^n|Y^n}$, respectively. For a sequence $\boldsymbol{x}$ generated i.i.d. according to some distribution $P_X$, we denote $P_X^n(\boldsymbol{x})=\prod_{i=1}^n P_X(x_i)$. For given random variable $X$ with distribution $P_X$, we follow the convention in \cite{tan2014formula} and use  $\mathbb{E}[X]=\sum_{x\in\mathcal{X}}P_X(x)\cdot x$ to represent its expectation, even though the alphabet $\mathcal{X}$ may not be countable and $\int_{x} x dP_{X}$ would be more precise. For simplicity, we use $\sum_x$ to represent the sum over all elements of the set where $x$ is defined.

Next, we introduce the terms of \emph{limsup} and \emph{liminf} \emph{in probability} as done by Han and Verd\'u\cite{koga2013information}.
\begin{definition}
  Let $\{Z_n\}_{n=1}^{\infty}$ be a sequence of real-valued random variables. The limsup in probability of $Z_n$ is
  \begin{align*}
    \emph{p}-\limsup_{n\to\infty} Z_n := \emph{inf}\{\alpha | \lim_{n\to\infty}\emph{Pr}\{Z_n > \alpha\} = 0\}.
  \end{align*}
  The liminf in probability of $Z_n$ is
  \begin{align*}
    \emph{p}-\liminf_{n\to\infty} Z_n := \emph{sup}\{\beta | \lim_{n\to\infty}\emph{Pr}\{Z_n < \beta\} = 0\}.
  \end{align*}
\end{definition}
The following definitions from \cite{koga2013information} play an important role in characterizing the capacity results in this paper.
\begin{definition}
  Given a pair of stochastic processes $(\boldsymbol{X},\boldsymbol{Y})=\{X^n,Y^n\}_{n=1}^{\infty}$ with joint distribution $\{P_{X^nY^n}\}_{n=1}^{\infty}$, the spectral inf-mutual information rate is 
  \begin{align*}
    \ubar{I}(\boldsymbol{X};\boldsymbol{Y}) := \emph{p}-\liminf_{n\to\infty}\frac{1}{n}\log \frac{P_{Y^n|X^n}(Y^n|X^n)}{P_{Y^n}(Y^n)}.
  \end{align*}
  The spectral sup-mutual information rate is
  \begin{align*}
    \bar{I}(\boldsymbol{X};\boldsymbol{Y}) := \emph{p}-\limsup_{n\to\infty}\frac{1}{n}\log \frac{P_{Y^n|X^n}(Y^n|X^n)}{P_{Y^n}(Y^n)}.
  \end{align*}
  The spectral inf- and sup- conditional mutual information rates are defined similarly. 
\end{definition}
\subsection{Models}
\begin{figure}
  \centering
  \includegraphics[scale=0.6]{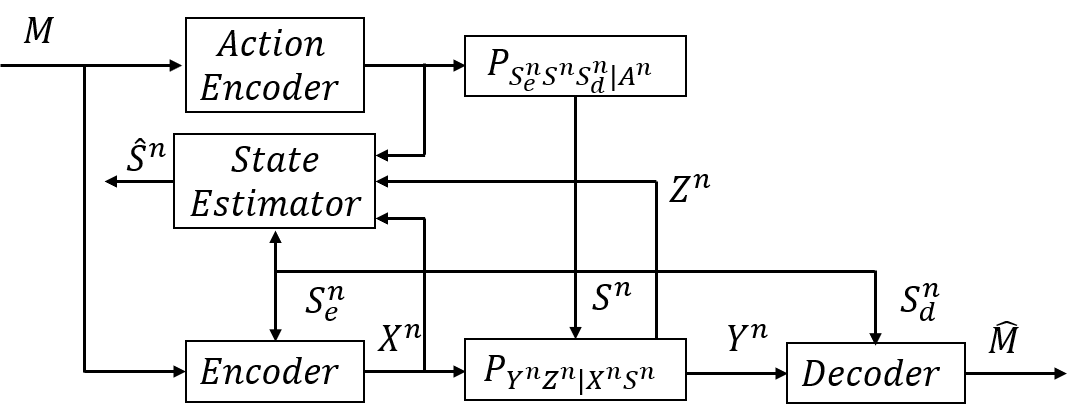}
  \caption[short]{Action-dependent Integrated Sensing and Communication Model: The state sequence and its noisy versions are generated according to the action sequence, which is determined by the message selected by the sender. The state estimator uses the channel feedback, noisy channel state information and input sequence to reproduce the state sequence according to a reconstruction function. The receiver decodes the message based on the channel output and the decoder side noisy channel state information.}
  \label{fig: channel model}
\end{figure}
Consider the case where the sender wants to send a message through a general channel $\boldsymbol{W}=\{P_{Y^nZ^n|X^nS^n}\}_{n=1}^{\infty}$, which is defined by a sequence of state-dependent channels with input alphabet $\mathcal{X}$, state alphabet $\mathcal{S}$ and output alphabets $\mathcal{Y},\mathcal{Z}$. The alphabets are not necessarily finite, and the channels are not necessarily stationary and memoryless. For every $n$, the channel is action-dependent, which means given a message $M$, uniformly distributed on message set $\mathcal{M}_n$, an action sequence $A^n$ is selected. The channel state sequence $S^n$ and its two lossy versions $S^n_e,S^n_d$ are then generated according to distribution $P_{S^n_eS^nS^n_d|A^n}$. The imperfect channel state information $S^n_e$ and $S^n_d$ are revealed to the sender and receiver, respectively. The goal of the decoder is to decode the message on the basis of its channel output observation $Y^n$ and imperfect CSI $S^n_d$. Another channel output $Z^n$, is used as feedback and sent back to the state estimator. The job of the state estimator is to reconstruct the state sequence of the channel using all the information it has, including feedback $Z^n$, imperfect CSI at the encoder side $S^n_e$, the action sequence $A^n(M)$ and the input sequence $X^n$ according to a reconstruction function $g_n$. The performance of the state estimator is evaluated by a given distortion function $d_n: \mathcal{S}^n\times\hat{\mathcal{S}}^n\to [0,+\infty)$. The channel model is depicted in Fig. \ref{fig: channel model}. Such a model is named an action-dependent integrated sensing and communication (ISAC) model.

Different from the model in \cite{ahmadipour2022information} with stationary and memoryless channel and additive distortion function assumption, the model considered here is a general channel and the distortion function $d_n,n=1,2,\dots$ can be more general. Hence, the capacity-distortion function is considered in two different criteria: average distortion criterion and maximal distortion criterion.
\begin{definition}\label{def: code}
  An $(n,\mathcal{M}_n,\mathcal{A}_n)$ code for general action-dependent integrated sensing and communication model under average distortion constraint consists of:
  \begin{itemize}
    \item A message set $\mathcal{M}_n$;
    \item An action set $\mathcal{A}_n$;
    \item An action encoder $f_A: \mathcal{M}_n \to \mathcal{A}_n$;
    \item A message encoder $f: \mathcal{M}_n \times \mathcal{S}^n_e \to \mathcal{X}^n$;
    \item A decoder $h:\mathcal{Y}^n \times\mathcal{S}^n_d \to \mathcal{M}_n;$
    \item A state estimator $g_n: \mathcal{M}_n \times \mathcal{S}^n_e \times \mathcal{Z}^n \to \hat{\mathcal{S}}^n$
  \end{itemize}
  such that its average decoding error is defined by
  \begin{align*}
    P_{e,n} = \frac{1}{|\mathcal{M}_n|}\sum_{m}\sum_{s^n,s^n_e,s^n_d}P_{S^n_eS^nS^n_d|A^n}(s^n_e,s^n,s^n_d|f_A(m))\sum_{z^n}\sum_{\substack{y^n:\\(y^n,s^n_d)\notin h^{-1}(m)}}W^n(y^n,z^n|f(m,s^n_e),s^n)
  \end{align*}
  where $h^{-1}(m)=\{(y^n,s^n_d):h(y^n,s^n_d)=m\}$ and average distortion
  \begin{align*}
    \limsup_{n\to\infty}\frac{1}{n}\mathbb{E}\left[ d_n(S^n,g_n(f(M,S^n_e),f_A(M),S^n_e,Z^n)) \right] ,
  \end{align*}
  where $d_n:\mathcal{S}^n\times\hat{\mathcal{S}}^n \to [0,+\infty)$ is the distortion function. For simplicity, sometimes we write the average distortion as $D(\boldsymbol{S},\hat{\boldsymbol{S}})$.
\end{definition}
\begin{definition}
  An $(n,\mathcal{M}_n,\mathcal{A}_n)$ code for general action-dependent integrated sensing and communication model under maximal distortion constraint has a message set, an action set, an action encoder, a message encoder, a decoder, a state estimator and average decoding error as defined in Definition \ref{def: code} with maximal distortion constraint
  \begin{align*}
    p-\limsup_{n\to \infty}\frac{1}{n}d_n(S^n,g_n(f(M,S^n_e),f_A(M),S^n_e,Z^n)).
  \end{align*}
  For simplicity, sometimes we write the maximal distortion as $\bar{D}(\boldsymbol{S},\hat{\boldsymbol{S}})$.
\end{definition}
Now we give the definitions of the achievable rate-distortion pairs regarding the above definitions of code.
\begin{definition}[$m-$achievability]
  A rate-distortion pair $(R,D)$ is said to be maximal-achievable ($m-$achievable) if for every $\epsilon>0$, there exists a sequence of $(n,\mathcal{M}_n,\mathcal{A}_n)$ codes for sufficiently large $n$ such that
  \begin{equation}\label{def: fm achievable condition}
    \begin{split}
      \liminf_{n\to\infty}\frac{1}{n}\log |\mathcal{M}_n| \geq R,\\
    \limsup_{n\to\infty}P_{e,n} \leq \epsilon,\\
    p-\limsup_{n\to \infty}\frac{1}{n}d_n(S^n,g_n(f(M,S^n_e),f_A(M),S^n_e,Z^n)) \leq D.
    \end{split}
  \end{equation}
  The $m-$capacity-distortion region is denoted by $C_m(D)$.
\end{definition}

\begin{definition}[$a-$achievability]
  A rate-distortion pair $(R,D)$ is said to be average-achievable ($a-$achievable) if for every $\epsilon>0$, there exists a sequence of $(n,\mathcal{M}_n,\mathcal{A}_n)$ codes for sufficiently large $n$ such that
  \begin{equation}\label{def: fa achievable condition}
    \begin{split}
      \liminf_{n\to\infty}\frac{1}{n}\log |\mathcal{M}_n| \geq R,\\
    \limsup_{n\to\infty}P_{e,n} \leq \epsilon,\\
    \limsup_{n\to\infty}\frac{1}{n}\mathbb{E}\left[ d_n(S^n,g_n(f(M,S^n_e),f_A(M),S^n_e,Z^n)) \right] \leq D.
    \end{split}
  \end{equation}
  The $a-$capacity-distortion region is denoted by $C_a(D)$.
\end{definition}

Intuitively, given the same distortion constraint $D$, the capacity of action-dependent ISAC under average distortion constraint should be no smaller than the capacity under maximal distortion constraint, since the latter one is a more strict condition. 

Now consider the case that the communication resource to one of the encoder and decoder is limited. Take limited CSI at the encoder side as an example. In this case, only state sequence $S^n$ and imperfect CSI $S^n_d$ are generated. A lossy description of $S^n_d$ is then generated and sent to the encoder. A more general model for limited CSI and stationary and memoryless channel is discussed in \cite{heegard1983capacity}. Rate-limited CSI at only the encoder or decoder side was further investigated in \cite{rosenzweig2005channels} and \cite{steinberg2008coding}. \cite{tan2014formula} generalized the previous results to general source and general channel.
\begin{definition}\label{def: rate-limited code}
  An $(n,\mathcal{M}_n,\mathcal{A}_n,\mathcal{M}_{e,n})$ code for general action-dependent integrated sensing and communication model with rate-limited CSI at encoder under average/maximal distortion constraints consists of:
  \begin{itemize}
    \item A message set $\mathcal{M}_n$;
    \item An action set $\mathcal{A}_n$;
    \item An action encoder $f_A: \mathcal{M}_n \to \mathcal{A}_n$;
    \item A side information encoder $f_e: \mathcal{S}^n_d \to \mathcal{M}_{e,n}$;
    \item A message encoder $f: \mathcal{M}_n \times \mathcal{M}_{e,n} \to \mathcal{X}^n$;
    \item A decoder $h:\mathcal{Y}^n \times\mathcal{S}^n_d \to \mathcal{M}_n;$
    \item A state estimator $g_n: \mathcal{M}_n \times \mathcal{M}_{e,n} \times \mathcal{Z}^n \to \hat{\mathcal{S}}^n$
  \end{itemize}
  such that its average decoding error is defined by
  \begin{align*}
    P_{e,n} = \frac{1}{|\mathcal{M}_n|}\sum_{m}\sum_{s^n,s^n_d}P_{S^nS^n_d|A^n}(s^n,s^n_d|f_A(m))\sum_{z^n}\sum_{\substack{y^n:\\(y^n,s^n_d)\notin h^{-1}(m)}}W^n(y^n,z^n|f(m,f_e(s^n_d)),s^n),
  \end{align*}
  where $h^{-1}(m)=\{(y^n,s^n_d):h(y^n,s^n_d)=m\}$ and average distortion
  \begin{align*}
    \limsup_{n\to\infty}\frac{1}{n}\mathbb{E}\left[ d_n(S^n,g_n(f(M,f_e(S^n_d)),f_A(M),Z^n)) \right] ,
  \end{align*}
   or maximal distortion
  \begin{align*}
    p-\limsup_{n\to\infty} \frac{1}{n}d_n(S^n,g_n(f(M,f_e(S^n_d)),f_A(M),Z^n)),
  \end{align*}
  where $d_n:\mathcal{S}^n\times\hat{\mathcal{S}}^n \to [0,+\infty)$ is the distortion function
\end{definition}

\begin{definition}
  A rate distortion pair $(R,R_e,D)$ is said to be $m-$achievable if for every $\epsilon>0$, there exists a sequence of  $(n,\mathcal{M}_n,\mathcal{A}_n,\mathcal{M}_{e,n})$ codes for sufficiently large $n$ such that in addition to \eqref{def: fm achievable condition}, it further holds that
  \begin{align}
    \label{def: rate-limited CSI at encoder achievable condition}\limsup_{n\to\infty}\frac{1}{n}\log |\mathcal{M}_{e,n}| \leq R_e.
  \end{align}
  The set of all achievable rate-distortion pairs is the capacity region $C_{e,m}(D)$. The $a-$achievable rate follows similarly except condition \eqref{def: fa achievable condition} needs to be satisfied instead of \eqref{def: fm achievable condition}. The capacity region is denoted by $C_{e,a}(D)$.
\end{definition}
The following definition characterizes a set of distortion functions with a specific property, which is useful for problems with average distortion constraints\cite{yang2007wyner}.
\begin{definition}[Uniformly bounded distortion function]
  A set of distortion functions $\{d_n\}_{n=1}^{\infty}$ is uniformly bounded if there exists $D_{max}>0$ such that
  \begin{align*}
    0\leq \frac{1}{n}d_n(S^n,\hat{S}^n) \leq D_{max}, \;\;\forall n\geq 1,\forall (S^n,\hat{S}^n)\in\mathcal{S}^n\times\mathcal{\hat{S}}^n.
  \end{align*}
\end{definition}
\begin{remark}
  In \cite{koga2013information}, for lossy source coding under average distortion constraint, the distortion function is required to satisfy the uniform integrability\cite[Chap. 5]{koga2013information}, which is a more general assumption compared to a uniformly bounded assumption. By \cite[Eq. (5.3.4)]{koga2013information}, for distortion functions satisfying uniform integrability with the source, one can always find a reference word $r^n$ such that the reconstruction distortion can be upper bounded. However, in the source coding problem the distortion is determined once the encoder selects the lossy description while in our problem, the reconstruction function relies on the channel feedback $Z^n$ and the input sequence $X^n$, which are generated by some conditional distributions based on the transmitted message. This uncertainty requires a stronger assumption on the distortion function.
\end{remark}
\section{main results}\label{sec: main results}
In this section, we present our capacity-distortion results. The following two theorems are the main results of this paper, which are capacity-distortion results for the average distortion criterion and maximal distortion criterion, respectively. We then consider three special cases of the model including discrete memoryless case, mixed states and channels case, and rate-limited cases, and provide the capacity results in Sections \ref{sec: memoryless}, \ref{sec: mixed case}, and \ref{sec: rate-limited csi}, respectively.
\begin{theorem}\label{the: average distortion capacity}
  The $a-$capacity of action-dependent integrated sensing and communication channel under average distortion constraint and uniformly bounded distortion function is
  \begin{align*}
    C_{a}(D) &= \sup_{\mathcal{P}_D} \ubar{I}(\boldsymbol{A},\boldsymbol{U};\boldsymbol{Y},\boldsymbol{S}_d) - \bar{I}(\boldsymbol{U};\boldsymbol{S}_e|\boldsymbol{A})\\
    &=\sup_{\mathcal{P}_D} \ubar{I}(\boldsymbol{U};\boldsymbol{Y},\boldsymbol{S}_d) - \bar{I}(\boldsymbol{U};\boldsymbol{S}_e|\boldsymbol{A})
  \end{align*}
  where $\mathcal{P}_D$ is the set of random processes in which each collection of random variables $(A^n,U^n,S^n_e,S^n,S^n_d,X^n,Y^n,Z^n)$ is distributed as $P_{A^n}P_{S^n_eS^nS^n_d|A^n}P_{U^n|A^nS^n_e}P_{X^n|U^nS^n_e}P_{Y^nZ^n|X^nS^n}$ and $\limsup_{n\to\infty}\frac{1}{n}\mathbb{E}\left[ d_n(S^n,g_n(X^n,A^n,S^n_e,Z^n)) \right] \leq D$.
\end{theorem}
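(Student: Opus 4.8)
The plan is to prove Theorem~\ref{the: average distortion capacity} in three pieces: a random-binning achievability argument, an information-spectrum converse, and a short reconciliation of the two functionals. Throughout I read the model as a general (non-stationary, non-ergodic) Gel'fand--Pinsker channel with encoder side information $\boldsymbol{S}_e$, decoder observation $(\boldsymbol{Y},\boldsymbol{S}_d)$, and an auxiliary action $\boldsymbol{A}$, augmented by the sensing constraint; the communication skeleton follows the general formula of \cite{tan2014formula}, while the genuinely new ingredient is coupling the distortion requirement to the code-induced law. Write $\Phi(P):=\ubar{I}(\boldsymbol{A},\boldsymbol{U};\boldsymbol{Y},\boldsymbol{S}_d)-\bar{I}(\boldsymbol{U};\boldsymbol{S}_e|\boldsymbol{A})$ and $\Psi(P):=\ubar{I}(\boldsymbol{U};\boldsymbol{Y},\boldsymbol{S}_d)-\bar{I}(\boldsymbol{U};\boldsymbol{S}_e|\boldsymbol{A})$ for the two objectives; I will show $C_a(D)=\sup_{\mathcal{P}_D}\Phi$ and separately $\sup_{\mathcal{P}_D}\Phi=\sup_{\mathcal{P}_D}\Psi$.

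For achievability I would fix $P\in\mathcal{P}_D$ with a reconstruction $g_n$ attaining $\limsup_n\frac{1}{n}\mathbb{E}_P[d_n]\le D$ and use superposition binning. Generate action codewords $a^n(m)\sim P_{A^n}$, one per message, and for each $m$ a bin of $\approx e^{nR_e}$ auxiliary codewords $u^n(m,j)\sim P_{U^n|A^n}(\cdot\,|a^n(m))$. Seeing $s^n_e$, the encoder selects a $j$ for which $\frac{1}{n}\log\frac{P_{U^n|A^nS^n_e}}{P_{U^n|A^n}}$ falls below $\bar{I}(\boldsymbol{U};\boldsymbol{S}_e|\boldsymbol{A})+\gamma$, draws $x^n\sim P_{X^n|U^nS^n_e}$, and the channel returns $(y^n,z^n)$; the decoder threshold-decodes $(\hat{m},\hat{j})$ from $(y^n,s^n_d)$ using $\frac{1}{n}\log\frac{P_{Y^nS^n_d|A^nU^n}}{P_{Y^nS^n_d}}$. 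By the general resolvability/soft-covering bound of \cite{koga2013information}, the covering step succeeds with probability tending to $1$ once $R_e>\bar{I}(\boldsymbol{U};\boldsymbol{S}_e|\boldsymbol{A})$, and the Feinstein-type packing bound makes the decoding error vanish whenever $R+R_e<\ubar{I}(\boldsymbol{A},\boldsymbol{U};\boldsymbol{Y},\boldsymbol{S}_d)$; eliminating $R_e$ leaves every $R<\Phi(P)$ admissible.

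The delicate point, which I expect to be the main obstacle, is the distortion. Since the estimator sits at the transmitter it reconstructs from the true $(x^n,a^n(m),s^n_e,z^n)$ and is therefore insensitive to decoding errors; what must be controlled is that the code-induced joint law of $(A^n,S^n_e,U^n,X^n,S^n,Z^n)$ stay close to the target $P$. Conditioned on a successful covering, every downstream variable is drawn from the correct target conditional ($X^n$ from $P_{X^n|U^nS^n_e}$ and $(Y^n,Z^n)$ from $W^n$), so the soft-covering lemma forces the total variation between the induced law and $P$ to vanish as soon as $R_e>\bar{I}(\boldsymbol{U};\boldsymbol{S}_e|\boldsymbol{A})$. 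Here the uniformly bounded hypothesis is essential: it bounds the contribution of both the covering-failure event and the total-variation gap by $D_{\max}\cdot o(1)$, so $\frac{1}{n}\mathbb{E}_{\mathcal{C}}[d_n]\le\frac{1}{n}\mathbb{E}_P[d_n]+D_{\max}\,o(1)$ and hence $\limsup_n\frac{1}{n}\mathbb{E}_{\mathcal{C}}[d_n]\le D$ exactly, not merely $D+\epsilon$. A simultaneous selection (Markov's inequality applied to the codebook-averaged error and the codebook-averaged distortion) then extracts a deterministic code sequence satisfying all three requirements.

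For the converse I would take the identification $U^n=M$. Because $A^n=f_A(M)$ is a function of $M$ and $(S^n_e,S^n,S^n_d)$ is generated from $A^n$ alone, the chain $M-A^n-S^n_e$ holds; thus the factorization defining $\mathcal{P}_D$ is met by the code-induced law, and the code's own $g_n$ certifies $\limsup_n\frac{1}{n}\mathbb{E}[d_n]\le D$, so this law lies in $\mathcal{P}_D$. The same chain gives $\bar{I}(\boldsymbol{M};\boldsymbol{S}_e|\boldsymbol{A})=0$, whence the general Verd\'u--Han converse \cite{koga2013information} applied to the induced channel $M\mapsto(Y^n,S^n_d)$ yields $R\le\ubar{I}(\boldsymbol{M};\boldsymbol{Y},\boldsymbol{S}_d)=\Psi(P)\le\sup_{\mathcal{P}_D}\Psi$. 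Finally, to reconcile $\Phi$ and $\Psi$ I would establish the pointwise inequality $\ubar{I}(\boldsymbol{A},\boldsymbol{U};\boldsymbol{Y},\boldsymbol{S}_d)\ge\ubar{I}(\boldsymbol{U};\boldsymbol{Y},\boldsymbol{S}_d)$ from a Markov estimate on the likelihood ratio $P_{Y^nS^n_d|U^n}/P_{Y^nS^n_d|A^nU^n}$, which has unit mean, giving $\sup\Phi\ge\sup\Psi$; the reverse follows by enlarging the auxiliary to $U'^n=(A^n,U^n)$, which remains in $\mathcal{P}_D$ and satisfies $\Psi(P')=\Phi(P)$, so $\sup\Phi\le\sup\Psi$. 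Combining achievability, converse, and this reconciliation sandwiches $C_a(D)$ between $\sup\Phi$ and $\sup\Psi$ and identifies it with both.
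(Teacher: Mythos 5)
Your communication skeleton coincides with the paper's: covering at rate just above $\bar{I}(\boldsymbol{U};\boldsymbol{S}_e|\boldsymbol{A})$, packing just below $\ubar{I}(\boldsymbol{A},\boldsymbol{U};\boldsymbol{Y},\boldsymbol{S}_d)$, a Verd\'u--Han converse with $U^n=M$, $A^n=f_A(M)$ and the observation that $\bar{I}(\boldsymbol{U};\boldsymbol{S}_e|\boldsymbol{A})=0$ for the induced law, and the reconciliation of the two formulas via the enlarged auxiliary $\boldsymbol{U}'=(\boldsymbol{A},\boldsymbol{U})$ (this is exactly the paper's Remark \ref{rem: equivalence of general capacity expressions}). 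The gap is in the distortion step, which you yourself identify as the crux. You claim that once $R_e>\bar{I}(\boldsymbol{U};\boldsymbol{S}_e|\boldsymbol{A})$, ``the soft-covering lemma forces the total variation between the induced law and $P$ to vanish.'' No such lemma applies to the encoder you specified. The resolvability theory in \cite{koga2013information} yields vanishing total variation when the codeword index is drawn uniformly (or by a likelihood/posterior rule) and one asks about the synthesized output statistics; your encoder instead deterministically selects an index whose likelihood ratio $\frac{1}{n}\log\frac{P_{U^n|A^nS^n_e}}{P_{U^n|A^n}}$ passes a threshold. For i.i.d.\ codewords, the induced conditional law of the selected $U^n$ given $S^n_e=s^n_e$ is then approximately the prior $P_{U^n|A^n}$ restricted to the threshold set, not the target conditional $P_{U^n|A^nS^n_e}$, and the total variation between these need not vanish. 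Since $d_n$ here is a general (not additive) bounded distortion and the estimate depends on the random feedback $Z^n$, controlling $\mathbb{E}[d_n]$ amounts to controlling a bounded functional of the joint law, which is precisely the TV-type control you have not established; classical typical-average arguments are unavailable in this information-spectrum setting. Moreover, ``conditioned on a successful covering'' is an event correlated with $S^n_e$, so even exact downstream conditionals do not give you the unconditional target joint law without a quantitative correction.

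The paper closes this hole by a different mechanism, and this is the idea your proposal is missing: the encoder does not pick an arbitrary covering-feasible codeword but the one \emph{minimizing the conditional expected distortion} $\eta_2(u^n,a^n,s^n_e)$ (the expectation of $d_n$ given $(u^n,a^n,s^n_e)$) among codewords in $\mathcal{B}$, falling back to the global minimizer of $\eta_2$ in the subcodebook if $\mathcal{B}$ is missed. The achieved distortion is then written as $\int_0^{nD_{max}}\Pr\{\eta_2(\text{selected})>\beta\}\,d\beta$; because the selected codeword minimizes $\eta_2$, the event $\{\eta_2>\beta\}$ forces \emph{every} codeword of the subcodebook to fail, and a change of measure on $\mathcal{T}_2$ bounds this probability by $\exp(-|\mathcal{C}(m)|2^{-n(\bar{I}(\boldsymbol{U};\boldsymbol{S}_e|\boldsymbol{A})+\gamma)})$ plus tail probabilities under the target law $P$. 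Integrating over $\beta$ gives $\mathbb{E}_P[\eta_2]+nD_{max}\bigl(\exp(-2^{n\gamma})+\pi_1^{1/2}+\pi_2\bigr)$, i.e.\ $\limsup_n \frac{1}{n}\mathbb{E}[d_n]\le D$ exactly, with no distribution-synthesis claim anywhere; uniform boundedness enters only to truncate the integral at $nD_{max}$ and to absorb the vanishing failure terms. If you wish to salvage your route, you would need to replace the threshold selection by a likelihood encoder and prove an information-spectrum soft-covering lemma giving TV convergence at rates above the spectral sup-mutual information --- a substantial extra lemma that the paper's argument never requires.
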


\begin{theorem}\label{the: maximal distortion capacity}
  The $m-$capacity of action-dependent integrated sensing and communication channel under maximal distortion constraint is
  \begin{align*}
    C_{m}(D) &= \sup_{\bar{\mathcal{P}}_D} \ubar{I}(\boldsymbol{A},\boldsymbol{U};\boldsymbol{Y},\boldsymbol{S}_d) - \bar{I}(\boldsymbol{U};\boldsymbol{S}_e|\boldsymbol{A})\\
    &=\sup_{\bar{\mathcal{P}}_D} \ubar{I}(\boldsymbol{U};\boldsymbol{Y},\boldsymbol{S}_d) - \bar{I}(\boldsymbol{U};\boldsymbol{S}_e|\boldsymbol{A})
  \end{align*}
  where $\bar{\mathcal{P}}_D$ is the set of random processes in which each collection of random variables $(A^n,U^n,S^n_e,S^n,S^n_d,X^n,Y^n,Z^n)$ is distributed as $P_{A^n}P_{S^n_eS^nS^n_d|A^n}P_{U^n|A^nS^n_e}P_{X^n|U^nS^n_e}P_{Y^nZ^n|X^nS^n}$ and $p-\limsup_{n\to\infty}\frac{1}{n}d_n(S^n,g_n(X^n,A^n,S^n_e,Z^n)) \leq D$.
\end{theorem}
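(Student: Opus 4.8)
The plan is to prove the two inclusions $C_m(D) \ge \sup_{\bar{\mathcal P}_D}[\,\cdots]$ (achievability) and $C_m(D) \le \sup_{\bar{\mathcal P}_D}[\,\cdots]$ (converse), and then reconcile the two equivalent forms of the rate expression. The coding scheme and the \emph{rate} bookkeeping will be identical to those establishing Theorem~\ref{the: average distortion capacity}; the only genuinely new ingredient is that the average distortion constraint is replaced by the $p$-$\limsup$ criterion, which, as I explain below, is precisely what allows dropping the uniformly bounded hypothesis. I will therefore concentrate on how the maximal-distortion constraint propagates through the random-coding argument.

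For achievability, fix a process $P^\ast \in \bar{\mathcal P}_D$ together with its reconstruction maps $g_n$. I would use a Gel'fand-Pinsker random binning code: generate a codebook $\{U^n(m,\ell)\}$ with $|\mathcal M_n| = 2^{nR}$ bins, each holding $2^{nR_e'}$ codewords drawn according to $P^\ast_{U^n|A^n}$ with $A^n = f_A(m)$. Given $(m, s^n_e)$ the encoder searches bin $m$ for a $u^n$ jointly typical with $(a^n, s^n_e)$ in the information-spectrum sense, then draws $x^n \sim P^\ast_{X^n|U^nS_e^n}$; the estimator outputs $\hat s^n = g_n(x^n, a^n, s^n_e, z^n)$ and the decoder seeks the unique $(m,\ell)$ with $(a^n(m), u^n(m,\ell))$ jointly typical with $(y^n, s^n_d)$. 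The information-spectrum covering lemma drives the encoding-failure event $\mathcal E_1$ to zero provided $R_e' > \bar I(\boldsymbol U; \boldsymbol S_e \mid \boldsymbol A)$, and the packing lemma makes the decoding error vanish provided $R + R_e' < \ubar{I}(\boldsymbol A, \boldsymbol U; \boldsymbol Y, \boldsymbol S_d)$; eliminating $R_e'$ yields every rate below $\ubar{I}(\boldsymbol A, \boldsymbol U; \boldsymbol Y, \boldsymbol S_d) - \bar I(\boldsymbol U; \boldsymbol S_e \mid \boldsymbol A)$. This is exactly the accounting of Theorem~\ref{the: average distortion capacity}.

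The delicate step, and the one I expect to be the main obstacle, is transferring the distortion. Since $\hat S^n$ depends only on the encoder and the channel (not on the decoder), I only need to work on the covering-success event $\mathcal E_1^c$, on which a conditional soft-covering / resolvability argument shows that the law of $(S^n, \hat S^n)$ induced by the code is close in total variation to its law under $P^\ast$; the rate condition $R_e' > \bar I(\boldsymbol U; \boldsymbol S_e \mid \boldsymbol A)$ is exactly what sends this total variation to zero. Consequently, for every $\delta > 0$,
\begin{align*}
\Pr\!\Big\{\tfrac1n d_n(S^n,\hat S^n) > D+\delta\Big\} \le \Pr(\mathcal E_1) + \Pr_{P^\ast}\!\Big\{\tfrac1n d_n > D+\delta\Big\} + \big\|Q_{\mathrm{code}}(\cdot\mid \mathcal E_1^c) - P^\ast\big\|_{\mathrm{TV}},
\end{align*}
and all three terms vanish as $n \to \infty$, the middle one because $P^\ast \in \bar{\mathcal P}_D$ means $p$-$\limsup \tfrac1n d_n \le D$. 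This gives $p$-$\limsup \tfrac1n d_n \le D$ for the code itself. Crucially, unlike the average-distortion analysis, no contribution from $\mathcal E_1$ must be integrated against $d_n$, so the uniformly bounded hypothesis is unnecessary here; this is the structural reason Theorem~\ref{the: maximal distortion capacity} dispenses with it, and making the total-variation closeness rigorous in the fully nonstationary, nonergodic regime is the technical heart of the argument.

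For the converse, given any $m$-achievable $(R,D)$ and codes attaining it, I would set the auxiliary $U^n := M$ and let $A^n = f_A(M)$, $X^n = f(M, S_e^n)$, $\hat S^n = g_n(X^n, A^n, S_e^n, Z^n)$; this is a legitimate process in $\bar{\mathcal P}_D$, since the required factorization holds by construction and the constraint $p$-$\limsup \tfrac1n d_n \le D$ holds \emph{by definition} of $m$-achievability. Applying the Verd\'u--Han converse to the effective channel $M \to (Y^n, S_d^n)$ gives $R \le \ubar{I}(M; \boldsymbol Y, \boldsymbol S_d) = \ubar{I}(\boldsymbol U; \boldsymbol Y, \boldsymbol S_d)$, while $M \perp S_e^n \mid A^n$ forces $\bar I(\boldsymbol U; \boldsymbol S_e \mid \boldsymbol A) = 0$; hence $R \le \ubar{I}(\boldsymbol U; \boldsymbol Y, \boldsymbol S_d) - \bar I(\boldsymbol U; \boldsymbol S_e \mid \boldsymbol A) \le \sup_{\bar{\mathcal P}_D}[\,\cdots]$, the second form. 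Finally, I would reconcile the two forms by noting that replacing any feasible $\boldsymbol U$ with the augmented auxiliary $(\boldsymbol A, \boldsymbol U)$ raises $\ubar{I}(\boldsymbol U; \boldsymbol Y, \boldsymbol S_d)$ to $\ubar{I}(\boldsymbol A, \boldsymbol U; \boldsymbol Y, \boldsymbol S_d)$ while leaving the penalty $\bar I(\cdot\,; \boldsymbol S_e \mid \boldsymbol A)$ and membership in $\bar{\mathcal P}_D$ unchanged, so the two suprema coincide.
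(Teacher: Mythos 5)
Your converse (taking $U^n:=M$, $A^n=f_A(M)$, $X^n=f(M,S_e^n)$, noting $M\perp S_e^n\mid A^n$ forces $\bar{I}(\boldsymbol{U};\boldsymbol{S}_e|\boldsymbol{A})=0$, then applying Verd\'u--Han), your reconciliation of the two suprema via the augmented auxiliary $(\boldsymbol{A},\boldsymbol{U})$, and your rate bookkeeping in the achievability all coincide with the paper's proof, which indeed reuses the Theorem~\ref{the: average distortion capacity} machinery; your observation that the $p$-$\limsup$ criterion is what lets the uniformly bounded hypothesis be dropped is also exactly right. The genuine gap is the distortion-transfer step. For the encoder you actually describe --- deterministic selection of a spectrum-typical codeword within the bin --- the induced sequence-level law of the selected $(U^n,S_e^n)$ is (approximately) the restriction of $P_{U^n|A^n}$ to a typicality event, renormalized; the likelihood ratio between this and the target conditional $P_{U^n|A^nS_e^n}$ fluctuates by factors of order $2^{\pm n\gamma}$ on that event, so the total-variation distance to $P^\ast$ does \emph{not} vanish for fixed $\gamma$, and your key inequality collapses. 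Repairing it would require switching to a likelihood/posterior-matching encoder and proving a \emph{conditional soft-covering lemma for general nonstationary, nonergodic processes} at rate $\bar{I}(\boldsymbol{U};\boldsymbol{S}_e|\boldsymbol{A})$; you explicitly defer this (``the technical heart''), but no such lemma is available off the shelf in the information-spectrum setting, so the proposal is incomplete precisely at its load-bearing step.

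The paper avoids distribution synthesis altogether, and the fix is instructive: it folds the distortion constraint into the codeword-selection rule rather than into the induced law. Concretely, it defines the distortion-typical set
\begin{align*}
  \mathcal{T}_2 =\left\{ (s^n,a^n,s^n_e,x^n,z^n): \tfrac{1}{n}d_n(s^n,g(a^n,x^n,s^n_e,z^n)) \leq \bar{D}(\boldsymbol{S},\hat{\boldsymbol{S}}) + \gamma  \right\},
\end{align*}
lets $\eta_2(u^n,a^n,s^n_e)$ be the conditional probability of falling outside $\mathcal{T}_2$ given the codeword, and applies Markov's inequality to obtain $\mathcal{B}_2=\{\eta_2\leq \pi_2^{1/2}\}$ with $\Pr\{(A^n,U^n,S_e^n)\notin\mathcal{B}_2\}\leq\pi_2^{1/2}$. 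The encoder is then required to pick a codeword in $\mathcal{B}_1\cap\mathcal{B}_2$, and the very covering computation you already invoke for $\mathcal{E}_1$ (at $R'=\bar{I}(\boldsymbol{U};\boldsymbol{S}_e|\boldsymbol{A})+2\gamma$, intersected with the encoder-side spectrum set $\mathcal{T}_3$) shows such a codeword exists with probability tending to one; on that event the excess-distortion probability is at most $\pi_2^{1/2}$ by definition of $\mathcal{B}_2$, yielding $p$-$\limsup_{n\to\infty}\frac{1}{n}d_n \leq \bar{D}(\boldsymbol{S},\hat{\boldsymbol{S}})+\gamma\leq D+\gamma$ directly, with no total-variation claim anywhere. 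I recommend replacing your resolvability step with this expurgation-style selection argument; it is elementary, stays entirely within the information-spectrum toolkit, and is the only substantive point where your proposal diverges from (and falls short of) the paper's proof.
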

\begin{remark}\label{rem: equivalence of general capacity expressions}
  Here we prove the equivalent expressions in Theorem \ref{the: average distortion capacity} and Theorem \ref{the: maximal distortion capacity}. It is obvious that
  \begin{align*}
    \ubar{I}(\boldsymbol{A},\boldsymbol{U};\boldsymbol{Y},\boldsymbol{S}_d) - \bar{I}(\boldsymbol{U};\boldsymbol{S}_e|\boldsymbol{A})\geq \ubar{I}(\boldsymbol{U};\boldsymbol{Y},\boldsymbol{S}_d) - \bar{I}(\boldsymbol{U};\boldsymbol{S}_e|\boldsymbol{A})
  \end{align*}
for both distortion constraints. For the opposite direction, note that
\begin{align}
  \ubar{I}(\boldsymbol{A},\boldsymbol{U};\boldsymbol{Y},\boldsymbol{S}_d) - \bar{I}(\boldsymbol{U};\boldsymbol{S}_e|\boldsymbol{A}) &= \ubar{I}(\boldsymbol{A},\boldsymbol{U};\boldsymbol{Y},\boldsymbol{S}_d) - \bar{I}(\boldsymbol{A},\boldsymbol{U};\boldsymbol{S}_e|\boldsymbol{A})\notag \\
  \label{eq: equivalent expression general form}&=\ubar{I}(\boldsymbol{U}';\boldsymbol{Y},\boldsymbol{S}_d) - \bar{I}(\boldsymbol{U}';\boldsymbol{S}_e|\boldsymbol{A})
\end{align}
where $\boldsymbol{U}'=(\boldsymbol{U},\boldsymbol{A})$. Let $(\boldsymbol{A},\boldsymbol{U}',\boldsymbol{S}_e,\boldsymbol{S},\boldsymbol{S}_d,\boldsymbol{X},\boldsymbol{Y},\boldsymbol{Z})$ be the random processes as defined in \eqref{eq: equivalent expression general form}. It is obvious that the conditional relations of these random processes remain unchanged and the distortion value is the same since the conditional probability of $X^n$ is $P_{X^n|U^{'n}S^n_e}=P_{X^n|U^nS^n_e}$ in this case because of the Markov chain relation $A^n-(U^n,S^n_e)-X^n$ defined by the joint distributions in Theorems \ref{the: average distortion capacity} and \ref{the: maximal distortion capacity}. Hence, we have $(\boldsymbol{A},\boldsymbol{U}',\boldsymbol{S}_e,\boldsymbol{S},\boldsymbol{S}_d,\boldsymbol{X},\boldsymbol{Y},\boldsymbol{Z})\in\mathcal{P}_D(\text{or}\; \bar{\mathcal{P}}_D)$ for average distortion case (or maximal distortion case). This completes the proof.
\end{remark}
\begin{remark}
  Our results cover existing results for different models. If we remove the distortion constraints in Theorems \ref{the: average distortion capacity} and \ref{the: maximal distortion capacity} and fix the action to some constants, capacity results in both theorems reduce to the capacity of Gel'fand-Pinsker channel with imperfect side information at encoder and decoder\cite[Remark 4]{tan2014formula}. Further setting $\boldsymbol{S}_d=\emptyset$ and $\boldsymbol{S}_e = \boldsymbol{S}$, we recover the general capacity formula of Gel'fand-Pinsker channel\cite[Theorem 1]{tan2014formula},
  \begin{align*}
    C = \sup_{\boldsymbol{U}-(\boldsymbol{X},\boldsymbol{S})-\boldsymbol{Y}}\ubar{I}(\boldsymbol{U};\boldsymbol{Y})-\bar{I}(\boldsymbol{U};\boldsymbol{S}).
  \end{align*}
  For the case that the channel is not state-dependent, by setting $\boldsymbol{A}=\boldsymbol{S}=\boldsymbol{S}_e=\boldsymbol{S}_d=\emptyset$, we recover the general formula of the point-to-point channel\cite{verdu1994general},
  \begin{align*}
    C=\sup_{\boldsymbol{X}} \ubar{I}(\boldsymbol{X};\boldsymbol{Y}).
  \end{align*}
  We further prove in Corollary \ref{coro: discrete memoryless capacity} that our results can also reduce to capacity-distortion tradeoff formula for the discrete memoryless case.
\end{remark}
\begin{remark}
For a given distortion constraint $D$, maximal distortion is a more stringent condition than average distortion. It is proved in \cite{koga2013information} that for lossy source coding with a given distortion constraint, average distortion achieves a lower compression rate. One can always construct a lossy coding code for the average distortion constraint given a code for the maximal distortion constraint. Intuitively, the same result holds for our capacity-distortion problem. That is, given distortion constraint $D$, the capacity under the average constraint should be larger than that under the maximal distortion constraint. However, in general, we cannot construct an average distortion code given a maximal distortion code.

In the lossy compression problem, the encoder has the codebook and observes the source sequence. As shown in \cite[Chapter 5.4]{koga2013information}, the encoder compresses the source sequence with the help of a reference codeword. It uses a codeword in the codebook if it achieves the lowest distortion. Otherwise, the reference codeword is used. This coding scheme does not apply to our problem directly because the state estimator cannot observe the state source sequence, and the reconstruction relies on the channel feedback $Z^n$, which brings uncertainty to the reconstruction process. Even if there exists an appropriate reference codeword, the state estimator does not know if the reconstructed sequence is good enough to output or if it should use the reference codeword.

If we make one more (impractical) assumption that the state estimator is informed of the distortion value, we can construct an average distortion code given a maximal distortion code by modifying the reconstruction function and following the same steps in the proof of \cite[Theorem 5.3.1]{koga2013information}.
\end{remark}
In the next subsection, we prove that our results can be reduced to stationary and memoryless cases.
\subsection{Discrete Memoryless Case}\label{sec: memoryless}
This section considers the case that the states and channels are memoryless. For simplicity, we assume $S^n_d=\emptyset$.
Suppose the distortion function $d_n$ is additive such that
\begin{align}
  \label{def: additive distortion function}\frac{1}{n}d_n(S^n,\hat{S}^n) = \frac{1}{n}\sum_{i=1}^n d(S_i,\hat{S}_i) \leq D_{max},
\end{align}
for some function $d:\mathcal{S}\times\hat{\mathcal{S}}\to [0,+\infty)$.
Suppose the channel states are generated in a stationary and discrete memoryless way and the channel is a stationary and discrete memoryless channel, i.e.,
\begin{align}
  \label{eq: discrete state condition}P_{S^n_eS^nS^n_d|A^n}(s^n_e,s^n,s^n_d|a^n) = \prod_{i=1}^{n} P_{S_{e}SS_{d}|A}(s_{e,i},s_i,s_{d,i}|a_i),\\
  \label{eq: discrete channel condition}P_{Y^nZ^n|X^nS^n}(y^n,z^n|x^n,s^n) = \prod_{i=1}^{n} P_{YZ|XS}(y_i,z_i|x_i,s_i).
\end{align}
In this case, similar to \cite[Lemma 1]{ahmadipour2022information}, the best estimator is 
\begin{align*}
  g(a^n,x^n,s^n_e,z^n) = (g^*(a_1,x_1,s_{e,1},z_1),g^*(a_2,x_2,s_{e,2},z_2),\dots,g^*(a_n,x_n,s_{e,n},z_n)),
\end{align*}
where
\begin{align}
  \label{def: memoryless best state estimator}g^*(a,x,s_e,z) := \mathop{\arg\min}_{\hat{s}} \sum_{s\in\mathcal{S}}P_{S|AXS_eZ}(s|a,x,s_e,z)d(s,\hat{s})
\end{align}
and
\begin{align*}
  P_{S|AXS_eZ}(s|a,x,s_e,z) &= \frac{P_{ZS_eS|XA}(z,s_e,s|x,a)}{\sum_{s'}P_{ZS_eS|XA}(z,s_e,s'|x,a)}\\
  &=\frac{P_{S_eS|A}(s_e,s|a)P_{Z|XS}(z|x,s)}{\sum_{s'}P_{S_eS|A}(s_e,s'|a)P_{Z|XS}(z|x,s')}
\end{align*}
only depending on $P_{S_eS|A}$ and $P_{Z|XS}.$ Now given $P_{S_eS|A}$ and $P_{YZ|XS}$, define
\begin{align*}
  &\mathcal{P}_{M,D} = \left\{(P_A,P_{U|AS_e}P_{X|US_e})| P_{AS_eUX}(a,s_e,u,x)=P_A(a)P_{S_e|A}(s_e|a)P_{U|AS_e}(u|a,s_e)P_{X|US_e}(x|u,s_e),\right.\\
  &\left. \quad\quad\quad\quad\quad\quad\quad\quad \sum_{a,s_e,u,x}P_{AS_eUX}(a,s_e,u,x)\mathbb{E}[d(S,g^*(A,X,S_e,Z))|a,x,s_e]\leq D\right\},
\end{align*}
where `M' stands for memoryless channel and `D' stands for discrete channel.

For the case that the states and channels are memoryless but nonstationary, the best estimator is a set of symbolwise estimators $\{g^*_i\}_{i=1}^{\infty}$ such that
\begin{align}
  \label{def: memoryless and nonstationary best state estimator}g^*_i(a,x,s_e,z) := \mathop{\arg\min}_{\hat{s}} \sum_{s\in\mathcal{S}}P_{S_i|A_iX_iS_{e,i}Z_i}(s|a,x,s_e,z)d(s,\hat{s})
  \end{align}
  and
  \begin{align*}
    P_{S_i|A_iX_iS_{e,i}Z_i}(s|a,x,s_e,z) &= \frac{P_{Z_iS_{e,i}S_i|X_iA_i}(z,s_e,s|x,a)}{\sum_{s'}P_{Z_iS_{e,i}S_i|X_iA_i}(z,s_e,s'|x,a)}\\
    &=\frac{P_{S_{e,i}S_i|A_i}(s_e,s|a)P_{Z_i|X_iS_i}(z|x,s)}{\sum_{s'}P_{S_{e,i}S_i|A_i}(s_e,s'|a)P_{Z_i|X_iS_i}(z|x,s')}
  \end{align*}
  only depending on $P_{S_{e,i}S_i|A_i}$ and $P_{Z_i|X_iS_i}.$ Further define 
  \begin{align*}
    \mathcal{P}^n_{NM,D} &= \left\{(P_{A^n},P_{U^n|A^nS^n_e}P_{X^n|U^nS^n_e})\bigg| P_{A^n}(a^n)=\prod_{i=1}^{n}P_{A_i}(a_i), \right.\\
    &\quad\quad\quad\quad\quad\quad\quad\quad\quad\quad P_{U^n|A^nS^n_e}(u^n|a^n,s^n_e)P_{X^n|U^nS^n_e}(x^n|u^n,s^n_e)=\prod_{i=1}^{n} P_{U_i|A_iS_{e,i}}(u_i|a_i,s_{e,i})P_{X_i|U_iS_{e,i}}(x_i|u_i,s_{e,i}), \\
    &\quad\quad\quad\quad\quad\quad\quad\quad\quad\quad \left.\frac{1}{n}\sum_{i=1}^n\mathbb{E}\left[ d(S_i,g^*_i(A_i,X_i,S_{e,i},Z_i)) \right]\leq D\right\},
  \end{align*}
  where $`NM'$ in the subscript stands for the nonstationary and memoryless property.
\begin{corollary}\label{coro: discrete memoryless capacity}
  Let $\mathcal{A},\mathcal{X},\mathcal{S}_e,\mathcal{S},\mathcal{Y},\mathcal{Z}$ be finite sets. The state distribution $P_{S^e_eS^nS^n_d|A^n}$ and the channel $P_{Y^nZ^n|X^nS^n}$ are stationary and memoryless.
The capacity of stationary and memoryless action-dependent ISAC model is
  \begin{align}
    \label{eq: stationary and memoryless capacity 1}\bar{C}(D) = C(D) &= \max_{(P_A,P_{U|AS_e}P_{X|US_e})\in\mathcal{P}_{M,D}}I(A,U;Y) - I(U;S_e|A)\\
    \label{eq: stationary and memoryless capacity 2}&\overset{(a)}{=} \max_{(P_A,P_{U|AS_e}P_{X|US_e})\in\mathcal{P}_{M,D}}I(U;Y) - I(U;S_e|A)
  \end{align}
  If the states and the channel are nonstationary and memoryless, and either one of the two limits
  \begin{align*}
    \lim_{n\to\infty}\frac{1}{n}\sum_{i=1}^n I(U_i,A_i;Y_i),\;\;\;\; \lim_{n\to\infty}\frac{1}{n}\sum_{i=1}^n I(S_{e,i};U_i|A_i)
  \end{align*}
  exists, the capacity of the memoryless but nonstationary action-dependent ISAC model is
  \begin{align*}
    \liminf_{n\to\infty}\max_{(P_{A^n},P_{U^n|A^nS^n_e}P_{X^n|U^nS^n_e})\in\mathcal{P}^n_{NM,D}}\frac{1}{n}\sum_{i=1}^n I(U_i,A_i;Y_i)-\frac{1}{n}\sum_{i=1}^n I(S_{e,i};U_i|A_i)
  \end{align*}
\end{corollary}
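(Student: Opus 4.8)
The plan is to evaluate the general sup-formulas of Theorems~\ref{the: average distortion capacity} and~\ref{the: maximal distortion capacity} (with $\boldsymbol{S}_d=\emptyset$, so that $(\boldsymbol{Y},\boldsymbol{S}_d)$ collapses to $\boldsymbol{Y}$) under the product laws \eqref{eq: discrete state condition}--\eqref{eq: discrete channel condition}, the additive distortion \eqref{def: additive distortion function}, and finiteness of all alphabets. The argument has three ingredients: a reduction of the estimator, an achievability direction giving ``single-letter $\le$ capacity'', and a converse giving ``capacity $\le$ single-letter''. First I would fix the estimator: since the distortion is additive and, by \eqref{eq: discrete state condition}--\eqref{eq: discrete channel condition}, the posterior $P_{S^n|A^nX^nS_e^nZ^n}$ factorizes into the symbol-wise posteriors $P_{S_i|A_iX_iS_{e,i}Z_i}$, the estimator minimizing the expected distortion decomposes symbol by symbol and coincides with the $g^*$ of \eqref{def: memoryless best state estimator} (resp.\ the $g^*_i$ of \eqref{def: memoryless and nonstationary best state estimator}), as in \cite[Lemma~1]{ahmadipour2022information}. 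This lets me replace the abstract distortion conditions in $\mathcal{P}_D,\bar{\mathcal{P}}_D$ by the single-letter constraints defining $\mathcal{P}_{M,D}$ and $\mathcal{P}^n_{NM,D}$.

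For achievability I would restrict the supremum to the i.i.d.\ process generated by a single-letter triple $(P_A,P_{U|AS_e},P_{X|US_e})\in\mathcal{P}_{M,D}$. For such product distributions each normalized information density is an empirical average of i.i.d.\ bounded terms, so the weak law of large numbers collapses the spectral rates to their single-letter means, $\ubar{I}(\boldsymbol{A},\boldsymbol{U};\boldsymbol{Y})=I(A,U;Y)$ and $\bar{I}(\boldsymbol{U};\boldsymbol{S}_e|\boldsymbol{A})=I(U;S_e|A)$, while the same concentration forces $\frac{1}{n}d_n$ to its mean so that the maximal and average distortion constraints both reduce to the $\mathcal{P}_{M,D}$ constraint. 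This gives $\bar{C}(D)\ge\max_{\mathcal{P}_{M,D}}[I(A,U;Y)-I(U;S_e|A)]$; since the maximal criterion is the more stringent one we always have $\bar{C}(D)\le C(D)$, so it remains only to establish the matching converse for $C(D)$ to obtain $\bar{C}(D)=C(D)$ and the value in \eqref{eq: stationary and memoryless capacity 1}.

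For the converse I would take an arbitrary feasible process. Finiteness of the alphabets makes every normalized density bounded, whence $\ubar{I}(\boldsymbol{A},\boldsymbol{U};\boldsymbol{Y})\le\liminf_n\frac{1}{n}I(A^nU^n;Y^n)$ and $\bar{I}(\boldsymbol{U};\boldsymbol{S}_e|\boldsymbol{A})\ge\limsup_n\frac{1}{n}I(U^n;S_e^n|A^n)$, so that $\ubar{I}(\boldsymbol{A},\boldsymbol{U};\boldsymbol{Y})-\bar{I}(\boldsymbol{U};\boldsymbol{S}_e|\boldsymbol{A})\le\liminf_n\frac{1}{n}[I(A^nU^n;Y^n)-I(U^n;S_e^n|A^n)]$. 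The heart of the proof is then to single-letterize this block difference: using the memoryless channel and state laws together with a Csisz\'ar-sum-identity manipulation I would bound it by $\frac{1}{n}\sum_i[I(A_i\tilde{U}_i;Y_i)-I(\tilde{U}_i;S_{e,i}|A_i)]$ for a suitably augmented auxiliary $\tilde{U}_i$, introduce a uniform time-sharing variable to rewrite the bound as one single-letter expression, check that the induced per-symbol distortion stays $\le D$, and apply a cardinality bound so that the optimizing auxiliary lies in $\mathcal{P}_{M,D}$. The second equality \eqref{eq: stationary and memoryless capacity 2} then follows from the single-letter version of Remark~\ref{rem: equivalence of general capacity expressions}: relabeling $U'=(U,A)$ and invoking the Markov chain $A-(U,S_e)-X$ leaves $X,Y,Z$ and the distortion unchanged while turning $I(A,U;Y)-I(U;S_e|A)$ into $I(U';Y)-I(U';S_e|A)$.

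I expect this single-letterization to be the main obstacle, because the action-dependence threads $A$ through both mutual informations and the Gel'fand-Pinsker-type difference must be decomposed without violating the Markov structure $A-(U,S_e)-X-(Y,Z)$ imposed by $\mathcal{P}_{M,D}$. For the nonstationary memoryless case the same scheme applies with the i.i.d.\ law of large numbers replaced by its independent-but-not-identically-distributed version, which converts the spectral rates into $\liminf_n\frac{1}{n}\sum_i I(A_iU_i;Y_i)$ and $\limsup_n\frac{1}{n}\sum_i I(U_i;S_{e,i}|A_i)$. The hypothesis that one of these two Ces\`aro limits exists is exactly what lets me turn the difference of a $\liminf$ and a $\limsup$ into the single $\liminf_n\frac{1}{n}\sum_i[I(A_iU_i;Y_i)-I(U_i;S_{e,i}|A_i)]$; the last delicate point is interchanging the outer supremum over processes with the per-block maximum over $\mathcal{P}^n_{NM,D}$ and the $\liminf$, which I would justify by a subsequence/diagonal argument using boundedness of the single-letter terms.
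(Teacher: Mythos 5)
Your proposal is correct and follows essentially the same route as the paper's proof in Appendix~\ref{app: proof of discrete memoryless capacity}: reduction to the symbolwise optimal estimator $g^*$ (the paper packages the accompanying distortion-preservation step as Lemma~\ref{lem: distortion constraint for memoryless case}), achievability by restricting the general formulas of Theorems~\ref{the: average distortion capacity} and~\ref{the: maximal distortion capacity} to i.i.d.\ input processes and invoking Chebyshev's law of large numbers, a converse via \cite[Theorem 3.5.2]{koga2013information} followed by exactly the Csisz\'ar-sum-identity and time-sharing single-letterization you describe (the paper additionally carries a $D+\gamma$ slack in the per-symbol distortion constraint and lets $\gamma\to 0$ by continuity, a detail you gloss but that fits your plan), the same $U'=(U,A)$ relabeling for equality $(a)$, and the independent-but-not-identically-distributed law of large numbers with the limit-existence hypothesis for the nonstationary case. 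The only departures are cosmetic: you run the converse for the average criterion and bridge via $\bar{C}(D)\le C(D)$ where the paper treats the maximal case in full and notes the average case is analogous, and your cardinality-bound step is unnecessary since $\mathcal{P}_{M,D}$ imposes no alphabet constraint on $U$.
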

The proof is provided in Appendix \ref{app: proof of discrete memoryless capacity}.
\begin{remark}
  Here we prove that similar to \cite{weissman2010capacity}, the equality in $(a)$ still holds even with the distortion constraint. First note that for any distribution $(P_A,P_{UX|AS_e})$, $I(A,U;Y) - I(U;S_e|A) \geq I(U;Y) - I(U;S_e|A)$ always holds. For another direction, let $U^*,A^*$ be random variables achieving the maximum in \eqref{eq: stationary and memoryless capacity 1}. It follows that
  \begin{align*}
    I(A^*,U^*;Y) - I(U^*;S_e|A^*) = I(A^*,U^*;Y) - I(A^*,U^*;S_e|A^*) = I(U';Y) - I(U';S_e|A^*),
  \end{align*}
  where $U' = (A^*,U^*)$. Now the new distribution $(P_A,P_{U'X|AS_e})$ satisfies
  \begin{align*}
    &\sum_{a,s_e,u',x}P_A(a)P_{S_e|A}(s_e|a)P_{U'X|AS_e}(u',x|a,s_e)\mathbb{E}[d(S,g^*(A,X,S_e,Z))|a,x,s_e]\\
    &=\sum_{a,s_e,u,\widetilde{a},x}P_A(a)P_{S_e|A}(s_e|a)P_{U^*\widetilde{A}X|AS_e}((u,\widetilde{a}),x|a,s_e)\mathbb{E}[d(S,g^*(A,X,S_e,Z))|a,x,s_e]\\
    &=\sum_{a,s_e,u,x}P_A(a)P_{S_e|A}(s_e|a)P_{U^*X|AS_e}(u,x|a,s_e)\mathbb{E}[d(S,g^*(A,X,S_e,Z))|a,x,s_e]\leq D.
  \end{align*}
  This completes the proof.
\end{remark}
\begin{remark}
  Note that compared to the capacity results for nonstationary and memoryless channels in \cite[Eq (3.2.33)]{koga2013information}\cite[Corollary 3]{tan2014formula}, our capacity result takes the optimization outside the sum operation. This is because we have an overall distortion constraint $\limsup_{n\to\infty}\frac{1}{n}\sum_{i=1}^n \mathbb{E}\left[ d(S_i,\hat{S}_i) \right]\leq D$, and the overall distortion constraint does not imply each of its components satisfies $\mathbb{E}\left[ d(S_i,\hat{S}_i) \right]\leq D$.
\end{remark}
\subsection{Mixed Channel and States Case}\label{sec: mixed case}
In this section, we consider the action-dependent ISAC model where state and channel are mixed. The state and channel are described by
\begin{align}
  \label{def: mixed state distribution}P_{S^n_eS^nS^n_d|A^n} = \alpha_1 P_{S^n_{e,1}S^n_1S^n_{d,1}|A^n} + \alpha_2P_{S^n_{e,2}S^n_2S^n_{d,2}|A^n},\\
  \label{def: mixed channel distribution}P_{Y^nZ^n|X^nS^n} = \beta_1 P_{Y^n_1Z^n_1|X^nS^n} + \beta_2 P_{Y^n_2Z^n_2|X^nS^n},
\end{align}
where $\alpha_1 + \alpha_2=1,\beta_1+\beta_2=1$. Note here the subscripts $1$ and $2$ represent two different conditional distributions generating the states. For the states, they are still generated by the same action sequence and for the channels they have the same input sequences.

\subsubsection{Mixed Case for Maximal Distortion}

\begin{theorem}\label{the: mixed maximal distortion}
  The capacity of the mixed action-dependent ISAC model is
\begin{align}
  \label{def: capacity of mixed maximal distortion}C_{m}(D) = \sup\min_{i\in\{1,2\},j\in\{1,2\}}\ubar{I}(\boldsymbol{A},\boldsymbol{U}_i;\boldsymbol{S}_{d,i},\boldsymbol{Y}_{i,j}) - \max_{i\in\{1,2\}}\bar{I}(\boldsymbol{U}_i;\boldsymbol{S}_{e,i}|\boldsymbol{A}).
\end{align}
where the sup is taken over random processes $(\boldsymbol{A},\boldsymbol{U},\boldsymbol{X},\boldsymbol{S}_{e},\boldsymbol{S},\boldsymbol{S}_{d},\boldsymbol{Y},\boldsymbol{Z})=\{(A^n,U^n,X^n,S^n_e,S^n,S^n_d,Y^n,Z^n,\hat{S}^n)\}_{n=1}^{\infty}$ with state distribution satisfying \eqref{def: mixed state distribution} and channel distribution satisfying \eqref{def: mixed channel distribution} and $\bar{D}(\boldsymbol{S},\hat{\boldsymbol{S}})\leq D$. 

If the channel states and the channel are stationary and memoryless and the distortion function $d_n$ is additive, we have the single-letter achievable rate
\begin{align}
  \label{def: rate of mixed memoryless}C_{m}(D) \geq \max_{\substack{P_A,P_{U|AS_e}P_{X|US_e}:\\ \mathbb{E}\left[ d(S,\hat{S}) \right] \leq D}} \min_{i\in\{1,2\},j\in\{1,2\}}I(A,U_i;S_{d,i},Y_{i,j}) - \max_{i\in\{1,2\}}I(U_i;S_{e,i}|A).
\end{align}  
where the maximization is over all distribution $P_{AS_eSS_dUXYZ\hat{S}}$ with $P_{S_eSS_d|A}=\alpha_1 P_{S_{e,1}S_1S_{d,1}|A} + \alpha_2P_{S_{e,2}S_2S_{d,2}|A}$ and channel distribution $P_{YZ|XS}=\beta_1 P_{Y_1Z_1|XS} + \beta_2 P_{Y_2Z_2|XS}$ and distortion $\mathbb{E}\left[ d(S,\hat{S}) \right] \leq D$.
\end{theorem}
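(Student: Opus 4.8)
\section{Proof proposal for Theorem \ref{the: mixed maximal distortion}}

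The plan is to derive the mixed-case formula directly from the general maximal-distortion characterization in Theorem \ref{the: maximal distortion capacity} by evaluating its two spectral mutual-information terms under the mixed state and channel laws. The mixed ISAC model specified by \eqref{def: mixed state distribution}--\eqref{def: mixed channel distribution} is itself a legitimate instance of the general action-dependent ISAC model, so Theorem \ref{the: maximal distortion capacity} applies verbatim and yields $C_m(D) = \sup_{\bar{\mathcal{P}}_D} \ubar{I}(\boldsymbol{A},\boldsymbol{U};\boldsymbol{Y},\boldsymbol{S}_d) - \bar{I}(\boldsymbol{U};\boldsymbol{S}_e|\boldsymbol{A})$. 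The whole task then reduces to computing the spectral inf-mutual information rate $\ubar{I}(\boldsymbol{A},\boldsymbol{U};\boldsymbol{Y},\boldsymbol{S}_d)$ and the spectral sup-conditional mutual information rate $\bar{I}(\boldsymbol{U};\boldsymbol{S}_e|\boldsymbol{A})$ for the mixture. Under the generative order $A^n \to (S^n_e,S^n,S^n_d) \to (U^n,X^n) \to (Y^n,Z^n)$, the joint law of $(A^n,U^n,S^n_e,S^n,S^n_d,Y^n,Z^n)$ is the double mixture $\sum_{i,j}\alpha_i\beta_j P^{(i,j)}$, where $i$ indexes the state component and $j$ the channel component.

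For the first term I would invoke the standard information-spectrum lemma for mixed sources (as developed in \cite{koga2013information}): if a distribution decomposes as $P = \sum_k \gamma_k P^{(k)}$, then under each $P^{(k)}$ one has $\frac{1}{n}\log\frac{P^{(k)}(\cdot)}{P(\cdot)} \to 0$ in probability, since $P \geq \gamma_k P^{(k)}$ pins this ratio into $[\gamma_k,1]$ up to a vanishing-rate Markov bound. Applying this to both the numerator and denominator of the information density for $(A^n,U^n)$ versus $(Y^n,S^n_d)$ lets me replace the mixed law by each component law at no cost to the spectral rate, and the p-$\liminf$ of a mixture then equals the minimum over the component p-$\liminf$ terms. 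Since the marginal $(A^n,U^n,Y^n,S^n_d)$ mixes over all four $(i,j)$ pairs ($S_d$ carries the state index $i$, while $Y$ carries both $i$ and $j$), this gives $\ubar{I}(\boldsymbol{A},\boldsymbol{U};\boldsymbol{Y},\boldsymbol{S}_d) = \min_{i,j}\ubar{I}(\boldsymbol{A},\boldsymbol{U}_i;\boldsymbol{S}_{d,i},\boldsymbol{Y}_{i,j})$. The dual argument on the p-$\limsup$ side, where a mixture yields the maximum of the component sup-rates, applied to $(A^n,U^n,S^n_e)$, whose law mixes only over the state index $i$ (the channel index $j$ integrates out because it influences nothing in $(A,U,S_e)$), gives $\bar{I}(\boldsymbol{U};\boldsymbol{S}_e|\boldsymbol{A}) = \max_i \bar{I}(\boldsymbol{U}_i;\boldsymbol{S}_{e,i}|\boldsymbol{A})$. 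Substituting both into the formula of Theorem \ref{the: maximal distortion capacity} and retaining the supremum over feasible processes produces \eqref{def: capacity of mixed maximal distortion}, the $\sup[\min_{i,j} - \max_i]$ form following automatically since the decomposition holds for each fixed feasible process.

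Two constraints must be tracked through the mixture. First, the feasible set $\bar{\mathcal{P}}_D$ is governed by $\bar{D}(\boldsymbol{S},\hat{\boldsymbol{S}}) \leq D$; under the mixture the p-$\limsup$ of the normalized distortion is again the maximum over the mixture components (both the state index and the channel index enter, through $S^n$ and through the feedback $Z^n$ feeding $\hat{S}^n$), so feasibility requires the estimator to meet the constraint on every component simultaneously, which is exactly what the single supremum in \eqref{def: capacity of mixed maximal distortion} encodes. Second, for the single-letter achievability \eqref{def: rate of mixed memoryless} in the stationary memoryless additive-distortion case, I would restrict the supremum to i.i.d. input processes $P_A P_{U|AS_e}P_{X|US_e}$; each component p-$\liminf$ and p-$\limsup$ then collapses to an ordinary single-letter mutual information by the law of large numbers, exactly as in Corollary \ref{coro: discrete memoryless capacity}, while the $\min_{i,j}$ and $\max_i$ structure is preserved and the symbolwise estimator makes $\mathbb{E}[d(S,\hat{S})]\leq D$ the operative constraint. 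Since restricting the optimization domain can only lower the value, this delivers an achievable rate rather than equality.

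The main obstacle will be the rigorous justification of replacing the mixed output and side-information laws by their components inside the spectral densities while respecting that the decoder and estimator only ever observe the mixed quantities. Concretely, I must verify that the vanishing-rate lemma applies jointly to the $(i,j)$-indexed numerator and denominator of the information density without the cross terms between state-mixing and channel-mixing spoiling the clean $\min_{i,j}$ decomposition, and that this same replacement remains consistent with the distortion p-$\limsup$ being controlled componentwise in $\bar{\mathcal{P}}_D$. Managing this bookkeeping carefully, rather than any single hard inequality, is where the delicacy of the argument lies.
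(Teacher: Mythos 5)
Your proposal matches the paper's proof essentially step for step: the paper likewise specializes Theorem \ref{the: maximal distortion capacity} to the mixed law and invokes Han's mixture lemmas (\cite[Lemmas 1.4.2, 3.3.1, 5.10.1]{koga2013information}) to decompose $\ubar{I}(\boldsymbol{A},\boldsymbol{U};\boldsymbol{Y},\boldsymbol{S}_d)$ into the minimum over the four $(i,j)$ components, $\bar{I}(\boldsymbol{U};\boldsymbol{S}_e|\boldsymbol{A})$ into the maximum over the state index $i$, and the distortion p-$\limsup$ into the componentwise maximum, with the "bookkeeping" you flag as delicate being exactly what those cited lemmas dispose of. The single-letter achievability \eqref{def: rate of mixed memoryless} is also handled identically, by restricting to i.i.d. inputs drawn from the optimal single-letter distribution and collapsing the component spectral rates to ordinary mutual informations via the law of large numbers.
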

The proof is given in Appendix \ref{app: proof of mixed case results}.

\subsubsection{Mixed Case for Average Distortion Case}

For the average distortion case, it is more convenient to write the result in distortion-capacity tradeoff form.
\begin{theorem}\label{the: mixed average distortion}
  The distortion-capacity formula in this case is bounded by
\begin{align}
  \label{def: capacity of mixed average distortion} D_{a}(C) \leq \sup \sum_{i=1}^2\sum_{j=1}^2 \alpha_i\beta_j D(\boldsymbol{S}_i,\hat{\boldsymbol{S}}_{i,j}),
\end{align}
where the sup is taken over random processes $(\boldsymbol{A},\boldsymbol{U},\boldsymbol{X},\boldsymbol{S}_{e},\boldsymbol{S},\boldsymbol{S}_{d},\boldsymbol{Y},\boldsymbol{Z})=\{(A^n,U^n,X^n,S^n_e,S^n,S^n_d,Y^n,Z^n,\hat{S}^n)\}_{n=1}^{\infty}$ with state distribution satisfying \eqref{def: mixed state distribution} and channel distribution satisfying \eqref{def: mixed channel distribution} and $$\min_{i\in\{1,2\},j\in\{1,2\}}\ubar{I}(\boldsymbol{A},\boldsymbol{U}_i;\boldsymbol{S}_{d,i},\boldsymbol{Y}_{i,j}) - \max_{i\in\{1,2\}}\bar{I}(\boldsymbol{U}_i;\boldsymbol{S}_{e,i}|\boldsymbol{A})\leq C.$$
\end{theorem}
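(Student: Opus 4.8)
The plan is to obtain \eqref{def: capacity of mixed average distortion} by applying the general average-distortion formula of Theorem~\ref{the: average distortion capacity} to the mixed model and then exploiting two structural facts: under a mixture the achievable communication rate is governed by the \emph{worst} component, whereas the average distortion is the $\alpha_i\beta_j$-weighted \emph{average} of the per-component distortions. First I would regard the mixed state law \eqref{def: mixed state distribution} and mixed channel law \eqref{def: mixed channel distribution} as defining a single general source/channel pair, so that the induced process $(\boldsymbol{A},\boldsymbol{U},\boldsymbol{S}_e,\boldsymbol{S},\boldsymbol{S}_d,\boldsymbol{X},\boldsymbol{Y},\boldsymbol{Z})$ is admissible for Theorem~\ref{the: average distortion capacity}, and read off the rate it supports and the distortion it incurs.

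The first key step is the spectral decomposition of the mixture. Using the convex-combination form of the joint law, I would show that the relevant information density splits across the four indices $(i,j)$: in the $p$-$\liminf$ sense the density of the mixed law is dominated by the smallest per-component density, giving $\ubar{I}(\boldsymbol{A},\boldsymbol{U};\boldsymbol{Y},\boldsymbol{S}_d)=\min_{i,j}\ubar{I}(\boldsymbol{A},\boldsymbol{U}_i;\boldsymbol{S}_{d,i},\boldsymbol{Y}_{i,j})$, while dually the conditional penalty term is governed by the largest, giving $\bar{I}(\boldsymbol{U};\boldsymbol{S}_e|\boldsymbol{A})=\max_i\bar{I}(\boldsymbol{U}_i;\boldsymbol{S}_{e,i}|\boldsymbol{A})$. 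The asymmetry (four-way $\min$ for the rate, two-way $\max$ for the penalty) reflects that $\boldsymbol{Y}$ and $\boldsymbol{S}_d$ feel both mixtures while $\boldsymbol{S}_e$ feels only the state mixture. This is the mixed-source/channel argument of Han--Verd\'u adapted to the conditional spectral quantities appearing here, and it is exactly what produces the $\min$/$\max$ structure of the rate term inside the constraint set; the same computation already underlies Theorem~\ref{the: mixed maximal distortion}.

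The second key step is the distortion decomposition. Since the realized state-channel pair is $(i,j)$ with probability $\alpha_i\beta_j$ and the estimator $g_n$ outputs the reconstruction $\hat{\boldsymbol{S}}_{i,j}$ on that event, linearity of expectation yields $\frac1n\mathbb{E}[d_n(S^n,\hat{S}^n)]=\sum_{i=1}^2\sum_{j=1}^2\alpha_i\beta_j\,\frac1n\mathbb{E}[d_n(S^n_i,\hat S^n_{i,j})]$, so that in the limit $D(\boldsymbol{S},\hat{\boldsymbol{S}})=\sum_{i,j}\alpha_i\beta_j D(\boldsymbol{S}_i,\hat{\boldsymbol{S}}_{i,j})$. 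Substituting both decompositions into Theorem~\ref{the: average distortion capacity} gives a capacity-distortion characterization whose rate term is $\min_{i,j}\ubar{I}-\max_i\bar{I}$ and whose distortion is this weighted sum; inverting that tradeoff—recording, for a communication rate held to $C$, the weighted distortion attained by the processes whose rate term meets the constraint—delivers the one-sided distortion-capacity bound \eqref{def: capacity of mixed average distortion}.

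I expect the main obstacle to lie on the distortion side rather than the spectral side. In the non-mixed proof of Theorem~\ref{the: average distortion capacity} the reconstruction is tied to the reliably decoded message, but under a mixture a single estimator must serve both channel laws at once, and on the decoding-error event of a given component the per-component distortion $\frac1n\mathbb{E}[d_n(S^n_i,\hat S^n_{i,j})]$ need not be negligible. Bounding the contribution of these events to the weighted average, and doing so without the uniformly-bounded-distortion hypothesis that Theorem~\ref{the: average distortion capacity} relies on, is the delicate point; it is precisely this difficulty that forces the result to be stated as the inequality in \eqref{def: capacity of mixed average distortion} rather than as an exact distortion-capacity function, mirroring the earlier remark that one cannot in general build an average-distortion code from a maximal-distortion one.
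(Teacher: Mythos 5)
There is a genuine gap in your distortion step, and it sits exactly where the theorem's one-sidedness comes from. Your per-blocklength identity
\begin{align*}
\frac{1}{n}\mathbb{E}\bigl[d_n(S^n,\hat{S}^n)\bigr]=\sum_{i=1}^{2}\sum_{j=1}^{2}\alpha_i\beta_j\,\frac{1}{n}\mathbb{E}\bigl[d_n(S^n_i,\hat{S}^n_{i,j})\bigr]
\end{align*}
is correct, but your conclusion that ``in the limit $D(\boldsymbol{S},\hat{\boldsymbol{S}})=\sum_{i,j}\alpha_i\beta_j D(\boldsymbol{S}_i,\hat{\boldsymbol{S}}_{i,j})$'' does not follow: $D(\cdot,\cdot)$ is a $\limsup$, and the $\limsup$ of a weighted sum is only bounded \emph{above} by the weighted sum of the $\limsup$s, with strict inequality possible when the four sequences $\frac{1}{n}\mathbb{E}[d_n(S^n_i,\hat{S}^n_{i,j})]$ attain their limit superiors along different subsequences of $n$. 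The paper's proof uses precisely the inequality $D(\boldsymbol{S},\hat{\boldsymbol{S}})\leq\sum_{i,j}\alpha_i\beta_j D(\boldsymbol{S}_i,\hat{\boldsymbol{S}}_{i,j})$ at this point, which is why \eqref{def: capacity of mixed average distortion} is stated as $D_a(C)\leq\sup\sum_{i,j}\alpha_i\beta_j D(\boldsymbol{S}_i,\hat{\boldsymbol{S}}_{i,j})$ and not as an equality. Your asserted equality would yield an exact distortion--capacity characterization that the paper explicitly does not claim.

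Consequently your closing diagnosis misattributes the source of the inequality. The bound is a converse-side statement about the constraint set: the paper simply rewrites the tradeoff as $D_a(C)=\sup D(\boldsymbol{S},\hat{\boldsymbol{S}})$ subject to the rate constraint, applies the mixed-spectrum decompositions (your first key step, which matches the paper's use of \cite[Lemmas 3.3.1 and 5.10.1]{koga2013information} and needs no new argument), and then invokes the subadditivity of $\limsup$ — no coding scheme, decoding-error events, or uniform-boundedness hypothesis enters the proof at all. The obstruction to upgrading the bound to an equality, as the paper notes after the theorem statement, is that the supremum in \eqref{def: capacity of mixed average distortion} couples all four terms $D(\boldsymbol{S}_i,\hat{\boldsymbol{S}}_{i,j})$ through a single underlying process, so one cannot maximize them separately the way Han's mixed-source lossy-compression argument (with one stationary component and subadditive $d_n$) permits; it is not the mixture-versus-single-estimator decoding issue you describe.
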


Similar to the lossy compression case\cite[Theorem 5.10.2]{koga2013information}, the result in \eqref{def: capacity of mixed average distortion} is an upper bound. In \cite[Ch. 5]{koga2013information}, it was proved that for lossy source coding case, the equality holds when at least one of the sources is stationary and the distortion function $d_n$ is subadditive. The technique does not fit the model we consider here since the optimization in \eqref{def: capacity of mixed average distortion} is taken over all the terms jointly. Hence, the condition for the equality holding in \eqref{def: capacity of mixed average distortion} is still an open problem.
\subsection{Rate-limited CSI}\label{sec: rate-limited csi}
In this section, we first consider the case that the imperfect side information $\boldsymbol{S}_d$ is available at the decoder side and the rate-limited version of $\boldsymbol{S}_d$ available at the encoder side. Then, the reversed case with imperfect CSI at the encoder side and rate-limited CSI at the decoder side is considered.

\begin{theorem}\label{the: rate-limited CSI at encoder}
  The capacity of the general state-dependent ISAC model with rated limited CSI at encoder under average distortion constraint is the set of pairs $(R_e,R)$ satisfying
  \begin{align*}
    R_e &\geq \bar{I}(\boldsymbol{V};\boldsymbol{S}_d),\\
    R &\leq \ubar{I}(\boldsymbol{A},\boldsymbol{X};\boldsymbol{Y},\boldsymbol{S}_d|\boldsymbol{V}),
  \end{align*}
  with the underlying random processes $(\boldsymbol{A},\boldsymbol{X},\boldsymbol{V},\boldsymbol{S}_d,\boldsymbol{S},\boldsymbol{Y},\boldsymbol{Z})=\{A^n,X^n,V^n,S^n,S^n_d,Y^n,Z^n\}_{n=1}^{\infty}$ and joint distribution $P_{A^n}P_{S^nS^n_d|A^n}$ $P_{V^n|S^n_d}P_{X^n|A^nV^n}P_{Y^nZ^n|X^nS^n}$ and $\limsup_{n\to\infty}\frac{1}{n}\mathbb{E}\left[ d_n(S^n,g_n(X^n,A^n,V^n,Z^n)) \right] \leq D$.
  The capacity of the general state-dependent ISAC model with rate-limited CSI at encoder under maximal distortion constraint is a set of pairs $(R_e,R)$ satisfying
  \begin{align*}
    R_e &\geq \bar{I}(\boldsymbol{V};\boldsymbol{S}_d),\\
    R&\leq\ubar{I}(\boldsymbol{A},\boldsymbol{X};\boldsymbol{Y},\boldsymbol{S}_d|\boldsymbol{V}),
  \end{align*}
  with the underlying  random processes $(\boldsymbol{A},\boldsymbol{X},\boldsymbol{V},\boldsymbol{S}_d,\boldsymbol{S},\boldsymbol{Y},\boldsymbol{Z})=\{A^n,X^n,V^n,S^n,S^n_d,Y^n,Z^n\}_{n=1}^{\infty}$ and joint distribution $P_{A^n}P_{S^nS^n_d|A^n}$ $P_{V^n|S^n_d}P_{X^n|A^nV^n}P_{Y^nZ^n|X^nS^n}$ and $p-\limsup_{n\to\infty}\frac{1}{n} d_n(S^n,g_n(X^n,A^n,V^n,Z^n)) \leq D$.
\end{theorem}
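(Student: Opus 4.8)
The plan is to prove both the inner (achievability) and outer (converse) bounds by the information-spectrum method, reusing the estimation and distortion machinery already developed for Theorems \ref{the: average distortion capacity} and \ref{the: maximal distortion capacity} but replacing the direct encoder CSI $\boldsymbol{S}_e$ by a rate-limited description $\boldsymbol{V}$ of $\boldsymbol{S}_d$. The architecture is a two-stage scheme: a covering stage that compresses $S^n_d$ into an auxiliary codeword $V^n$ at rate $R_e$, followed by a channel-coding stage in which the message, carried by the pair $(A^n,X^n)$, is conveyed to the decoder observation $(Y^n,S^n_d)$ conditioned on the common codeword $V^n$. Since both the side-information encoder $f_e$ and the decoder have access to $S^n_d$, they agree on the chosen index, so $V^n$ is effectively common to both ends.

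For achievability, first I would generate $\lceil 2^{nR_e}\rceil$ codewords $V^n(\ell)$ drawn i.i.d.\ from the marginal $P_{V^n}$ and, for each message $m$ and each $\ell$, a codeword $X^n(m,\ell)\sim P_{X^n|A^nV^n}(\cdot\,|\,f_A(m),V^n(\ell))$. The side-information encoder selects, by a fixed deterministic rule, an index $\ell$ for which $(V^n(\ell),S^n_d)$ is jointly typical in the spectral sense; a soft-covering/resolvability argument shows such an index exists with probability tending to one whenever $R_e>\bar{I}(\boldsymbol{V};\boldsymbol{S}_d)$, and the decoder reproduces the same $\ell$ from its own $S^n_d$. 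Conditioned on the common $V^n$, a conditional Feinstein/Verd\'u--Han inner bound then yields vanishing message-decoding error whenever $R<\ubar{I}(\boldsymbol{A},\boldsymbol{X};\boldsymbol{Y},\boldsymbol{S}_d\,|\,\boldsymbol{V})$. The estimator $g_n(M,\ell,Z^n)$ and the distortion analysis are inherited from the proofs of Theorems \ref{the: average distortion capacity} and \ref{the: maximal distortion capacity}: on the jointly typical event the code-induced law of $(A^n,S^n,S^n_d,V^n,X^n,Z^n)$ is asymptotically the target law, so the reconstruction meets the maximal constraint, while the uniformly bounded distortion function absorbs the atypical contribution under the average constraint.

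For the converse, I would identify the auxiliary process with the transmitted index by setting $V^n:=f_e(S^n_d)$, which makes the prescribed kernel $P_{V^n|S^n_d}$ deterministic. Because $V^n$ is supported on $\mathcal{M}_{e,n}$, a counting bound on $-\tfrac{1}{n}\log P_{V^n}(V^n)$ gives $\Pr[-\tfrac{1}{n}\log P_{V^n}(V^n)>R_e+\gamma]\le 2^{-n\gamma}\to 0$ for every $\gamma>0$, and since the deterministic kernel yields $\tfrac{1}{n}\log\tfrac{P_{V^n|S^n_d}}{P_{V^n}}=-\tfrac{1}{n}\log P_{V^n}(V^n)$, this establishes $\bar{I}(\boldsymbol{V};\boldsymbol{S}_d)\le R_e$. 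For the message rate I would invoke the conditional Verd\'u--Han converse lemma with common side information $V^n$: for any code sequence with $\limsup_n P_{e,n}\le \epsilon$ one obtains $R\le \ubar{I}(\boldsymbol{A},\boldsymbol{X};\boldsymbol{Y},\boldsymbol{S}_d\,|\,\boldsymbol{V})+o(1)$, with input the message-bearing $(A^n,X^n)$ and output the decoder observation $(Y^n,S^n_d)$. The distortion constraint is carried through unchanged, restricting the admissible processes exactly as in the theorem statement.

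The hard part will be the achievability, specifically reconciling the covering stage with the channel-coding and estimation stages in the fully general, possibly non-stationary and non-ergodic, regime. The difficulty is that $V^n$ is chosen as a function of $S^n_d$, which is itself correlated with both the true state $S^n$ to be estimated and the action $A^n$ carrying the message, so one must show that after covering the induced joint law of $(A^n,S^n,S^n_d,V^n,X^n,Y^n,Z^n)$ is, with high probability, spectrally indistinguishable from the target factorized law, so that covering success, decoding reliability, and the distortion bound hold simultaneously. I expect this to require a joint union bound over the covering and channel-coding error events together with the boundedness control of the distortion, mirroring the combined argument used for Theorems \ref{the: average distortion capacity} and \ref{the: maximal distortion capacity}.
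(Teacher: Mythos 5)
Your proposal matches the paper's proof in all essentials: the same covering codebook $\{V^n(\ell)\}$ drawn i.i.d.\ from $P_{V^n}$ with a change-of-measure covering bound at rate $R_e = \bar{I}(\boldsymbol{V};\boldsymbol{S}_d)+2\gamma$, conditional Verd\'u--Han decoding of $(A^n,X^n)$ against $(Y^n,S^n_d)$ given the common codeword $V^n$, distortion handled by the $\eta_1,\eta_2$ machinery inherited from Theorems \ref{the: average distortion capacity} and \ref{the: maximal distortion capacity}, and a converse that sets $V^n=f_e(S^n_d)$, bounds $\bar{I}(\boldsymbol{V};\boldsymbol{S}_d)\leq R_e$ by exactly your counting argument on $-\frac{1}{n}\log P_{V^n}(V^n)$, and applies the conditional Verd\'u--Han converse for $R$. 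The "hard part" you flag---the correlation of the selected $V^n$ with the message-bearing $A^n$ through $S^n_d$---is precisely what the paper resolves, by adding a third typicality set $\mathcal{T}_3$ on $(v^n,a^n)$ at level $\bar{I}(\boldsymbol{V};\boldsymbol{A})+\gamma$ to the covering condition and by choosing, among covering-successful indices, the one minimizing the expected-distortion function $\eta_2$, which is what makes the average-distortion analysis of Section \ref{sec: proof of average distortion} carry over verbatim.
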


The coding scheme is similar to that in Sections \ref{sec: proof of average distortion} and \ref{sec: proof of maximal distortion}. However, the random binning technique is no longer needed since the encoder only observes a lossy version of the imperfect side information $\boldsymbol{S}_d$. Since the decoder has full information of $\boldsymbol{S}_d$, this lossy description at the encoder can be regarded as the common information at both the encoder and decoder. The proof is provided in Appendix \ref{app: proof of rate-limited csi at encoder}.

\begin{theorem}\label{the: rate-limited CSI at decoder}
  The capacity of the general state-dependent ISAC model with rated limited CSI at decoder under average distortion constraint is the set of pairs $(R_d,R)$ satisfying
  \begin{align*}
    R_d &\geq \bar{I}(\boldsymbol{V};\boldsymbol{S}_e) - I(\boldsymbol{V};\boldsymbol{Y}),\\
    R &\leq \ubar{I}(\boldsymbol{A},\boldsymbol{U};\boldsymbol{Y}|\boldsymbol{V}) - \bar{I}(\boldsymbol{U};\boldsymbol{S}_e|\boldsymbol{A},\boldsymbol{V}) =  \ubar{I}(\boldsymbol{U};\boldsymbol{Y}|\boldsymbol{V})- \bar{I}(\boldsymbol{U};\boldsymbol{S}_e|\boldsymbol{A},\boldsymbol{V}),
  \end{align*}
  with the underlying random processes $(\boldsymbol{A},\boldsymbol{U},\boldsymbol{V},\boldsymbol{S}_d,\boldsymbol{S},\boldsymbol{X},\boldsymbol{Y},\boldsymbol{Z})=\{A^n,U^n,V^n,S^n,S^n_e,X^n,Y^n,Z^n\}_{n=1}^{\infty}$ and joint distribution $P_{A^n}P_{S^nS^n_e|A^n}P_{V^n|S^n_e}P_{U^n|A^nV^n}P_{X^n|U^nV^n}P_{Y^nZ^n|X^nS^n}$ and $\limsup_{n\to\infty}\frac{1}{n}\mathbb{E}\left[ d_n(S^n,g_n(X^n,A^n,V^n,Z^n)) \right] \leq D$.
  The capacity of the general state-dependent ISAC model with rate-limited CSI at encoder under maximal distortion constraint is the set of pairs $(R_d,R)$ satisfying
  \begin{align*}
    R_d &\geq \bar{I}(\boldsymbol{V};\boldsymbol{S}_e) - I(\boldsymbol{V};\boldsymbol{Y}),\\
    R&\leq\ubar{I}(\boldsymbol{A},\boldsymbol{U};\boldsymbol{Y}|\boldsymbol{V}) - \bar{I}(\boldsymbol{U};\boldsymbol{S}_e|\boldsymbol{A},\boldsymbol{V}) = \ubar{I}(\boldsymbol{U};\boldsymbol{Y}|\boldsymbol{V})- \bar{I}(\boldsymbol{U};\boldsymbol{S}_e|\boldsymbol{A},\boldsymbol{V}),
  \end{align*}
  with the underlying  random processes $(\boldsymbol{A},\boldsymbol{U},\boldsymbol{V},\boldsymbol{S}_e,\boldsymbol{S},\boldsymbol{X},\boldsymbol{Y},\boldsymbol{Z})=\{A^n,U^n,V^n,S^n,S^n_e,X^n,Y^n,Z^n\}_{n=1}^{\infty}$ and joint distribution $P_{A^n}P_{S^nS^n_e|A^n}P_{V^n|S^n_e}P_{U^n|A^nV^n}P_{X^n|U^nV^n}P_{Y^nZ^n|X^nS^n}$ and $p-\limsup_{n\to\infty}\frac{1}{n} d_n(S^n,g_n(X^n,A^n,V^n,Z^n)) \leq D$.
\end{theorem}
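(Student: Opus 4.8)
The plan is to prove Theorem \ref{the: rate-limited CSI at decoder} by the information-spectrum method, treating it as the dual of Theorem \ref{the: rate-limited CSI at encoder}: now the fully-informed terminal is the encoder, which observes $\boldsymbol{S}_e$ in its entirety, while the decoder is fed a rate-limited description over the link of rate $R_d$. The coding architecture superimposes two layers. A Wyner--Ziv layer conveys a quantized description $\boldsymbol{V}$ of the encoder CSI $\boldsymbol{S}_e$ to the decoder, exploiting the channel output $\boldsymbol{Y}$ as decoder side information; this is exactly what produces the saving $I(\boldsymbol{V};\boldsymbol{Y})$ in the constraint $R_d\ge \bar{I}(\boldsymbol{V};\boldsymbol{S}_e)-I(\boldsymbol{V};\boldsymbol{Y})$. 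A conditional Gel'fand--Pinsker layer then carries the message through the state-dependent channel once $\boldsymbol{V}$ has been made common to both terminals. In contrast to Theorem \ref{the: rate-limited CSI at encoder}, binning is needed in both layers here, since the encoder sees $\boldsymbol{S}_e$ fully rather than a lossy version.

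For achievability I would fix $P_{A^n}$, $P_{V^n|S^n_e}$, $P_{U^n|A^nV^n}$ and $P_{X^n|U^nV^n}$ as in the theorem, draw about $2^{n\bar{I}(\boldsymbol{V};\boldsymbol{S}_e)}$ description codewords $V^n$ from the induced marginal $P_{V^n}$ and distribute them into $2^{nR_d}$ bins, and for each message $m$ (hence $a^n=f_A(m)$) and each $V^n$ generate a Gel'fand--Pinsker sub-codebook of $U^n$ sequences of size about $2^{n\bar{I}(\boldsymbol{U};\boldsymbol{S}_e|\boldsymbol{A},\boldsymbol{V})}$. Encoding: given $s^n_e$, pick a $V^n$ whose density $\frac{1}{n}\log\frac{P_{S^n_e|V^n}}{P_{S^n_e}}$ clears the covering threshold, send its bin index over the $R_d$ link, then select a $U^n$ spectrally jointly typical with $(a^n,v^n,s^n_e)$ and transmit $X^n\sim P_{X^n|U^nV^n}$. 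Decoding proceeds in two steps: a Wyner--Ziv step recovers $\hat v^n$ from the received bin index and $Y^n$, which succeeds when $\bar{I}(\boldsymbol{V};\boldsymbol{S}_e)-R_d\le I(\boldsymbol{V};\boldsymbol{Y})$; then a conditional Gel'fand--Pinsker step recovers $\hat m$ from $(Y^n,\hat v^n)$, which succeeds when $R\le \ubar{I}(\boldsymbol{A},\boldsymbol{U};\boldsymbol{Y}|\boldsymbol{V})-\bar{I}(\boldsymbol{U};\boldsymbol{S}_e|\boldsymbol{A},\boldsymbol{V})$. The encoding-success probabilities are controlled by the information-spectrum covering lemma used in Section \ref{sec: proof of average distortion}, and the two decoding-error probabilities by Feinstein/Verd\'u--Han packing bounds; the separation of the two constraints is consistent with the spectral chain-rule inequality $\ubar{I}(\boldsymbol{A},\boldsymbol{U},\boldsymbol{V};\boldsymbol{Y})\ge I(\boldsymbol{V};\boldsymbol{Y})+\ubar{I}(\boldsymbol{A},\boldsymbol{U};\boldsymbol{Y}|\boldsymbol{V})$. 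The equivalence with $\ubar{I}(\boldsymbol{U};\boldsymbol{Y}|\boldsymbol{V})-\bar{I}(\boldsymbol{U};\boldsymbol{S}_e|\boldsymbol{A},\boldsymbol{V})$ follows by absorbing $\boldsymbol{A}$ into $\boldsymbol{U}$ exactly as in Remark \ref{rem: equivalence of general capacity expressions}. Finally, conditioned on the high-probability event $\{\hat v^n=v^n,\ \hat m=m\}$, the estimator $g_n(X^n,A^n,V^n,Z^n)$ meets the target: for the average criterion the uniform-boundedness assumption absorbs the error events, and for the maximal criterion the $p\text{-}\limsup$ bound transfers directly.

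For the converse I would start from a sequence of codes meeting $(R_d,R,D)$ and identify the auxiliaries. The natural choice sets $V^n$ equal to the output of the rate-limited link $f_e(S^n_e)$, so that $\frac{1}{n}\log|\mathcal{M}_{e,n}|\le R_d$ bounds its rate immediately, while $U^n$ is taken $(M,V^n)$-measurable so that the prescribed factorization $P_{A^n}P_{S^nS^n_e|A^n}P_{V^n|S^n_e}P_{U^n|A^nV^n}P_{X^n|U^nV^n}P_{Y^nZ^n|X^nS^n}$ is preserved. The message-rate bound then follows from the information-spectrum converse for Gel'fand--Pinsker coding conditioned on $\boldsymbol{V}$ (the analogue of the converse in Section \ref{sec: proof of maximal distortion}), and the link-rate bound $R_d\ge \bar{I}(\boldsymbol{V};\boldsymbol{S}_e)-I(\boldsymbol{V};\boldsymbol{Y})$ from the Wyner--Ziv-type converse that uses $\boldsymbol{Y}$ as decoder side information. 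The distortion constraint of the code is inherited by the optimizing process by construction of $g_n$.

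The main obstacle is the entanglement between the two layers: the side information $\boldsymbol{Y}$ that enables the Wyner--Ziv saving is itself the channel output, hence a function of $X^n$ and therefore of $V^n$, so quantization and channel decoding cannot be analyzed in isolation. In the information-spectrum setting this means the density $\frac{1}{n}\log\frac{P_{Y^n|V^n}}{P_{Y^n}}$, which governs recovery of the description within its bin, is coupled to the message-decoding density, and the covering and packing thresholds must be chosen consistently so that the $2^{n(\bar{I}(\boldsymbol{V};\boldsymbol{S}_e)-R_d)}$ descriptions sharing a bin and the $2^{nR}$ messages are resolved without interfering. On the converse side the corresponding difficulty is to exhibit a single $\boldsymbol{V}$ that is simultaneously a valid Wyner--Ziv description (yielding the $R_d$ bound) and a valid common state for the conditional Gel'fand--Pinsker bound, while keeping the Markov factorization intact; extracting both spectral inequalities from one auxiliary is the delicate step.
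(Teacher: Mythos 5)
Your proposal is correct and follows essentially the same route as the paper's proof: a general-source Wyner--Ziv layer conveying $\boldsymbol{V}$ at rate $R_d \geq \bar{I}(\boldsymbol{V};\boldsymbol{S}_e)-I(\boldsymbol{V};\boldsymbol{Y})$ with $\boldsymbol{Y}$ as decoder side information, the description $\boldsymbol{V}$ then treated as common information so that the message layer is the conditional version of the proofs of Theorems \ref{the: average distortion capacity} and \ref{the: maximal distortion capacity}, and a converse that identifies $U^n$ with the message and $V^n$ with the output of the rate-limited link. The one point you leave implicit, which the paper states outright, is that the prescribed factorization $P_{U^n|A^nV^n}$ forces the Markov chain $U^n-(A^n,V^n)-S^n_e$ and hence $\bar{I}(\boldsymbol{U};\boldsymbol{S}_e|\boldsymbol{A},\boldsymbol{V})=0$; this makes your second-layer Gel'fand--Pinsker binning degenerate and also dissolves the ``delicate step'' you flag at the end, since the converse then reduces directly to the argument of \cite[Theorem 6]{tan2014formula}.
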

\begin{remark}
  By setting $\boldsymbol{S}_e=\boldsymbol{S},\boldsymbol{A}=\emptyset$ and removing the distortion constraint, we recover the capacity result provided in \cite[Theorem 6]{tan2014formula} for the general Gel'fand-Pinsker channel with rate-limited CSI at the decoder.
\end{remark}
For the direct part, applying Wyner-Ziv coding for general sources \cite{iwata2002information} for source $S^n_e$ with $Y^n$ being the correlated side information gives the bound of $R_d$. The lossy description $V^n$ is sent to the decoder. Now similar to Theorem \ref{the: rate-limited CSI at encoder}, the lossy description $V^n$ is regarded as the common information at both the encoder and decoder. The remaining proofs are the conditional version of the proofs in Sections \ref{sec: proof of average distortion} and \ref{sec: proof of maximal distortion} by removing $S^n_d$ at the decoder side and including common information $V^n$.

For the converse part, we again let $U^n$ represent the uniformly distributed message and let $A^n=f_A(M)=f_A(U^n)$. The input sequence $X^n=f(M,V^n)=f(U^n,V^n)$, where $V^n$ is recoverable at both the encoder side and the decoder side. It is obvious that $U^n-(A^n,V^n)-S^n_e$ forms a Markov chain for $n=1,2,\dots$ and hence $\bar{I}(\boldsymbol{U};\boldsymbol{S}_e|\boldsymbol{A},\boldsymbol{V})=0$. Now following the same argument as in \cite[Proof of Theorem 6]{tan2014formula} completes the converse part.

\section{proof of theorem \ref{the: average distortion capacity}}\label{sec: proof of average distortion}
In this section, we provide the forward and converse part of the proof of Theorem \ref{the: average distortion capacity}. The reliable part analysis is similar to the analysis in \cite{tan2014formula}. However, we need to choose the codewords more carefully because of the additional distortion constraint. Before giving the coding scheme, we first give some auxiliary functions that play an important role in the selection of the codewords. Then we give the coding scheme and the corresponding reliable and distortion analysis. 

Consider input random variables $(A^n,U^n,X^n,$ $S^n_e,S^n,S^n_d,Y^n)$ with joint distribution $P_{A^n}P_{S^n_eS^nS^n_d|A^n}P_{U^n|A^nS^n_e}P_{X^n|U^nS^n_e}P_{Y^n|X^nS^n}$ such that the average distortion satisfies $\frac{1}{n}\mathbb{E}\left[ d_n(S^n,g(X^n,A^n,S^n_e,Z^n)) \right] \leq D$.

For a given codebook $\mathcal{C}$, define the codeword selection mapping $F_n^{\mathcal{C}}:\mathcal{A}^n\times\mathcal{S}^n_e \to \mathcal{U}^n$ and auxiliary functions
 $\eta_1: \mathcal{U}^n \times \mathcal{A}^n \times \mathcal{S}_e^n \to \mathbb{R}^{+}$ and $\eta_2: \mathcal{U}^n \times \mathcal{A}^n \times \mathcal{S}_e^n \to \mathbb{R}^{+}$ as
\begin{align*}
  \eta_1(u^n,a^n,s^n_e) := \sum_{x^n,s^n,z^n} \sum_{\substack{(y^n,s^n_d):\\(u^n,a^n,s^n_d,y^n)\notin\mathcal{T}_1}} P_{S^nS^n_d|A^nS^n_e}(s^n,s^n_d|a^n,s^n_e)P_{X^n|U^nS^n_e}(x^n|u^n,s^n_e)P_{Y^nZ^n|X^nS^n}(y^n,z^n|x^n,s^n),\\
  \eta_2(u^n,a^n,s^n_e) := \sum_{s^n,s^n_d}P_{S^nS^n_d|A^nS^n_e}(s^n,s^n_d|a^n,s^n_e)\sum_{x^n}P_{X^n|U^nS^n_e}(x^n|u^n,s^n_e) \sum_{y^n,z^n}P_{Y^nZ^n|X^nS^n}(y^n,z^n|x^n,s^n)d_n(s^n,g(a^n,x^n,s^n_e,z^n)),
\end{align*}
where
\begin{align*}
  \mathcal{T}_1 = \left\{(u^n,a^n,s^n_d,y^n):\frac{1}{n}\log \frac{P_{Y^nS^n_d|A^nU^n}(y^n,s^n_d|a^n,u^n)}{P_{Y^nS^n_d}(y^n,s^n_d)} \geq \ubar{I}(\boldsymbol{A},\boldsymbol{U};\boldsymbol{Y},\boldsymbol{S}_d) - \gamma\right\}
\end{align*}
for some arbitrary but fixed $\gamma>0$. The function $\eta_1$ gives the probability that the channel output does not fall into the set $\mathcal{T}_1$, which is defined as the decoding set in our coding scheme. Function $\eta_2(u^n,a^n,s^n_e)$ is the expectation value of the distortion given $(u^n,a^n,s^n_e)$.
Further, define
\begin{align*}
  \mathcal{T}_2 = \left\{ (u^n,a^n,s^n_e): \frac{1}{n}\log\frac{P_{U^n|A^nS^n_e}(u^n|a^n,s^n_e)}{P_{U^n|A^n}(u^n|a^n)} \leq \bar{I}(\boldsymbol{U};\boldsymbol{S}_e|\boldsymbol{A}) + \gamma \right\}.
\end{align*}
Set 
\begin{align*}
  \pi_1 &:= Pr\{(U^n,A^n,S^n_d,Y^n)\notin \mathcal{T}_1\}\to 0 \;\;\text{as $n\to\infty$},\\
  \pi_2 &:= Pr\{(U^n,A^n,S^n_e)\notin \mathcal{T}_2\}\to 0\;\; \text{as $n\to\infty$},\\
  \mathcal{B} &:= \{(u^n,a^n,s^n_e): \eta_1(u^n,a^n,s^n_e) \leq \pi_1^{\frac{1}{2}}\}.
\end{align*}
By Markov's inequality \cite[Theorem 1.6.4]{durrett2019probability}, we have
\begin{align*}
  Pr\{(U^n,A^n,S^n_e)\notin \mathcal{B}\} \leq \pi_1^{\frac{1}{2}}.
\end{align*}

\emph{Codebook Generation.} Generate action codebook $\mathcal{A}=\{a^n(m):m\in[1:2^{nR}]\}.$ For each message $m\in[1:2^{nR}]$, generate a subcodebook $\mathcal{C}(m)=\{u^n(m,l):l\in[(m-1)2^{nR'}+1:m2^{nR'}]\}$ with $R'=\bar{I}(\boldsymbol{U};\boldsymbol{S}_e|\boldsymbol{A}) + 2\gamma, \gamma>0$, each according to distribution $P_{U^n|A^n}(\cdot | a^n(m))$. The whole codebook is defined with $\mathcal{C}=\cup_{m\in\mathcal{M}}\mathcal{C}(m).$ 

\emph{Encoding. } To transmit message $m$, the encoder selects action sequence $a^n(m)$, and then observes the imperfect state sequence $s^n_e$. The encoder looks for a sequence $u^n(m,l)\in\mathcal{C}(m)$ such that
\begin{align*}
  (u^n(m,l),a^n(m),s^n_e)\in \mathcal{B}.
\end{align*}
 If there are more than one $u^n(m,l)$, choose the index $l^*$ such that
\begin{align*}
  u^n(m,l^*)=\mathop{\arg\min}_{u^n(m,l)\in\mathcal{C}(m):(u^n(m,l),a^n,s^n_e)\in\mathcal{B}}\eta_2(u^n,a^n,s^n_e).
\end{align*}
If no such $l$ exists, choose
\begin{align*}
  u^n(m,l^*)=\mathop{\arg\min}_{u^n(m,l)\in\mathcal{C}(m)}\eta_2(u^n,a^n,s^n_e).
\end{align*}
Given a random codebook $\mathbf{C}$, define this codeword selection mapping by $F_n^{\mathbf{C}}:\mathcal{A}^n \times \mathcal{S}^n_e \to \mathbf{C}$. The randomness of the mapping comes from the random codebook $\mathbf{C}$ and the imperfect state sequence $S^n_e$.

\emph{Decoding.} Given $y^n\in\mathcal{Y}^n, s^n_d\in\mathcal{S}^n_d$, the decoder looks for a unique message $\hat{m}$ such that there exist some $u^n(\hat{m},l)\in\mathcal{C}(\hat{m})$ satisfying
\begin{align*}
  (a^n(\hat{m}),u^n(\hat{m},l),s^n_d,y^n)\in\mathcal{T}_1.
\end{align*}
If there is no such unique message $\hat{m}$, the decoder declares an error.

\textbf{Error analysis: }
Due to symmetry, it is sufficient to consider the case that $m=1$ is sent and $L$ is the index selected by the encoder.
Define events
\begin{align*}
  \mathcal{E}_1 = \{(A^n(1),U^n(1,l),S^n_e)\notin \mathcal{B} \;\text{for all $U^n(1,l)\in\mathcal{C}(1)$}\},\\
  \mathcal{E}_2 = \{(A^n(1),U^n(1,L),S^n_d,Y^n)\notin \mathcal{T}_1  \},\\
  \mathcal{E}_3 = \{\exists \widetilde{m}\neq 1: (A^n(\widetilde{m}),U^n(\widetilde{m},\widetilde{l}),S^n_d,Y^n)\in \mathcal{T}_1 \}.
\end{align*}
It follows that $P_{e,n} \leq Pr\{\mathcal{E}_1\} + Pr\{\mathcal{E}_2\cap \mathcal{E}_1^c\} + Pr\{\mathcal{E}_3\}$. By setting $R' = \bar{I}(\boldsymbol{U};\boldsymbol{S}_e|\boldsymbol{A}) + 2\gamma$, we have
\begin{align*}
  Pr\{\mathcal{E}_1\} \leq Pr\{(A^n,U^n,S^n_e)\notin \mathcal{B}\} + Pr\{(A^n,U^n,S^n_e)\notin \mathcal{T}_2\} + \exp(-\exp(n\gamma)) \leq \pi_1^{\frac{1}{2}} + \pi_2 + \exp(-\exp(n\gamma)).
\end{align*}
The proof is the same as the bounding of \cite[Eq. (58)]{tan2014formula} and hence is omitted. For $\mathcal{E}_2$, we define event $\mathcal{F}$, which is analogous to \cite[Eq. (77)]{tan2014formula}:
\begin{align*}
  \mathcal{F} := \{(F^{\textbf{C}}_n(A^n,S^n_e),A^n,S^n_e)\in\mathcal{B}\}
\end{align*}
and it follows that 
$\Pr\{\mathcal{E}_2\cap \mathcal{E}_1^c\}=\Pr\{\mathcal{E}_2\cap \mathcal{E}_1^c \cap \mathcal{F}\} + \Pr\{\mathcal{E}_2\cap\mathcal{E}_1^c\cap \mathcal{F}^c\}\leq \Pr\{\mathcal{E}_2\cap \mathcal{F}\} + \Pr\{\mathcal{E}_1^c\cap \mathcal{F}^c\}$. The probability $\Pr\{\mathcal{E}_1^c\cap \mathcal{F}^c\}$ is zero by the definition of $\mathcal{E}_1$.

Now we have the following bound.
\begin{align*}
  &\Pr\{\mathcal{E}_2\cap \mathcal{E}_1^c\} \\
  &\leq \Pr\{\mathcal{E}_2\cap \mathcal{F}\}\\
  &\leq Pr\left\{ \{(F^{\textbf{C}}_n(A^n,S^n_e),A^n,S^n_e)\in\mathcal{B}\} \cap \{(F^{\textbf{C}}_n(A^n,S^n_e),A^n,Y^n,S^n_d)\in\mathcal{T}_1\} \right\}\\
  &=\sum_{a^n}P_{A^n}(a^n)\mathbb{E}_{\mathbf{C}}\Bigg[ \sum_{\substack{(s^n_e,s^n_d,y^n):\\(F^{\textbf{C}}_n(a^n,s^n_e),a^n,s^n_e)\in \mathcal{B},\\(F^{\textbf{C}}_n(a^n,s^n_e),a^n,y^n,s^n_d)\notin \mathcal{T}_1}} \sum_{s^n,x^n,z^n}P_{S^n_eS^nS^n_d|A^n}(s^n_e,s^n,s^n_d|a^n)\Big.\\
  &\quad\quad\quad\quad\quad\quad\quad\quad \Big. P_{X^n|U^nS^n_e}(x^n|s^n_e,F^{\textbf{C}}_n(a^n,s^n_e))P_{Y^nZ^n|X^nS^n}(y^n,z^n|x^n,s^n) \Bigg]\\
  &=\sum_{a^n}P_{A^n}(a^n)\mathbb{E}_{\mathbf{C}}\Bigg[ \sum_{\substack{s^n_e:\\(F^{\textbf{C}}_n(a^n,s^n_e),a^n,s^n_e)\in \mathcal{B}}} P_{S^n_e|A^n}(s^n_e|a^n)\Bigg.\\
  &\quad\quad\quad\quad \Bigg. \sum_{\substack{(s^n_d,y^n):\\(F^{\textbf{C}}_n(a^n,s^n_e),a^n,y^n,s^n_d)\notin \mathcal{T}_1}}\sum_{s^n,x^n,z^n}P_{S^nS^n_d|A^nS^n_e}(s^n,s^n_d|a^n,s^n_e)P_{X^n|U^nS^n_e}(x^n|s^n_e,F^{\textbf{C}}_n(a^n,s^n_e))P_{Y^nZ^n|X^nS^n}(y^n,z^n|x^n,s^n) \Bigg]\\
  &\leq \sum_{a^n}P_{A^n}(a^n)\mathbb{E}_{\mathbf{C}}\Bigg[ \sum_{\substack{s^n_e:\\(F^{\textbf{C}}_n(a^n,s^n_e),a^n,s^n_e)\in \mathcal{B}}} P_{S^n_e|A^n}(s^n_e|a^n) \eta_1(F^{\textbf{C}}(a^n,s^n_e),a^n,s^n_e) \Bigg]\overset{(a)}{\leq} \pi_1^{\frac{1}{2}},
\end{align*}
where $(a)$ follows by the definition of $\mathcal{B}$.
The bound of $\mathcal{E}_3$ is similar to that in \cite{tan2014formula} and we have $Pr\{\mathcal{E}_3\}\to 0$ for $n\to\infty$ given $\widetilde{R}= \ubar{I}(\boldsymbol{A},\boldsymbol{U};\boldsymbol{Y},\boldsymbol{S}_d)-2\gamma.$

For the distortion, let $\mathcal{A}$ be a sample action codebook and $P_{\textbf{A}}$ be the distribution of the random codebook $\textbf{A}$. It follows that
\begin{align*}
  &\mathbb{E}\left[ d_n(S^n,g(f_A(M),f(M,S^n_e),S^n_e,Z^n))\right] \\
  &=\frac{1}{|\mathcal{M}|}\sum_{m}\sum_{\mathcal{A}}P_{\textbf{A}}(\mathcal{A})\sum_{s^n_e}P_{S^n_e|A^n}(s^n_e|f_A(m))\sum_{\mathcal{C}}P_{\mathbf{C}|A^n}(\mathcal{C}|f_A(m))\sum_{s^n,s^n_d}P_{S^nS^n_d|A^nS^n_e}(s^n,s^n_d|f_A(m),s^n_e) \\
  &\quad\quad\quad\quad\quad\quad\quad\sum_{x^n}P_{X^n|U^nS^n_e}(x^n|s^n_e,F^{\mathcal{C}}_n(f_A(m),s^n_e))\sum_{y^n,z^n}P_{Y^nZ^n|X^nS^n}(y^n,z^n|x^n,s^n)d_n(s^n,g(f_A(m),x^n,s^n_e,z^n))\\
  &=\frac{1}{|\mathcal{M}|}\sum_{m}\sum_{\mathcal{A}}P_{\textbf{A}}(\mathcal{A})\sum_{s^n_e}P_{S^n_e|A^n}(s^n_e|f_A(m))\sum_{\mathcal{C}}P_{\mathbf{C}|A^n}(\mathcal{C}|f_A(m))\eta_2(f_A(m),F^{\mathcal{C}}_n(f_A(m),s^n_e),s^n_e)\\
  &=\frac{1}{|\mathcal{M}|}\sum_{m}\sum_{\mathcal{A}}P_{\textbf{A}}(\mathcal{A})\sum_{s^n_e}P_{S^n_e|A^n}(s^n_e|f_A(m))\sum_{\mathcal{C}}P_{\mathbf{C}|A^n}(\mathcal{C}|f_A(m))\int_{0}^{\infty}\mathbb{I}\left\{ \eta_2(f_A(m),F^{\mathcal{C}}_n(f_A(m),s^n_e),s^n_e) > \beta \right\} d\beta\\
  &=\frac{1}{|\mathcal{M}|}\sum_{m}\sum_{\mathcal{A}}P_{\textbf{A}}(\mathcal{A})\sum_{s^n_e}P_{S^n_e|A^n}(s^n_e|f_A(m))\int_{0}^{\infty}\sum_{\mathcal{C}}P_{\mathbf{C}|A^n}(\mathcal{C}|f_A(m))\mathbb{I}\left\{ \eta_2(f_A(m),F^{\mathcal{C}}_n(f_A(m),s^n_e),s^n_e) > \beta \right\} d\beta
\end{align*}
where
\begin{align*}
  &\sum_{\mathcal{C}}P_{\mathbf{C}|A^n}(\mathcal{C}|f_A(m))\mathbb{I}\left\{ \eta_2(f_A(m),F^{\mathcal{C}}_n(f_A(m),s^n_e),s^n_e) > \beta \right\}\\
  &\overset{(a)}{\leq} \sum_{\mathcal{C}} \prod_{l=1}^{|\mathcal{C}(m)|}P_{U^n|A^n}(u^n(m,l)|f_A(m))\mathbb{I}\{  \eta_2(f_A(m),u^n(m,l),s^n_e) > \beta \cup (f_A(m),u^n(m,l),s^n_e)\notin \mathcal{B} \}\\
  &=\left\{\sum_{u^n}P_{U^n|A^n}(u^n|f_A(m))\mathbb{I}\{  \eta_2(f_A(m),u^n,s^n_e) > \beta \cup (f_A(m),u^n,s^n_e)\notin \mathcal{B} \}\right\}^{|\mathcal{C}(m)|}\\
  &=\left\{1 - \sum_{u^n}P_{U^n|A^n}(u^n|f_A(m))\mathbb{I}\{  \eta_2(f_A(m),u^n,s^n_e) \leq \beta \cap (f_A(m),u^n,s^n_e)\in \mathcal{B} \}\right\}^{|\mathcal{C}(m)|}\\
  &\leq \left\{1 - \sum_{u^n}P_{U^n|A^n}(u^n|f_A(m))\mathbb{I}\{  \eta_2(f_A(m),u^n,s^n_e) \leq \beta \cap (f_A(m),u^n,s^n_e)\in \mathcal{B} \cap \mathcal{T}_2 \}\right\}^{|\mathcal{C}(m)|}\\
  &\overset{(b)}{\leq} \left\{1 - 2^{-n(\bar{I}(\boldsymbol{U};\boldsymbol{S}_e|\boldsymbol{A})+\gamma)}\sum_{u^n}P_{U^n|A^nS^n_e}(u^n|f_A(m),s^n_e)\mathbb{I}\{  \eta_2(f_A(m),u^n,s^n_e) \leq \beta \cap (f_A(m),u^n,s^n_e)\in \mathcal{B} \cap \mathcal{T}_2 \}\right\}^{|\mathcal{C}(m)|}\\
  &\leq 1 + \exp(-|\mathcal{C}(m)|2^{-n(\bar{I}(\boldsymbol{U};\boldsymbol{S}_e|\boldsymbol{A})+\gamma)})  - \sum_{u^n}P_{U^n|A^nS^n_e}(u^n|f_A(m),s^n_e)\mathbb{I}\{  \eta_2(f_A(m),u^n,s^n_e) \leq \beta \cap (f_A(m),u^n,s^n_e)\in \mathcal{B} \cap \mathcal{T}_2 \}
\end{align*}
where $(a)$ follows by the way we define $F^{\mathcal{C}}_n$ in the encoding phase, $(b)$ follows by the definition of $\mathcal{T}_2$.
Then, we have
\begin{align*}
  &\mathbb{E}\left[ d_n(S^n,g(f_A(M),f(M,S^n_e),S^n_e,Z^n))\right] \\
  &=\frac{1}{|\mathcal{M}|}\sum_{m}\sum_{\mathcal{A}}P_{\textbf{A}}(\mathcal{A})\sum_{s^n_e}P_{S^n_e|A^n}(s^n_e|f_A(m))\\
  &\quad\quad\quad\quad\quad\int_{0}^{\infty} \sum_{\mathcal{C}}P_{\mathbf{C}|A^n}(\mathcal{C}|f_A(m))\mathbb{I}\left\{ \eta_2(f_A(m),F^{\mathcal{C}}_n(f_A(m),s^n_e),s^n_e) > \beta \right\} d\beta\\
  &=\frac{1}{|\mathcal{M}|}\sum_{m}\sum_{a^n}P_{A^n}(a^n)\sum_{s^n_e}P_{S^n_e|A^n}(s^n_e|a^n)\\
  &\quad\quad\quad\quad\quad\int_{0}^{\infty} \sum_{\mathcal{C}}P_{\mathbf{C}|A^n}(\mathcal{C}|a^n)\mathbb{I}\left\{ \eta_2(a^n,F^{\mathcal{C}}_n(a^n,s^n_e),s^n_e) > \beta \right\} d\beta\\
  &\leq \int_{0}^{nD_{max}} Pr\{\eta_2(A^n,X^n,S^n_e) > \beta\}d\beta +  nD_{max}\exp(-2^{n\gamma}) + nD_{max}Pr\{(A^n,U^n,S^n_e)\notin \mathcal{B}\} + nD_{max}Pr\{(A^n,U^n,S^n_e)\notin \mathcal{T}_2\}\\
  &=\mathbb{E}\left[ \eta_2(A^n,U^n,S^n_e) \right] + nD_{max}\exp(-2^{n\gamma}) + nD_{max}\pi_{1}^{\frac{1}{2}} + nD_{max}\pi_2\\
  &=\mathbb{E}\left[ d_n(S^n,g(A^n,X^n,S^n_e,Z^n)) \right]  + nD_{max}\exp(-2^{n\gamma}) + nD_{max}\pi_{1}^{\frac{1}{2}} + nD_{max}\pi_2.
\end{align*}
By the fact that $D_{max}\exp(-2^{n\gamma})\to 0, D_{max}\pi_{1}^{\frac{1}{2}} \to 0$ and $D_{max}\pi_2 \to 0$ as $n\to \infty$,
 it follows that
\begin{align*}
  \limsup_{n\to\infty}\frac{1}{n}\mathbb{E}\left[ d_n(S^n,g(f_A(M),f(M,S^n_e),S^n_e,Z^n))\right] \leq \limsup_{n\to\infty}\frac{1}{n}\mathbb{E}\left[ d_n(S^n,g(A^n,X^n,S^n_e,Z^n)) \right] \leq D,
\end{align*}
which completes the achievability proof.

\emph{Converse.} For the converse, we follow the technique used in \cite{tan2014formula} and also \cite{bloch2013strong}. Consider a sequence of $(n,\mathcal{M}_n,\mathcal{A}_n)$ codes defined in Definition \ref{def: code} satisfying
\begin{align*}
  R=\frac{1}{n}\log |\mathcal{M}_n|,\\
  \limsup_{n\to\infty}\frac{1}{n}\mathbb{E}\left[ d_n(S^n,g_n(f(M,S^n_e),f_A(M),S^n_e,Z^n)) \right] \leq D.
\end{align*} 
 
 Now let $U^n$ be a random variable representing the choice of the message in $[1:2^{nR}]$ and $A^n\in\mathcal{A}$ such that $A^n=f_A(M)=f_A(U^n)$. Further define $X^n=f(M,S^n_e)=f(U^n,S^n_e)$. It follows that 
 \begin{align*}
  \limsup_{n\to\infty}\frac{1}{n}\mathbb{E}\left[ d_n(S^n,g_n(X^n,A^n,S^n_e,Z^n)) \right] \leq D.
 \end{align*}

 By the independence of message and state, and the relation between the action and state, we have the joint distribution $P_{U^nA^n}P_{S^n_eS^nS^n_d|A^n}$ $P_{X^n|U^nS^n_e}P_{Y^nZ^n|X^nS^n}.$ Now analogous to \cite[Appendix A]{tan2014formula}, let $\mathcal{I}_{\boldsymbol{U},\boldsymbol{S}_e}$ be set of random processes $(\boldsymbol{A},\boldsymbol{U},\boldsymbol{S}_e,\boldsymbol{S},\boldsymbol{S}_d,\boldsymbol{X},\boldsymbol{Y},\boldsymbol{Z})$ in which each collection of random variables $(A^n,U^n,S^n_e,S^n,S^n_d,X^n,Y^n,Z^n)$ satisfies $P_{U^nA^n}P_{S^n_eS^nS^n_d|A^n}$ $P_{X^n|U^nS^n_e}P_{Y^nZ^n|X^nS^n}$ and $\mathcal{D}_{\boldsymbol{U},\boldsymbol{S}_e}$ be the set such that the joint distribution satisfies $P_{A^n}P_{S^n_eS^nS^n_d|A^n}P_{U^n|A^nS^n_e}P_{X^n|U^nA^nS^n_e}P_{Y^nZ^n|X^nS^n}$.

Now by Verd\'u and Han's converse theorem \cite[Lemma 3.2.2]{koga2013information}, for any code for a general channel
\begin{align*}
  P_{Y^nS^n_d|U^n}(y^n,s^n_d|u^n)=\sum_{a^n,s^n,s^n_e,z^n,x^n} P_{A^n|U^n}(a^n|u^n) P_{S^n_eS^nS^n_d|A^n}(s^n_e,s^n,s^n_d|a^n)P_{X^n|U^nS^n_e}(x^n|u^n,s^n_e)P_{Y^nZ^n|X^nS^n}(y^n,z^n|x^n,s^n),
\end{align*}
the rate $R$ satisfies $R\leq \ubar{I}(\boldsymbol{U};\boldsymbol{Y},\boldsymbol{S}_d) + 2\gamma$, for some arbitrary but fixed $\gamma>0$. Thus, it follows that
\begin{align*}
  R &\leq \ubar{I}(\boldsymbol{U};\boldsymbol{Y},\boldsymbol{S}_d) + 2\gamma\\
  &= \ubar{I}(\boldsymbol{U};\boldsymbol{Y},\boldsymbol{S}_d) - \bar{I}(\boldsymbol{U};\boldsymbol{S}_e|\boldsymbol{A}) + 2\gamma\\
  &\leq \sup_{\mathcal{I}_{\boldsymbol{U},\boldsymbol{S}_e} \cap \mathcal{P}_D} \ubar{I}(\boldsymbol{U};\boldsymbol{Y},\boldsymbol{S}_d) - \bar{I}(\boldsymbol{U};\boldsymbol{S}_e|\boldsymbol{A}) + 2\gamma\\
  &\leq \sup_{\mathcal{D}_{\boldsymbol{U},\boldsymbol{S}_e} \cap \mathcal{P}_D} \ubar{I}(\boldsymbol{U};\boldsymbol{Y},\boldsymbol{S}_d) - \bar{I}(\boldsymbol{U};\boldsymbol{S}_e|\boldsymbol{A}) + 2\gamma\\
  &=\sup_{\mathcal{P}_D} \ubar{I}(\boldsymbol{U};\boldsymbol{Y},\boldsymbol{S}_d) - \bar{I}(\boldsymbol{U};\boldsymbol{S}_e|\boldsymbol{A}) + 2\gamma.
\end{align*}
According to Remark \ref{rem: equivalence of general capacity expressions}, the proof is completed. \hfill \qedsymbol

\section{proof of theorem \ref{the: maximal distortion capacity}}\label{sec: proof of maximal distortion}
This section provides the coding scheme for Theorem \ref{the: maximal distortion capacity}. The error analysis and the converse proof are the same as in Section \ref{sec: proof of average distortion}. Hence, we only present the proof of the distortion part here.

Fix input random variables $(A^n,U^n,X^n,S^n_e,S^n,S^n_d,Y^n,Z^n)$ with joint distribution $P_{A^n}P_{S^n_eS^nS^n_d|A^n}P_{U^n|A^nS^n_e}P_{X^n|U^nS^n_e}P_{Y^nZ^n|X^nS^n}$ such that $\bar{D}(\boldsymbol{S},\hat{\boldsymbol{S}}):=p-\limsup_{n\to\infty}\frac{1}{n} d_n(S^n,g(X^n,A^n,S^n_e,Z^n))  \leq D$.

Define mappings $\eta_1: \mathcal{U}^n \times \mathcal{A}^n \times \mathcal{S}_e^n \to \mathbb{R}^{+}$ and $\eta_2: \mathcal{U}^n \times \mathcal{A}^n \times \mathcal{S}_e^n \to \mathbb{R}^{+}$ as
\begin{align*}
  \eta_1(u^n,a^n,s^n_e) &:= \sum_{x^n,s^n,z^n} \sum_{\substack{(y^n,s^n_d):\\(u^n,a^n,s^n_d,y^n)\notin\mathcal{T}_1}} P_{S^nS^n_d|A^nS^n_e}(s^n,s^n_d|a^n,s^n_e)P_{X^n|U^nS^n_e}(x^n|u^n,s^n_e)P_{Y^nZ^n|X^nS^n}(y^n,z^n|x^n,s^n),\\
  \eta_2(u^n,a^n,s^n_e)&:= \sum_{s^n_d,y^n}\sum_{\substack{(s^n,x^n,z^n):\\(a^n,s^n_e,s^n,x^n,z^n)\notin\mathcal{T}_2}}P_{S^nS^n_d|A^nS^n_e}(s^n,s^n_d|a^n,s^n_e)P_{X^n|U^nS^n_e}(x^n|u^n,s^n_e)P_{Y^nZ^n|X^nS^n}(y^n,z^n|x^n,s^n),
\end{align*}
where
\begin{align*}
  \mathcal{T}_1 = \left\{(u^n,a^n,s^n_d,y^n):\frac{1}{n}\log \frac{P_{Y^nS^n_d|A^nU^n}(y^n,s^n_d|a^n,u^n)}{P_{Y^nS^n_d}(y^n,s^n_d)} \geq \ubar{I}(\boldsymbol{A},\boldsymbol{U};\boldsymbol{Y},\boldsymbol{S}_d) - \gamma\right\},
\end{align*}
for some arbitrary but fixed $\gamma>0$ and
\begin{align*}
  \mathcal{T}_2 =\left\{ (s^n,a^n,s^n_e,x^n,z^n): \frac{1}{n}d_n(s^n,g(a^n,x^n,s^n_e,z^n)) \leq \bar{D}(\boldsymbol{S},\hat{\boldsymbol{S}}) + \gamma  \right\}.
\end{align*}
Further, define
\begin{align*}
  \mathcal{T}_3 =  \left\{ (u^n,a^n,s^n_e): \frac{1}{n}\log\frac{P_{U^n|A^nS^n_e}(u^n|a^n,s^n_e)}{P_{U^n|A^n}(u^n|a^n)} \leq \bar{I}(\boldsymbol{U};\boldsymbol{S}_e|\boldsymbol{A}) + \gamma \right\}
\end{align*}

By the definitions of $p-\limsup_{n\to\infty}$ and $p-\liminf_{n\to\infty}$ in \cite{koga2013information}, we have
\begin{align*}
  Pr\{(U^n,A^n,S^n_d,Y^n)\notin\mathcal{T}_1\}=\pi_1 \to 0,\\
  Pr\{(S^n,A^n,S^n_e,X^n,Z^n)\notin\mathcal{T}_2\}=\pi_2 \to 0,\\
  Pr\{(U^n,A^n,S^n_e)\notin\mathcal{T}_3\}=\pi_3 \to 0
\end{align*}
as $n\to\infty$.
Set 
\begin{align*}
  \mathcal{B}_1 := \{(u^n,a^n,s^n_e): \eta_1(u^n,a^n,s^n_e) \leq \pi_1^{\frac{1}{2}}\}
\end{align*}
and
\begin{align*}
  \mathcal{B}_2 := \{(u^n,a^n,s^n_e): \eta_2(u^n,a^n,s^n_e) \leq \pi_2^{\frac{1}{2}}\}
\end{align*}
By Markov's inequality \cite[Theorem 1.6.4]{durrett2019probability} we have
\begin{align*}
  Pr\{(A^n,U^n,S^n_e)\notin \mathcal{B}_1\} \leq \pi_1^{\frac{1}{2}},Pr\{(A^n,U^n,S^n_e)\notin \mathcal{B}_2\} \leq \pi_2^{\frac{1}{2}}.
\end{align*}

\emph{Codebook Generation.} Generate action codebook $\mathcal{A}=\{a^n(m):m\in[1:2^{nR}]\}.$ For each message $m\in[1:2^{nR}]$, generate a subcodebook $\mathcal{C}(m)=\{u^n(m,l):l\in[(m-1)2^{nR'}+1:m2^{nR'}]\}$ with $R'=\bar{I}(\boldsymbol{U};\boldsymbol{S}_e|\boldsymbol{A}) + 2\gamma$, each according to distribution $P_{U^n|A^n}(\cdot | a^n(m))$. The whole codebook is defined with $\mathcal{C}=\cup_{m\in\mathcal{M}}\mathcal{C}(m).$ 

\emph{Encoding. } To transmit message $m$, the encoder selects action sequence $a^n(m)$, and then observes the imperfect state sequence $s^n_e$. The encoder looks for the sequence $u^n(m,l)\in\mathcal{C}(m)$ such that
\begin{align*}
  (u^n(m,l),a^n(m),s^n_e)\in \mathcal{B} := \mathcal{B}_1 \cap \mathcal{B}_2.
\end{align*}
If no such $l$ exists, choose $u^n(m,1)$. If there is more than one $u^n(m,l)$, select the lowest $l$.
Given a random codebook $\mathbf{C}$, define this codeword selection mapping by $F_n^{\mathbf{C}}:\mathcal{A}^n \times \mathcal{S}^n_e \to \mathbf{C}$. The randomness of the mapping comes from the random codebook $\mathbf{C}$ and the imperfect state sequence $S^n_e$.

\emph{Decoding.} Given $y^n\in\mathcal{Y}^n, s^n_d\in\mathcal{S}^n_d$, the decoder looks for a unique message $\hat{m}$ such that there exist some $u^n(\hat{m},l)\in\mathcal{C}(\hat{m})$ such that
\begin{align*}
  (a^n(\hat{m}),u^n(\hat{m},l),s^n_d,y^n)\in\mathcal{T}_1.
\end{align*}
If there is no such unique message $\hat{m}$ declare an error.

The decoding error analysis is the same as that in Section \ref{sec: proof of average distortion}. For the distortion, let $\mathbb{I}\{\cdot \}$ be the indicator function. By the definition of $\mathcal{B}_2$ and the encoding condition, we have
\begin{align*}
  &Pr\{\frac{1}{n} d_n(S^n,g_n(f(M,S^n_e),f_{A}(M),S^n_e,Z^n))  > \bar{D}(\boldsymbol{S},\hat{\boldsymbol{S}}) + \gamma | \hat{M} = M\}\\
  &\leq \frac{1}{|\mathcal{M}|}\sum_{m}\sum_{\mathcal{A}}P_{\textbf{A}}(\mathcal{A})\sum_{s^n,s^n_e,x^n,z^n}\sum_{s^n_d,y^n} P_{S^n_eS^nS^n_d|A^n}(s^n_e,s^n,s^n_d|f_A(m))\\
  &\quad\quad\quad\quad\quad\quad\quad\quad\quad\quad\cdot\sum_{\mathcal{C}}P_{\textbf{C}|A^n}(\mathcal{C}|f_A(m))P_{X^n|U^nS^n_e}(x^n|F^{\mathcal{C}}(f_A(m),s^n_e),s^n_e)\\
  &\quad\quad\quad\quad\quad\quad\quad\quad\quad\quad\quad\quad \cdot P_{Y^nZ^n|X^nS^n}(y^n,z^n|x^n,s^n)\mathbb{I}\left\{\frac{1}{n} d_n(s^n,g(x^n,f_A(m),s^n_e,z^n)) > \bar{D}(\boldsymbol{S},\hat{\boldsymbol{S}}) + \gamma\right\}\\
  &=\frac{1}{|\mathcal{M}|}\sum_{m}\sum_{\mathcal{A}}P_{\textbf{A}}(\mathcal{A})\sum_{\substack{\mathcal{C}}}P_{\textbf{C}|A^n}(\mathcal{C}|f_A(m))\sum_{s^n_e}P_{S^n_e|A^n}(s^n_e|f_A(m))\sum_{s^n,x^n,z^n}\sum_{s^n_d,y^n}P_{S^nS^n_d|A^nS^n_e}(s^n,s^n_d|f_A(m),s^n_e)\\
  &\quad\quad\quad\quad\quad\quad\quad \cdot P_{X^n|U^nS^n_e}(x^n|F^{\mathcal{C}}(f_A(m),s^n_e),s^n_e)P_{Y^nZ^n|X^nS^n}(y^n,z^n|x^n,s^n)\mathbb{I}\left\{\frac{1}{n} d_n(s^n,g(x^n,f_A(m),s^n_e,z^n)) > \bar{D}(\boldsymbol{S},\hat{\boldsymbol{S}}) + \gamma\right\}\\
  &=\frac{1}{|\mathcal{M}|}\sum_{m}\sum_{\mathcal{A}}P_{\textbf{A}}(\mathcal{A})\sum_{s^n_e}P_{S^n_e|A^n}(s^n_e|f_A(m))\sum_{\mathcal{C}}P_{\textbf{C}|A^n}(\mathcal{C}|f_A(m))\eta_2(f_A(m),F^{\mathcal{C}}(f_A(m),s^n_e),s^n_e)\\
  &=\frac{1}{|\mathcal{M}|}\sum_{m}\sum_{\mathcal{A}}P_{\textbf{A}}(\mathcal{A})\sum_{s^n_e}P_{S^n_e|A^n}(s^n_e|f_A(m))\left(\sum_{\substack{\mathcal{C}:\\(f_A(m),F^{\mathcal{C}}(f_A(m),s^n_e),s^n_e)\in\mathcal{B}_2}}P_{\textbf{C}|A^n}(\mathcal{C}|f_A(m))\eta_2(f_A(m),F^{\mathcal{C}}(f_A(m),s^n_e),s^n_e) \right.\\
  &\quad\quad\quad\quad\quad\quad\quad\quad + \left. \sum_{\substack{\mathcal{C}:\\(f_A(m),F^{\mathcal{C}}(f_A(m),s^n_e),s^n_e)\notin\mathcal{B}_2}}P_{\textbf{C}|A^n}(\mathcal{C}|f_A(m))\eta_2(f_A(m),F^{\mathcal{C}}(f_A(m),s^n_e),s^n_e) \right)\\
  &\overset{(a)}{\leq} \frac{1}{|\mathcal{M}|}\sum_{m}\sum_{\mathcal{A}}P_{\textbf{A}}(\mathcal{A})\sum_{s^n_e}P_{S^n_e|A^n}(s^n_e|f_A(m))\left(\pi_2^{\frac{1}{2}} + \sum_{\substack{\mathcal{C}:\\(f_A(m),F^{\mathcal{C}}(f_A(m),s^n_e),s^n_e)\notin\mathcal{B}_2}}P_{\textbf{C}|A^n}(\mathcal{C}|f_A(m))\eta_2(f_A(m),F^{\mathcal{C}}(f_A(m),s^n_e),s^n_e) \right)\\
  &= \frac{1}{|\mathcal{M}|}\sum_{m}P_{A^n}(a^n)\sum_{s^n_e}P_{S^n_e|A^n}(s^n_e|a^n)\left(\pi_2^{\frac{1}{2}} + \sum_{\substack{\mathcal{C}:\\(a^n,F^{\mathcal{C}}(a^n,s^n_e),s^n_e)\notin\mathcal{B}_2}}P_{\textbf{C}|A^n}(\mathcal{C}|a^n)\eta_2(a^n,F^{\mathcal{C}}(a^n,s^n_e),s^n_e) \right)
\end{align*}
where $(a)$ follows by the definition of $\mathcal{B}_2$. We further have
\begin{align*}
  &\sum_{\substack{\mathcal{C}:\\(a^n,F^{\mathcal{C}}(a^n,s^n_e),s^n_e)\notin\mathcal{B}_2}}P_{\textbf{C}|A^n}(\mathcal{C}|a^n)\eta_2(a^n,F^{\mathcal{C}}(a^n,s^n_e),s^n_e)\\
  &\leq \sum_{\substack{\mathcal{C}:\\(a^n,F^{\mathcal{C}}(a^n,s^n_e),s^n_e)\notin\mathcal{B}_2}}P_{\textbf{C}|A^n}(\mathcal{C}|a^n)\\
  &\leq Pr\{(A^n,U^n(l),S^n_e)\notin (\mathcal{B}_2\cap \mathcal{T}_3) \;\;\text{for all $U^n(l)\in\textbf{C}$}|A^n=a^n,S^n_e=s^n_e\}.
\end{align*}
This term can be bounded following the same argument as the error event $\mathcal{E}_1$ in Section \ref{sec: proof of average distortion} where we set $R'=\bar{I}(\boldsymbol{U};\boldsymbol{S}_e|\boldsymbol{A}) + 2\gamma$ for some positive $\gamma>0$.
It follows that
\begin{align*}
  &Pr\{\frac{1}{n} d_n(S^n,g(f(M,S^n_e),f_{A}(M),S^n_e,Z^n))  > \bar{D}(\boldsymbol{S},\hat{\boldsymbol{S}}) + \gamma \}\\
  &\leq Pr\{\frac{1}{n} d_n(S^n,g(f(M,S^n_e),f_{A}(M),S^n_e,Z^n))  > \bar{D}(\boldsymbol{S},\hat{\boldsymbol{S}}) + \gamma | \hat{M} = M\}Pr\{\hat{M} = M\} +  P_e\\
  &\leq \pi_2^{\frac{1}{2}} + (\pi_2^{\frac{1}{2}} + \pi_3 + \exp(-\exp(n\gamma)))  +  P_e.
\end{align*}
With $n\to\infty$, we have $\pi_2\to 0, \pi_3 \to 0, \exp(-\exp(n\gamma)) \to 0$ and $P_e \to 0$ and hence,
\begin{align*}
  p-\limsup_{n\to\infty}\frac{1}{n} d_n(S^n,g(f(M,S^n_e),f_{A}(M),S^n_e,Z^n)) \leq \bar{D}(\boldsymbol{S},\hat{\boldsymbol{S}}) + \gamma \leq D+ \gamma.  
\end{align*}
The proof is completed. \hfill \qedsymbol

\section{Examples}\label{sec: numerical examples}
In this section, we apply our results to AWGN fading channel with additive interference. Although our general capacity-distortion formulas hold for arbitrary channel models, they are in general hard to compute. To derive some computable expressions, we simplify the channel model in the following two subsections. For the AWGN channel with additive interference, we investigate the boundary point for the capacity-distortion region with ergodic interference. It turns out that the distortion achieved in our result is also the minimum distortion that can be achieved when the interference is i.i.d. Gaussian. For the fading channel, we characterize the single-letter expression of the capacity-distortion formula for the ergodic fading process.
\subsection{AWGN with Additive Interference}
Let $\boldsymbol{A}=\boldsymbol{S}_d=\emptyset$. We first consider an AWGN channel with stationary and ergodic interference and compute the distortion that can be achieved when the user transmits a message with a capacity-achieving rate. The output of the channel is described by

\begin{align}
  \label{def: ergodic additive state model}Y_i = X_i + S_i + N_i, Z_i=X_i+S_i+N_{z,i}
\end{align}
where $N_{i}\sim\mathcal{N}(0,\sigma),N_{z,i}\sim\mathcal{N}(0,\sigma_z)$ with $\sigma_z > \sigma$,
and $\boldsymbol{S}=\{S_i\}_{i=1}^{\infty}$ is a stationary and ergodic source. We choose a linear MMSE estimator to reconstruct the state, i.e.
\begin{align*}
  \hat{S}_n = aZ + b X + c S_e
\end{align*}
for some real numbers $a,b,$ and $c$.
Further, the channel has an input constraint $\mathbb{E}\left[ X_n^2 \right]\leq P_X$. Assume the channel state information at the encoder side is a noisy version of the channel state:
\begin{align}
  \label{eq: definition of S_e}S_{e,i} = S_{i} + N_{e,i},
\end{align}
where $N_{e,i}\sim\mathcal{N}(0,\sigma_e)$. Now it is equivalent to considering the channel 
\begin{align*}
  Y_i = X_i + S_{e,i} + N_i - N_{e,i},\;\;i=1,2,3,\dots
\end{align*}
and the capacity
\begin{align}
  \label{eq: numerical example capacity}C(D) = \sup_{\bar{\mathcal{P}}_D}\ubar{I}(\boldsymbol{U};\boldsymbol{Y}) - \bar{I}(\boldsymbol{U};\boldsymbol{S}_e), 
\end{align}
where $\bar{\mathcal{P}}_D$ is the set of random processes in which each collection of random variables $(U^n,S^n_e,S^n,X^n,Y^n,Z^n)$ that satisfies $p-\limsup_{n\to\infty}\frac{1}{n}d_n(S^n,g_n(X^n,S^n_e,Z^n)) \leq D$.

Now substitute $U=X + \alpha S_e$ in \eqref{eq: numerical example capacity}, where $X\sim\mathcal{N}(0,P_X)$ is independent to $S$. It follows that
\begin{align*}
  &\ubar{I}(\boldsymbol{U};\boldsymbol{Y}) - \bar{I}(\boldsymbol{U};\boldsymbol{S}_e)\\
  &\geq \ubar{H}(\boldsymbol{U}) - \bar{H}(\boldsymbol{U}|\boldsymbol{Y}) - \bar{H}(\boldsymbol{U}) + \ubar{H}(\boldsymbol{U}|\boldsymbol{S}_e)\\
  &=\ubar{H}(\boldsymbol{U}|\boldsymbol{S}_e) - \bar{H}(\boldsymbol{U}|\boldsymbol{Y}) + \ubar{H}(\boldsymbol{U}) - \bar{H}(\boldsymbol{U}).
\end{align*}
Here we have
\begin{align*}
  \ubar{H}(\boldsymbol{U}|\boldsymbol{S}_e)=\ubar{H}(\boldsymbol{X}+\alpha\boldsymbol{S}_e|\boldsymbol{S}_e)=\ubar{H}(\boldsymbol{X}|\boldsymbol{S}_e)=p-\liminf_{n\to\infty} \frac{1}{n}\log \frac{1}{P_{X^n}(X^n)}
\end{align*}
and
\begin{align*}
  \bar{H}(\boldsymbol{U}|\boldsymbol{Y})&=\bar{H}(\boldsymbol{X}+\alpha\boldsymbol{S}_e|\boldsymbol{X}+\boldsymbol{S}_e + \boldsymbol{N}-\boldsymbol{N}_e)\\
  &=\bar{H}(\boldsymbol{X}-\alpha(\boldsymbol{X}+\boldsymbol{N}-\boldsymbol{N}_e)|\boldsymbol{X}+\boldsymbol{S}_e + \boldsymbol{N}-\boldsymbol{N}_e)\\
  &=p-\liminf_{n\to\infty} \frac{1}{n}\log \frac{1}{P(X^n-\alpha(X^n+N^n-N^n_e)|X^n+S^n_e+N^n-N^n_e)}
\end{align*}
Note that $X_i,N_i,N_{e,i}$ are all i.i.d. Gaussian random variables. By setting $\alpha=\frac{P_X}{\sigma + \sigma_e}$, $X_i-\alpha(X_i+N_i-N_{e,i})$ and $X_i+N_i-N_{e,i}$ are uncorrelated and jointly zero-mean Gaussian, and hence, independent\cite{cohen2002gaussian}. With the same argument as in \cite{cohen2002gaussian} we have that $X^n-\alpha(X^n+N^n-N^n_e)$ is independent of $X^n+S^n_e+N^n-N^n_e$ by the independence between $S^n_e$ and $(X^n,N^n,N^n_e)$. It follows that 
\begin{align*}
  &\lim_{n\to\infty}\frac{1}{n}\log \frac{1}{P_{X^n}(X^n)}=\lim_{n\to\infty}\frac{1}{n}\sum_{i=1}^n \log \frac{1}{P_{X}(X_i)}=h(X),\\
  &\lim_{n\to\infty} \frac{1}{n}\log \frac{1}{P(X^n-\alpha(X^n+N^n-N^n_e)|X^n+S^n_e+N^n-N^n_e)}\\
  &=\lim_{n\to\infty} \frac{1}{n}\log \frac{1}{P(X^n-\alpha(X^n+N^n-N^n_e))}\\
  &=\lim_{n\to\infty} \frac{1}{n}\sum_{i=1}^n \log \frac{1}{P(X_i-\alpha(X_i+N_i-N_{e,i}))}\\
  &=h(X|X+N-N_e).
\end{align*}
By an interleaving argument, we can treat the ergodic noise as an i.i.d. noise\cite{cohen2002gaussian}. Since $N_{e,i}$ and $X_i$ are i.i.d. Gaussian random variable, by Chebyshev's inequality we have
\begin{align*}
  \ubar{H}(\boldsymbol{U}) = \bar{H}(\boldsymbol{U})
\end{align*}
and the achieved rate is
\begin{align*}
  \frac{1}{2}\log(1+\frac{P_X}{\sigma+\sigma_e}).
\end{align*}
On the other hand, by the MMSE we have
\begin{align}
  \label{def: ergodic additive state linear mmse}a=\frac{\sigma_e\sigma_s}{\sigma_s\sigma_z+\sigma_e\sigma_s+\sigma_e\sigma_z},\;\;b=-a,\;\;c=\frac{\sigma_s\sigma_z}{\sigma_s\sigma_z+\sigma_e\sigma_s+\sigma_e\sigma_z},
\end{align}
and by the ergodicity of the state $\boldsymbol{S}$ we have
\begin{align}
  \label{eq: ergodic additive state distortion}d_n(\boldsymbol{S},\hat{\boldsymbol{S}})=\mathbb{E}[(S-\hat{S})^2]=\sigma_s - \frac{\sigma_s^2(\sigma_e+\sigma_z)}{\sigma_s\sigma_z+\sigma_e\sigma_s+\sigma_e\sigma_z}=\frac{\sigma_s\sigma_e\sigma_z}{\sigma_s\sigma_z+\sigma_e\sigma_s+\sigma_e\sigma_z}.
\end{align}

The above rate-distortion pair $(\frac{1}{2}\log(1+\frac{P_X}{\sigma+\sigma_e}),\frac{\sigma_s\sigma_e\sigma_z}{\sigma_s\sigma_z+\sigma_e\sigma_s+\sigma_e\sigma_z})$ is a boundary point of the rate-distortion region, where the message rate is maximized. Note that in the estimation phase, we can always eliminate input $X$ from the feedback $Z$ by the channel model defined in \eqref{def: ergodic additive state model} and the fact that the estimator has both the feedback $Z$ and the channel input $X$. Due to this the estimation is in fact performed based only on $N_z-N_e$ and $S_e=S+N_e$. We are interested in the minimum distortion that can be achieved with the assumption that the state source $\boldsymbol{S}$ is i.i.d. By the above argument, the estimation of state $S$ is equivalent to estimate $N_e$, and the distortion is equivalent to
\begin{align*}
  d_n(\boldsymbol{S},\hat{\boldsymbol{S}})=\mathbb{E}[(S-\hat{S})^2]=\mathbb{E}[(S-(S_e-\hat{N}_e))^2]=\mathbb{E}[(N_e-\hat{N}_e)^2].
\end{align*}
This means the choice of the distribution of input random variable does not affect the estimation phase and the distortion in \eqref{eq: ergodic additive state distortion} is still achievable in this case. To prove the upper bound, we follow the technique in \cite{sutivong2005channel}.
\begin{align*}
  &\frac{1}{n}I(N_e^n;N_z^n-N^n_e,S^n+N_e^n)\\
  &=\frac{1}{n}(h(N_e^n) - h(N_e^n|N_z^n-N^n_e,S^n+N_e^n))\\
  &\overset{(a)}{\geq}\frac{1}{n}(h(N_e^n) - h(N_e^n-\hat{N}_e^n(N_z^n-N^n_e,S^n+N_e^n)))\\
  &\geq\frac{1}{n}\sum_{i=1}^n (h(N_{e,i}) - h(N_{e,i}-\hat{N}_{e,i}))\\
  &\overset{(b)}{\geq}\frac{1}{n}\sum_{i=1}^n (\frac{1}{2}\log (2\pi e \sigma_e)- \frac{1}{2}\log (2\pi e \mathbb{E}[(N_{e,i}-\hat{N}_{e,i})^2]))\\
  &\geq \frac{1}{2}\log (2\pi e \sigma_e)- \frac{1}{2}\log (2\pi e \frac{1}{n}\sum_{i=1}^n \mathbb{E}[(N_{e,i}-\hat{N}_{e,i})^2]) = \frac{1}{2}\log \frac{\sigma_e}{D_n}
\end{align*}
where $\hat{N}_e$ in $(a)$ is a function of $(N_z^n-N^n_e,S^n+N_e^n)$, $(b)$ follows by the fact that Gaussian distribution maximizes the entropy for a given variance, and
\begin{align*}
  &\frac{1}{2}\log \frac{\sigma_e}{D_n}\leq \frac{1}{n}I(N_e^n;N_z^n-N^n_e,S^n+N_e^n)\\
  &=\frac{1}{n} (h(N_z^n-N^n_e,S^n+N_e^n)-h(N_z^n-N^n_e,S^n+N_e^n|N^n_e))\\
  &=\frac{1}{n} (h(N_z^n-N^n_e) + h(S^n+N_e^n|N_z^n-N^n_e)-h(N^n_z)-h(S^n))\\
  &=\frac{1}{n} (h(N_z^n-N^n_e) + h(S^n+N_e^n|N_z^n-N^n_e)-h(N^n_z)-h(S^n))\\
  &\leq \frac{1}{2}\log(2\pi e (\sigma_z + \sigma_e)) + \sum_{i=1}^n h(S_i+N_{e,i}|N_{z,i}-N_{e,i}) - \frac{1}{2}\log(2\pi e \sigma_z) - \frac{1}{2}\log(2\pi e \sigma_s)\\
  &\overset{(a)}{\leq} \frac{1}{2}\log(2\pi e (\sigma_z + \sigma_e)) + \sum_{i=1}^n h(S_i+N_{e,i}-a_{min}(N_{z,i}-N_{e,i})) - \frac{1}{2}\log(2\pi e \sigma_z) - \frac{1}{2}\log(2\pi e \sigma_s)\\
  &\leq \frac{1}{2}\log(2\pi e (\sigma_z + \sigma_e)) + \sum_{i=1}^n \frac{1}{2}\log (2\pi e \mathbb{E}[(S_i+N_{e,i}-a_{min}(N_{z,i}-N_{e,i}))^2]) - \frac{1}{2}\log(2\pi e \sigma_z) - \frac{1}{2}\log(2\pi e \sigma_s)\\
  &=\frac{1}{2}\log(2\pi e (\sigma_z + \sigma_e)) + \frac{1}{2}\log(2\pi e \frac{\sigma_s\sigma_z+\sigma_z\sigma_e+\sigma_s\sigma_e}{\sigma_z+\sigma_e}) - \frac{1}{2}\log(2\pi e \sigma_z) - \frac{1}{2}\log(2\pi e \sigma_s)\\
  &=\frac{1}{2}\log(\frac{\sigma_s\sigma_z+\sigma_z\sigma_e+\sigma_s\sigma_e}{\sigma_z\sigma_s})
\end{align*}
where in $(a)$, $a_{min}$ is the coefficient of linear MMSE estimator of $(S_i+N_{e,i})$ given $(N_{z,i}-N_{e,i})$ such that $a_{min}=\frac{N_e}{N+N_e}$. It follows that
\begin{align*}
  D_n \geq \frac{\sigma_z\sigma_s\sigma_e}{\sigma_s\sigma_z+\sigma_z\sigma_e+\sigma_s\sigma_e}.
\end{align*}
The above argument proves that when the state is an i.i.d. Gaussian source, the sender can transmit a message with maximal rate and minimum distortion. However, it is still an open problem if \eqref{eq: ergodic additive state distortion} is the minimum distortion that can be achieved for the ergodic state source since the linear MMSE estimator is optimal for jointly Gaussian random variables, but is not necessarily optimal for other cases. For the ergodic source, the estimator may use the correlation between the sequence $(S_1,S_2,\dots,S_n)$ to achieve a lower distortion. Hence, the minimum distortion for the ergodic state source is still an open problem.

For more general interference (non-ergodic), Costa's encoding is suboptimal. To see this, note that in this case the capacity is lower bounded by 
\begin{align*}
  &\ubar{I}(\boldsymbol{U};\boldsymbol{Y}) - \bar{I}(\boldsymbol{U};\boldsymbol{S}_e)\\
  &\geq \ubar{H}(\boldsymbol{U}) - \bar{H}(\boldsymbol{U}|\boldsymbol{Y}) - \bar{H}(\boldsymbol{U}) + \ubar{H}(\boldsymbol{U}|\boldsymbol{S}_e)\\
  &= \frac{1}{2}\log(1+\frac{P_X}{\sigma+\sigma_e}) + \ubar{H}(\boldsymbol{U})  - \bar{H}(\boldsymbol{U}),
\end{align*}
where $\ubar{H}(\boldsymbol{U})  - \bar{H}(\boldsymbol{U})$ is in general negative by the definitions of inf-entropy density and sup-entropy density. However, it was proved in \cite{erez2005capacity} that the interference can be canceled by the use of a lattice code with the help of common randomness $U^n$. By using a lattice code $\Lambda$ defined in \cite[Section IV]{erez2005capacity}, the model is equivalent to a modulo lattice additive noise channel described by 
\begin{align*}
  {Y'}^n = V^n + {N'}^n,
\end{align*}
where the noise ${N'}^n=[(1-\alpha)U^n+\alpha (N^n-N^n_e)] \mod \Lambda$ is independent to the input $V^n$ (the definition of operation $\mod \Lambda$ is given in \cite{erez2005capacity}), $\alpha=\frac{P_X}{P_X+\sigma+\sigma_e}$. Let the dither $U^n$ be uniformly distributed on its fundamental Voronoi region, the capacity of the channel
\begin{align*}
  \ubar{I}(\boldsymbol{V};\boldsymbol{Y})&= p-\liminf_{n\to\infty} \frac{1}{n}\log \frac{P_{Y^n|V^n}(Y^n|V^n)}{P_{Y^n}(Y^n)}\\
  &=p-\liminf_{n\to\infty} \frac{1}{n} \log \frac{1}{P_{Y^n}(Y^n)} - \frac{1}{n}\log \frac{1}{P_{{N'}^n}({N'}^n)}\\
  &\overset{(a)}{\geq} \frac{1}{2}\log \frac{P_X}{G(\Lambda)} - \frac{1}{2}\log(2\pi e \frac{P_X (\sigma+\sigma_e)}{P_X +\sigma+\sigma_e})\overset{(b)}{=}\frac{1}{2}\log (1+\frac{P_X}{\sigma+\sigma_e}),
\end{align*}
where $(a)$ follows by the fact that when $n\to\infty$, the dither $U^n$ can be regarded as a white Gaussian noise vector \cite{zamir1996lattice} and its normalized power is upper bounded by $\frac{P_X (\sigma+\sigma_e)}{P_X +\sigma+\sigma_e}$, $G(\Lambda)$ is the normalized second moment of $\Lambda$, $(b)$ follows by $G(\Lambda)\to\frac{1}{2\pi e}$ as $n\to\infty$\cite{zamir1996lattice}.

\subsection{AWGN with Fading}
In this section, we consider an AWGN channel with ergodic fading under average distoriton constraint described by 
\begin{align*}
  Y=SX+N, Z=SX+N_z,
\end{align*}
where $S$ is an ergodic source with limited power $\sigma_s$, $N\sim\mathcal{N}(0,\sigma),N_z\sim\mathcal{N}(0,\sigma_z)$ with $\sigma_z > \sigma$. We assume no side information at the encoder and perfect CSI at the decoder side. This is a generalization of the model considered in \cite{ahmadipour2022information}, and the capacity result for average distortion case is straightforward by setting $\boldsymbol{A}=\boldsymbol{S}_e=\emptyset$ and $\boldsymbol{S}_d=\boldsymbol{S}$ in Theorem \ref{the: average distortion capacity}. We define the state reconstruction function as in \cite{zhang2011joint} with
\begin{align*}
  \hat{S}(x,Z) = \frac{\sigma_s x}{\sigma_s|x|^2 + \sigma_z}Z,
\end{align*}
and the MMSE function given input symbol $x$ is
\begin{align*}
  \mathbb{E}[d(S,\hat{S}(x,Z))]=\mathbb{E}[|S-\hat{S}(x,Z)|^2] = \frac{\sigma_s\sigma_z}{\sigma_s|x|^2+\sigma_z}.
\end{align*}
The average distortion in this case is
\begin{align*}
  D(\boldsymbol{S},\hat{\boldsymbol{S}})&=\limsup_{n\to\infty}\frac{1}{n}\sum_{i=1}^n \mathbb{E}\left[ d(S_i,\hat{S}_i) \right]\\
  &=\limsup_{n\to\infty}\frac{1}{n}\sum_{i=1}^n \mathbb{E}\left[ (S_i-\hat{S}_i(X_i,Z_i))^2 \right]\\
  &=\limsup_{n\to\infty}\frac{1}{n}\sum_{i=1}^n  \mathbb{E}\left[ (S_i - \frac{\sigma_s X_i}{\sigma_s|X_i|^2 + \sigma_z}Z_i)^2 \right].
\end{align*}

\begin{corollary}
  The capacity-distortion region for an AWGN channel with stationary and ergodic fading is
\begin{align}
  \label{def: capacity distortion region of awgn with fading}C(D) = \max_{\substack{\alpha\in[0,1]:\\\mathbb{E}_{X\sim\mathcal{N}(0,\alpha P_X)}\left[ H(X) \right]\leq D}}\frac{1}{2}\mathbb{E}\left[ \log \frac{\alpha S^2 P_X+\sigma}{\sigma} \right],
\end{align}
where
\begin{align*}
  H(X) = \frac{\sigma_s\sigma_z}{\sigma_s|X|^2+\sigma_z}.
\end{align*} 
\end{corollary}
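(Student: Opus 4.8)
The plan is to specialize the general formula of Theorem~\ref{the: average distortion capacity} and then turn the resulting spectral quantity into a single-letter expectation via the ergodic theorem. First I would set $\boldsymbol A=\boldsymbol S_e=\emptyset$ and $\boldsymbol S_d=\boldsymbol S$. Then $\bar I(\boldsymbol U;\boldsymbol S_e\mid\boldsymbol A)=0$, the Gel'fand--Pinsker binning layer disappears, and the auxiliary process $\boldsymbol U$ can be identified with the input $\boldsymbol X$, so that $C(D)=\sup\ubar I(\boldsymbol X;\boldsymbol Y,\boldsymbol S)$, the supremum being over input processes independent of $\boldsymbol S$ that respect the power budget $\frac1n\sum_i\mathbb E[X_i^2]\le P_X$ and the distortion constraint. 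Because the encoder sees no state, $\boldsymbol X$ is independent of $\boldsymbol S$ and the joint information density collapses, giving $\ubar I(\boldsymbol X;\boldsymbol Y,\boldsymbol S)=\ubar I(\boldsymbol X;\boldsymbol Y\mid\boldsymbol S)$.

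For the forward part I would choose $X_i$ i.i.d.\ $\mathcal N(0,\alpha P_X)$ and independent of the fading. The triple $\{(S_i,X_i,N_i)\}$ is then stationary and ergodic, so Birkhoff's theorem makes $\frac1n\sum_i\log\frac{P_{Y_i\mid X_iS_i}}{P_{Y_i\mid S_i}}$ converge almost surely, hence in probability, to $\frac12\mathbb E_S[\log(1+\alpha S^2P_X/\sigma)]$; this fixes $\ubar I(\boldsymbol X;\boldsymbol Y\mid\boldsymbol S)$ at the claimed rate. A direct second-moment computation (using only $\mathrm{Var}(S)=\sigma_s$, $\mathrm{Var}(N_z)=\sigma_z$, and independence) shows that the fixed reconstruction $\hat S(x,Z)$ incurs distortion exactly $H(x)$ for every $x$, so the average distortion equals $\mathbb E_{X\sim\mathcal N(0,\alpha P_X)}[H(X)]$, and optimizing over feasible $\alpha$ delivers achievability.

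For the converse I would invoke the Verd\'u--Han bound already used in Section~\ref{sec: proof of average distortion} to get $R\le\ubar I(\boldsymbol X;\boldsymbol Y\mid\boldsymbol S)$, then use the standard chain $\ubar I(\boldsymbol X;\boldsymbol Y\mid\boldsymbol S)\le\liminf_n\frac1n I(X^n;Y^n\mid S^n)\le\liminf_n\frac1n\sum_i I(X_i;Y_i\mid S_i)$, where the last step is the memorylessness of the channel given the state. Since a Gaussian $X_i$ maximizes $I(X_i;Y_i\mid S_i)$ under the second moment $P_i=\mathbb E[X_i^2]$, and $P\mapsto\frac12\mathbb E_S[\log(1+S^2P/\sigma)]$ is concave, Jensen collapses the per-symbol powers into a single $\alpha=\frac1{nP_X}\sum_iP_i\le1$, yielding $R\le\frac12\mathbb E_S[\log(1+\alpha S^2P_X/\sigma)]$. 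As this upper bound is maximized at full power $\alpha=1$ and is met by the i.i.d.\ Gaussian construction there, the two directions agree at the rate-maximizing operating point.

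The hard part will be the distortion side of the converse, because rate and distortion pull the input law in opposite directions: the conditional mutual information is maximized by a Gaussian input, whereas $g(t)=\sigma_s\sigma_z/(\sigma_s t+\sigma_z)$ is convex, so concentrating the power (for instance $|X|=\sqrt P$) strictly reduces $\mathbb E[H(X)]$ at the same $P$. Consequently, over arbitrary input laws the distortion achievable at a given power can beat the Gaussian value, and I would have to argue---relative to the fixed estimator and using the monotonicity of both $R(\alpha)$ and the Gaussian distortion $\mathbb E_{X\sim\mathcal N(0,\alpha P_X)}[H(X)]$ in $\alpha$---that the largest admissible $\alpha$ is rate-optimal, so that the maximization over $\alpha$ is exactly the capacity in the regime where it is feasible. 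A completely general converse that also optimizes the estimator would require the linear estimator to be MMSE, which holds only for jointly Gaussian $\boldsymbol S$; for a general ergodic source this is precisely the minimum-distortion question left open earlier in this section.
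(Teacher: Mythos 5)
Your forward direction coincides with the paper's proof: i.i.d.\ $X_i\sim\mathcal N(0,\alpha P_X)$ inputs, stationarity and ergodicity to evaluate the spectral inf-information rate as $\frac12\mathbb E[\log\frac{\alpha S^2P_X+\sigma}{\sigma}]$, and the direct computation that the fixed reconstruction incurs average distortion $\mathbb E_{X\sim\mathcal N(0,\alpha P_X)}[H(X)]$. The genuine gap is in your converse. After the correct chain $\ubar{I}(\boldsymbol X;\boldsymbol Y|\boldsymbol S)\le\liminf_n\frac1n\sum_i I(X_i;Y_i|S_i)$ and the Gaussian maximum-entropy bound, you apply Jensen to the concave map $P\mapsto\frac12\mathbb E_S[\log(1+S^2P/\sigma)]$, collapsing the per-symbol powers into a single $\alpha$. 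That step discards exactly the information you need---the coupling between each symbol's rate contribution and its distortion contribution---so it can only match achievability at the full-power endpoint $\alpha=1$, i.e.\ for $D\ge\mathbb E_{X\sim\mathcal N(0,P_X)}[H(X)]$; as you yourself concede, the tradeoff curve for smaller $D$ is left unproven. The paper closes this with a different averaging argument: it first establishes (the Lemma preceding the corollary, proved by time sharing as in \cite{ahmadipour2022information}) that $C(D)$ defined in \eqref{def: capacity distortion region of awgn with fading} is non-decreasing and concave in $D$; it then bounds each per-symbol rate term by $C(D_i)$ with $D_i=\mathbb E_{X_i\sim\mathcal N(0,\alpha_i P_X)}[H(X_i)]$ (this holds by the very definition of $C$, since the pair $(D_i,\frac12\mathbb E[\log\frac{\alpha_i S^2P_X+\sigma}{\sigma}])$ lies on the graph of \eqref{def: capacity distortion region of awgn with fading}), and uses concavity to get $\frac1n\sum_i C(D_i)\le C(\frac1n\sum_i D_i)$, followed by continuity and monotonicity to conclude the bound $C(D)$. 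In short, the missing idea is to apply Jensen to the capacity-distortion curve $C(\cdot)$ itself, not to the rate as a function of power; that is what transports the per-symbol distortion budget through the averaging.

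One point in your favor: your observation that $g(t)=\sigma_s\sigma_z/(\sigma_s t+\sigma_z)$ is convex, so that inputs with concentrated magnitude achieve strictly smaller distortion than the Gaussian input at equal power, is correct and touches a soft spot that the paper's own converse passes over---its step identifying $\limsup_n\frac1n\sum_i\mathbb E_{X_i\sim\mathcal N(0,\alpha_i P_X)}[H(X_i)]$ with the true distortion $D(\boldsymbol S,\hat{\boldsymbol S})$ evaluates the distortion under the Gaussian surrogate law, and the estimator is held fixed rather than optimized (full estimator optimality would require joint Gaussianity, consistent with the open problem the paper acknowledges for the interference model). But within the regime the corollary asserts---the fixed linear reconstruction and the concavity machinery above---the paper does complete the converse along the entire curve, whereas your proposal stops at the rate-maximizing operating point and explicitly defers the rest. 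Adding the concavity lemma and the per-letter $C(D_i)$ decomposition is the concrete repair your argument needs.
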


\begin{lemma}
  $C(D)$ is a non-decreasing and concave function of $D$.
\end{lemma}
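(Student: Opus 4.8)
The plan is to prove the two properties separately. Write $R(\alpha):=\frac{1}{2}\mathbb{E}[\log\frac{\alpha S^2P_X+\sigma}{\sigma}]$ for the objective and $D(\alpha):=\mathbb{E}_{X\sim\mathcal{N}(0,\alpha P_X)}[H(X)]$ for the distortion attained by the Gaussian input of variance $\alpha P_X$, so that $C(D)=\sup\{R(\alpha):\alpha\in[0,1],\,D(\alpha)\le D\}$. Monotonicity will follow from a feasible-set inclusion, and concavity from a time-sharing (convex-combination) argument on the input distribution, exactly as for the classical capacity--cost and rate--distortion functions.

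For the non-decreasing part, fix $D_1\le D_2$. Every $\alpha$ with $D(\alpha)\le D_1$ also satisfies $D(\alpha)\le D_2$, so the feasible set defining $C(D_1)$ is contained in the one defining $C(D_2)$. Since in both cases the same objective $R(\alpha)$ is maximized, enlarging the constraint region cannot decrease the supremum, whence $C(D_1)\le C(D_2)$.

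For concavity, fix feasible $D_1,D_2$ and let $\alpha_1,\alpha_2\in[0,1]$ attain the two suprema, so that $D(\alpha_i)\le D_i$ and $R(\alpha_i)=C(D_i)$; fix $\lambda\in[0,1]$ and set $D_\lambda=\lambda D_1+(1-\lambda)D_2$. I would use the Gaussian input of variance $\alpha_1P_X$ on the first $\lfloor\lambda n\rfloor$ channel uses and the input of variance $\alpha_2P_X$ on the remaining ones. Because $\frac{1}{n}\log|\mathcal{M}_n|$ is additive across the two sub-blocks and the (additive, average) distortion averages, this concatenated scheme is an admissible code achieving rate $\lambda C(D_1)+(1-\lambda)C(D_2)$ and average distortion $\lambda D(\alpha_1)+(1-\lambda)D(\alpha_2)\le D_\lambda$. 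Hence $C(D_\lambda)\ge\lambda C(D_1)+(1-\lambda)C(D_2)$, which is the desired concavity.

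The main obstacle is that the concatenated input is a mixture of two Gaussians, not a single Gaussian of the parametrized family, so one must argue that the averaged operating point is genuinely dominated by $C(D_\lambda)$ as written. I would close this gap by invoking the operational average-distortion region of Theorem \ref{the: average distortion capacity}: that region is closed under concatenation of codes, so its upper boundary is automatically concave, and the single-Gaussian formula of the corollary is merely an evaluation of that boundary. Equivalently, one can introduce a time-sharing variable $Q$ with $X\mid\{Q=q\}\sim\mathcal{N}(0,\alpha_qP_X)$, check the power constraint $\mathbb{E}[X^2]\le P_X$, and read off rate $\mathbb{E}_Q[R(\alpha_Q)]$ and distortion $\mathbb{E}_Q[D(\alpha_Q)]$; this perspective/concave-hull construction makes the concavity in $D$ transparent. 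As a consistency check one may also note that $R(\alpha)$ is concave and $D(\alpha)$ convex in $\alpha$, since $\log(1+a\alpha)$ is concave and $1/(b\alpha+c)$ is convex for the relevant positive constants; this supports, though it does not by itself establish, the claimed concavity of $C$ in $D$.
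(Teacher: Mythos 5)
Your monotonicity argument is fine and matches the standard (and the paper's intended) reasoning. The problem is in the route you commit to for concavity. As you yourself notice, the time-shared input is a mixture of two Gaussians and so leaves the one-parameter family over which $C(D)$ in \eqref{def: capacity distortion region of awgn with fading} is defined; but your proposed repair --- invoking the operational region of Theorem \ref{the: average distortion capacity} and declaring the single-Gaussian formula ``merely an evaluation of that boundary'' --- is circular in the logical structure of the paper. The lemma is an \emph{analytic} property of the formula \eqref{def: capacity distortion region of awgn with fading}, and it is consumed inside the converse proof of the corollary: step $(e)$ there is exactly Jensen's inequality applied to the concave function $C(\cdot)$, and steps $(f)$--$(g)$ use its continuity and monotonicity. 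The identification of the formula with the operational capacity-distortion boundary is therefore not available until after the lemma is proved, so you cannot use it to prove the lemma. (The paper itself omits the proof, deferring to \cite[Lemma 2]{ahmadipour2022information}; there the mixing argument works directly because the optimization is over \emph{all} input distributions, with mutual information concave and distortion linear in $P_X$ --- precisely the structure that the restriction to the Gaussian family destroys here.)

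The correct fix is the observation you relegate to a ``consistency check,'' which in fact is a complete proof and needs no operational input. Writing $X=\sqrt{\alpha P_X}\,G$ with $G\sim\mathcal{N}(0,1)$, for each fixed $g$ the map $\alpha\mapsto \sigma_s\sigma_z/(\sigma_s P_X g^2\alpha+\sigma_z)$ is convex, so $D(\alpha):=\mathbb{E}_{X\sim\mathcal{N}(0,\alpha P_X)}[H(X)]$ is convex in $\alpha$; and $R(\alpha):=\frac{1}{2}\mathbb{E}[\log\frac{\alpha S^2P_X+\sigma}{\sigma}]$ is concave in $\alpha$ as an expectation of logarithms of affine functions. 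Then take $\alpha_1,\alpha_2$ (near-)optimal for $D_1,D_2$ and set $\bar\alpha=\lambda\alpha_1+(1-\lambda)\alpha_2\in[0,1]$: convexity of $D(\cdot)$ gives $D(\bar\alpha)\leq \lambda D(\alpha_1)+(1-\lambda)D(\alpha_2)\leq \lambda D_1+(1-\lambda)D_2$, so $\bar\alpha$ is feasible for $D_\lambda$, and concavity of $R$ gives $C(D_\lambda)\geq R(\bar\alpha)\geq \lambda C(D_1)+(1-\lambda)C(D_2)$. This stays inside the parametrized family, is purely analytic, and is the natural adaptation of the mixing proof of \cite[Lemma 2]{ahmadipour2022information}, with convexity of $D(\alpha)$ in the variance parameter playing the role that linearity of the distortion in $P_X$ plays there. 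Promote this to your main argument and delete the operational detour; your time-sharing variable $Q$ then only serves as a sanity check, since $R(\mathbb{E}[\alpha_Q])\geq\mathbb{E}[R(\alpha_Q)]$ and $D(\mathbb{E}[\alpha_Q])\leq\mathbb{E}[D(\alpha_Q)]$ show time sharing never beats a single $\bar\alpha$.
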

The proof is similar to that of \cite[Lemma 2]{ahmadipour2022information} and is omitted here.
For the achievable part, set $X_i$ to be i.i.d. Gaussian random variable with zero mean and $\alpha P_X$ variance such that $\alpha\in[0:1]$ and $\mathbb{E}_{X}[H(X)]\leq D$. It follows that given state sequence $S^n$ at the decoder, the output $Y^n$ has i.i.d. Gaussian components with zero mean and variance $(\alpha S^2P_X+\sigma)$. Define a set of random processes $\mathcal{P}'(D):=\{\boldsymbol{X}:\limsup_{n\to\infty}\mathbb{E}[d_n(\boldsymbol{S},\hat{\boldsymbol{S}})]\leq D\}$. It follows that
\begin{align*}
  \sup_{\mathcal{P}'(D)} \underbar{I}(\boldsymbol{X};\boldsymbol{Y}|\boldsymbol{S})&\overset{(a)}{\geq} \liminf_{n\to\infty}\frac{1}{n}\log\frac{P_{Y^n|X^nS^n}(Y^n|X^n,S^n)}{P_{Y^n|S^n}(Y^n|S^n)}\\
  &=\liminf_{n\to\infty}\frac{1}{n}\sum_{i=1}^n \frac{1}{2}\left[ \log(\frac{\alpha S^2_iP_X+\sigma}{\sigma}) - \frac{|N|^2}{\sigma} + \frac{|Y_i|^n}{\alpha S^2_iP_X+\sigma}  \right]\\
  &\overset{(b)}{=}\frac{1}{2}\mathbb{E}\left[  \log(\frac{\alpha S^2P_X+\sigma}{\sigma}) \right]
\end{align*}
where $(a)$ follows by substituting i.i.d. Gaussian random variables $X_i\sim\mathcal{N}(0,\alpha P_X)$ in the inequality, $(b)$ follows by the fact that $\{X_i\}$ are i.i.d. and hence, $\{Y_i\}_{i=1,\dots,n}$ are stationary and ergodic.
The distortion is 
\begin{align*}
  D(\boldsymbol{S},\hat{\boldsymbol{S}})&\overset{(a)}{=} \mathbb{E}\left[ (S - \frac{\sigma_s X}{\sigma_s|X|^2 + \sigma_z}Z)^2 \right]\\
  &=\mathbb{E}\left[ \mathbb{E}\left[ (S - \frac{\sigma_s X}{\sigma_s|X|^2 + \sigma_z}Z)^2 \bigg| X\right]  \right]\\
  &=\mathbb{E}_{X\sim\mathcal{N}(0,\alpha P_X)}\left[ \frac{\sigma_s\sigma_z}{\sigma_s|X|^2+\sigma_z} \right]=\mathbb{E}_{X\sim\mathcal{N}(0,\alpha P_X)}\left[ H(X) \right]\leq D,
\end{align*}
where $(a)$ follows by the fact that $\boldsymbol{S}$ is stationary and ergodic having the same first-order distribution for all $S_i$ and $\{X_i\}$ are i.i.d random variables.

The upper bound follows by applying Theorem 3.5.2 in \cite{koga2013information},
\begin{align*}
  \underbar{I}(\boldsymbol{X};\boldsymbol{Y}|\boldsymbol{S}) &\leq \liminf_{n\to\infty}\frac{1}{n}I(X^n;Y^n|S^n)\\
  &= \liminf_{n\to\infty}\frac{1}{n} h(Y^n|S^n) - h(Y^n|X^n,S^n)\\
  &\leq \liminf_{n\to\infty}\frac{1}{n} \sum_{i=1}^n h(Y_i|S_i) - h(Y_i|X_i,S_i)\\
  &\overset{(a)}{= }\liminf_{n\to\infty}\frac{1}{n} \sum_{i=1}^n  \mathbb{E}\left[\frac{1}{2}\log 2\pi e(\alpha_i S^2_i P_{X_i} + \sigma) \right] - \frac{1}{2}\log 2\pi e\sigma\\
  &\overset{(b)}{\leq} \liminf_{n\to\infty}\frac{1}{n} \sum_{i=1}^n \mathbb{E}\left[ \frac{1}{2}\log \frac{\alpha_i S^2_iP_X+\sigma}{\sigma} \right]\\
  &\overset{(c)}{=}\liminf_{n\to\infty}\frac{1}{n} \sum_{i=1}^n \mathbb{E}\left[ \frac{1}{2}\log \frac{\alpha_i S^2P_X+\sigma}{\sigma} \right]\\
  &\overset{(d)}{\leq} \limsup_{n\to\infty}\frac{1}{n} \sum_{i=1}^n C(\mathbb{E}_{X_i\sim\mathcal{N}(0,\alpha_i P_X)}\left[ H(X_i) \right])\\
  &\overset{(e)}{\leq} \limsup_{n\to\infty} C(\frac{1}{n} \sum_{i=1}^n \mathbb{E}_{X_i\sim\mathcal{N}(0,\alpha_i P_X)}\left[ H(X_i) \right])\\
  &\overset{(f)}{=}C(\limsup_{n\to\infty} \frac{1}{n} \sum_{i=1}^n \mathbb{E}_{X_i\sim\mathcal{N}(0,\alpha_i P_X)}\left[ H(X_i) \right])=C(D(\boldsymbol{S},\hat{\boldsymbol{S}}))\overset{(g)}{\leq} C(D).
\end{align*}
where $(a)$ holds since the differential entropy given a variance is maximized for Gaussian distribution and $\frac{1}{2}\log 2\pi e\sigma \leq h(Y_i|S_i)\leq \mathbb{E}\left[\frac{1}{2}\log 2\pi e( S^2_i P_{X_i} + \sigma) \right]$, $(b)$ follows by the constraint $E[|X_i|^2]\leq P_X$ for all $i$, $(c)$ follows by the fact that $\boldsymbol{S}$ is stationary and ergodic and, hence, has the same first order distribution for all $S_i$, $(d)$ follows by the definition of $C(D)$ in \eqref{def: capacity distortion region of awgn with fading}, $(e)$ follows by the concavity of $C(D)$, $(f)$ follows by the fact that $C(D)$ is continuous and non-decreasing function of $D$, $(g)$ follows since $C(D)$ is non-decreasing.

For the case that the distribution is the mixture of two stationary and ergodic sources, i.e.
\begin{align*}
  P_{S^n} = \beta P_{S^n_1} + (1-\beta)P_{S^n_2},
\end{align*}
we have the following inner bound result.
\begin{corollary}
  The capacity-distortion region of AWGN with mixed fading satisfies
  \begin{align*}
     \max_{\substack{\alpha\in[0,1]:\\\mathbb{E}_{X\sim\mathcal{N}(0,\alpha P_X)}\left[ \widetilde{H}(X) \right]\leq D}}\min\left\{\frac{1}{2}\mathbb{E}\left[  \log(\frac{\alpha S^2_1P_X+\sigma}{\sigma}) \right],\frac{1}{2}\mathbb{E}\left[  \log(\frac{\alpha S^2_2P_X+\sigma}{\sigma}) \right]\right\}\subseteq C_{MIX}(D),
  \end{align*}
  where
  \begin{align*}
    \widetilde{H}(X) = \beta \frac{\sigma_{s_1}\sigma_z}{\sigma_{s_1}|X|^2+\sigma_z} + (1-\beta) \frac{\sigma_{s_2}\sigma_z}{\sigma_{s_2}|X|^2+\sigma_z}.
  \end{align*}
\end{corollary}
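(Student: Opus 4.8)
The plan is to obtain this inner bound as the mixed-state specialization of the preceding AWGN-with-fading corollary, combining the general average-distortion formula of Theorem~\ref{the: average distortion capacity} with the mixed-source structure underlying Theorem~\ref{the: mixed average distortion}. First I would specialize the model by setting $\boldsymbol{A}=\boldsymbol{S}_e=\emptyset$ and $\boldsymbol{S}_d=\boldsymbol{S}$, so that the second spectral term in Theorem~\ref{the: average distortion capacity} vanishes and we may take $\boldsymbol{U}=\boldsymbol{X}$. Choosing $X_i\sim\mathcal{N}(0,\alpha P_X)$ i.i.d.\ and independent of the state, the quantity $\ubar{I}(\boldsymbol{X};\boldsymbol{Y},\boldsymbol{S})$ is then an $a$-achievable rate subject to the average distortion constraint, so it remains to evaluate this spectral inf-information rate for the mixed state and to verify the distortion.

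For the rate, note that since $P_{S^n}=\beta P_{S^n_1}+(1-\beta)P_{S^n_2}$ and the channel $Y=SX+N$ is fixed, the joint process $(\boldsymbol{X},\boldsymbol{Y},\boldsymbol{S})$ is itself a two-component mixture of the jointly stationary and ergodic processes $(\boldsymbol{X},\boldsymbol{Y}_i,\boldsymbol{S}_i)$ with $Y_i=S_iX+N$. Invoking the mixed-source identity that appears in Theorem~\ref{the: mixed average distortion} (the spectral inf-mutual information rate of a finite mixture equals the minimum over its components) gives $\ubar{I}(\boldsymbol{X};\boldsymbol{Y},\boldsymbol{S})=\min_{i\in\{1,2\}}\ubar{I}(\boldsymbol{X};\boldsymbol{Y}_i,\boldsymbol{S}_i)$. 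Because $\boldsymbol{X}$ is independent of $\boldsymbol{S}_i$, we have $\ubar{I}(\boldsymbol{X};\boldsymbol{Y}_i,\boldsymbol{S}_i)=\ubar{I}(\boldsymbol{X};\boldsymbol{Y}_i\mid\boldsymbol{S}_i)$, and for each ergodic component the normalized conditional information density converges, exactly as in the single-component AWGN-with-fading corollary, so that $\ubar{I}(\boldsymbol{X};\boldsymbol{Y}_i\mid\boldsymbol{S}_i)=\frac{1}{2}\mathbb{E}[\log\frac{\alpha S_i^2P_X+\sigma}{\sigma}]$. Taking the minimum over $i$ reproduces the rate term in the statement.

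For the distortion, on component $i$ I would use the mode-matched linear MMSE reconstruction $\hat{S}_i(x,Z)=\frac{\sigma_{s_i}x}{\sigma_{s_i}|x|^2+\sigma_z}Z$, which attains per-symbol conditional distortion $\frac{\sigma_{s_i}\sigma_z}{\sigma_{s_i}|X|^2+\sigma_z}$ by the same computation as in the single-component corollary. Since the two ergodic components have distinct statistics, the active mode is asymptotically identifiable from $(X^n,Z^n)$, so a single estimator $g_n$ can apply $\hat{S}_i$ whenever component $i$ is active with vanishing misidentification probability; the average distortion then splits into the $\beta,1-\beta$ weighted sum
\begin{align*}
  D(\boldsymbol{S},\hat{\boldsymbol{S}})=\beta\,\mathbb{E}_X\!\left[\frac{\sigma_{s_1}\sigma_z}{\sigma_{s_1}|X|^2+\sigma_z}\right]+(1-\beta)\,\mathbb{E}_X\!\left[\frac{\sigma_{s_2}\sigma_z}{\sigma_{s_2}|X|^2+\sigma_z}\right]=\mathbb{E}_X\!\left[\widetilde{H}(X)\right].
\end{align*}
Imposing $\mathbb{E}_X[\widetilde{H}(X)]\leq D$ and optimizing over $\alpha\in[0,1]$ then yields the claimed region.

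The main obstacle is the distortion side: a single reconstruction function must realize the mode-matched MMSE for both components at once, which I justify through asymptotic identification of the active mode from the ergodic statistics of $(X^n,Z^n)$ and therefore needs the two components to be statistically distinguishable. When they are not separable, only a single mixed-MMSE estimator is available and the weighted-sum value $\mathbb{E}_X[\widetilde{H}(X)]$ need not be attainable, which is precisely why the result is stated as an \emph{inner} bound rather than an exact region, paralleling the open equality question recorded after Theorem~\ref{the: mixed average distortion}. The remaining ingredients, namely the mixed-source minimum identity and the ergodic convergence of the conditional information density, are routine and mirror the single-component corollary.
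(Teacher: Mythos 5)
Your proposal is correct and follows essentially the same route as the paper's proof: i.i.d.\ Gaussian inputs $X_i\sim\mathcal{N}(0,\alpha P_X)$ independent of the state, the mixed-source identity $\ubar{I}(\boldsymbol{X};\boldsymbol{Y}|\boldsymbol{S})=\min_{i\in\{1,2\}}I(\boldsymbol{X};\boldsymbol{Y}_i|\boldsymbol{S}_i)$ with each component evaluated by stationarity and ergodicity, and the $\beta$-weighted decomposition of the average distortion into the per-component MMSE terms $\mathbb{E}_X\left[\sigma_{s_i}\sigma_z/(\sigma_{s_i}|X|^2+\sigma_z)\right]$. Your explicit mode-identification step only fills in what the paper leaves implicit at its step $(a)$ — and is not even needed when $\sigma_{s_1}=\sigma_{s_2}$, since the two mode-matched estimators then coincide — while the inner-bound status of the statement actually stems from the open converse for the mixed average-distortion case noted after Theorem~\ref{the: mixed average distortion}, not from estimator separability.
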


The channel input $\{X_i\}$ is set to be i.i.d. Gaussian random variables with zero mean and $\alpha P_X$ variance such that $\alpha\in[0,1]$ and
\begin{align*}
  \beta \mathbb{E}_{X\sim\mathcal{N}(0,\alpha P_X)}\left[ \frac{\sigma_{s_1}\sigma_z}{\sigma_{s_1}|X|^2+\sigma_z} \right] + (1-\beta) \mathbb{E}_{X\sim\mathcal{N}(0,\alpha P_X)}\left[ \frac{\sigma_{s_2}\sigma_z}{\sigma_{s_2}|X|^2+\sigma_z} \right] \leq D.
\end{align*}
It follows  that
\begin{align*}
  \underbar{I}(\boldsymbol{X};\boldsymbol{Y}|\boldsymbol{S}) &= \min\{I(\boldsymbol{X};\boldsymbol{Y}_1|\boldsymbol{S}_1),I(\boldsymbol{X};\boldsymbol{Y}_2|\boldsymbol{S}_2)\}\\
  &=\min\{\frac{1}{2}\mathbb{E}\left[  \log(\frac{\alpha S^2_1P_X+\sigma}{\sigma}) \right],\frac{1}{2}\mathbb{E}\left[  \log(\frac{\alpha S^2_2P_X+\sigma}{\sigma}) \right]\}
\end{align*}
For the distortion we have 
\begin{align*}
  D(\boldsymbol{S},\hat{\boldsymbol{S}})&=\limsup_{n\to\infty}\frac{1}{n}\mathbb{E}\left[ d(S^n,\hat{S}^n) \right]\\
  &=\limsup_{n\to\infty}(\frac{\beta}{n}\mathbb{E}\left[ d(S^n_1,\hat{S}^n_1) \right] + \frac{1-\beta}{n}\mathbb{E}\left[ d(S^n_2,\hat{S}^n_2) \right])\\
  &\overset{(a)}{=}\beta \mathbb{E}_{X\sim\mathcal{N}(0,\alpha P_X)}\left[ \frac{\sigma_{s_1}\sigma_z}{\sigma_{s_1}|X|^2+\sigma_z} \right] + (1-\beta) \mathbb{E}_{X\sim\mathcal{N}(0,\alpha P_X)}\left[ \frac{\sigma_{s_2}\sigma_z}{\sigma_{s_2}|X|^2+\sigma_z} \right],
\end{align*}
where $(a)$ follows by the fact that $\frac{1}{n}\mathbb{E}\left[ d(S^n_1,\hat{S}^n_1)\right]$ and $\frac{1}{n}\mathbb{E}\left[ d(S^n_2,\hat{S}^n_2)\right]$ converge since $\boldsymbol{S}_1$ and $\boldsymbol{S}_2$ are stationary and ergodic and $\{X_i\}$ are i.i.d. random variables.

\section{Conclusion}
In this paper, we provide the general formulas of action-dependent ISAC problems under different distortion criteria. Results of special cases include stationary and memoryless channels, mixed states and channels, and rate-limited side information at one side case. Numerical results focus on AWGN with interference and fading channels.

\appendices

\section{proof of corollary \ref{coro: discrete memoryless capacity}}\label{app: proof of discrete memoryless capacity}
In this section, we prove Corollary \ref{coro: discrete memoryless capacity}. We start by showing for discrete generated states and discrete channels, generating codewords in a memoryless way does not make the distortion larger.

Let $\hat{\boldsymbol{S}}=\{\hat{S}^n=\{\hat{S}^{(n)}_1,\hat{S}^{(n)}_2,\dots,\hat{S}^{(n)}_n\}\}_{n=1}^{\infty}$ be the general reproduction process of channel state such that we have $\hat{S}^n=g_n(X^n,A^n,S^n_e,Z^n)$ using reproduce function $g_n$.
Let $(\boldsymbol{A}^*,\boldsymbol{U}^*,\boldsymbol{X}^*)$ be random process in which each collection of random variables $(A^{n*},U^{n*},X^{n*})$ is distributed as
\begin{align}
  \label{def:memoryless A}P_{A^{n*}}(a^n) = \prod_{i=1}^{n} P_{A^{(n)*}_i}(a_i),\\
  \label{def:memoryless U}P_{U^{n*}|A^{n*}S^n_e}(u^n|a^n,s^n_e) = \prod_{i=1}^{n} P_{U^{(n)*}_i|A^{(n)*}_iS^{(n)}_{e,i}}(u_i|a_i,s_{e,i}),\\
  \label{def:memoryless X}P_{X^{n*}|U^{n*}S^n_e}(x^n|u^n,s^n_e) = \prod_{i=1}^{n} P_{X^{(n)*}_i|U^{(n)*}_iS^{(n)}_{e,i}}(x_i|u_i,s_{e,i}).
\end{align}
 and $\hat{\boldsymbol{S}}^*=\{\hat{S}^{n*}=(\hat{S}^{(n)*}_1,\hat{S}^{(n)*}_2,\dots,\hat{S}^{(n)*}_n)\}$ be the reproduction process induced by $(\boldsymbol{A}^*,\boldsymbol{U}^*,\boldsymbol{X}^*,\boldsymbol{S}^{*}_e,\boldsymbol{S}^{*},\boldsymbol{S}^{*}_d,\boldsymbol{Y}^{*},\boldsymbol{Z}^{*})$ such that $\hat{S}^{n*}=g_n(X^{n*},A^{n*},S^{n*}_e,Z^{n*})$ with each component satisfying
\begin{align*}
  P_{S^{(n)}_i\hat{S}^{(n)}_i} = P_{S^{(n)}_i\hat{S}^{(n)*}_i},\;\;\text{$i=1,2,\dots,n.$}
\end{align*}
Note that given reproduce function $g_n$, the reproduced sequence $\hat{S}^n$ is determined by $(X^n,A^n,S^n_e,Z^n)$ and $\hat{S}^{n*}$ follows likewise.
Further note that $(S^{(n)}_1,\hat{S}^{(n)*}_1),\dots,(S^{(n)}_n,\hat{S}^{(n)*}_n)$ are independent due to the discrete memoryless property of the random processes.  If the random processes $(\boldsymbol{A}^*,\boldsymbol{U}^*,\boldsymbol{X}^*)$ also satisfy that $P^{(n)}_{A_i}=P_A,P^{(n)}_{U_i|A_iS_{e,i}}=P_{UX|AS_{e}},P^{(n)}_{X_i|U_iS_{e,i}}=P_{X|US_e}$ for all $i=1,\dots,n$, we say they are stationary and memoryless processes and omit the superscript $`(n)'$. The proof of the following lemma is similar to that of \cite[Lemma 5.8.1]{koga2013information}.
\begin{lemma}\label{lem: distortion constraint for memoryless case}
  For action-dependent ISAC model with memoryless state and memoryless channel defined in \eqref{eq: discrete state condition} and \eqref{eq: discrete channel condition}, respectively and additive distortion defined in \eqref{def: additive distortion function}, we have
  \begin{align*}
    \bar{D}(\boldsymbol{S},\hat{\boldsymbol{S}}^*) \leq \bar{D}(\boldsymbol{S},\hat{\boldsymbol{S}}),\;\; D(\boldsymbol{S},\hat{\boldsymbol{S}}^*) = D(\boldsymbol{S},\hat{\boldsymbol{S}})
  \end{align*}
\end{lemma}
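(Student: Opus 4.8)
\emph{Proof proposal.} The plan is to separate the claim into its average- and maximal-distortion parts and to reduce both to two facts furnished by the construction preceding the lemma: (i) the per-letter laws coincide, $P_{S_i\hat{S}_i}=P_{S_i\hat{S}_i^*}$ for every $i$; and (ii) under the memoryless processes of \eqref{def:memoryless A}--\eqref{def:memoryless X} the pairs $(S_i,\hat{S}_i^*)$ are independent across $i$. Since the alphabets in Corollary~\ref{coro: discrete memoryless capacity} are finite and $d$ is fixed, $d$ is bounded, say $d(s,\hat{s})\le d_{\max}$. I abbreviate $Z_n:=\frac{1}{n}\sum_{i=1}^n d(S_i,\hat{S}_i)$, $Z_n^*:=\frac{1}{n}\sum_{i=1}^n d(S_i,\hat{S}_i^*)$ and $\mu_n:=\frac{1}{n}\sum_{i=1}^n\mathbb{E}[d(S_i,\hat{S}_i)]$.

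The average-distortion equality is immediate from (i): it gives $\mathbb{E}[d(S_i,\hat{S}_i)]=\mathbb{E}[d(S_i,\hat{S}_i^*)]$ for each $i$, hence $\mathbb{E}[Z_n]=\mathbb{E}[Z_n^*]=\mu_n$ for every $n$, and taking $\limsup_{n\to\infty}$ yields $D(\boldsymbol{S},\hat{\boldsymbol{S}})=D(\boldsymbol{S},\hat{\boldsymbol{S}}^*)=\limsup_{n\to\infty}\mu_n$. This part is routine.

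For the maximal-distortion part I first show $\bar{D}(\boldsymbol{S},\hat{\boldsymbol{S}}^*)=\limsup_{n\to\infty}\mu_n$. By (ii) the summands $d(S_i,\hat{S}_i^*)$ are independent and bounded by $d_{\max}$, so $\mathrm{Var}(Z_n^*)\le d_{\max}^2/n\to0$ and Chebyshev's inequality gives $Z_n^*-\mu_n\to0$ in probability. A short bookkeeping argument with the definition of $p$-$\limsup$ then forces $p\text{-}\limsup_{n\to\infty}Z_n^*=\limsup_{n\to\infty}\mu_n$: for $\alpha>\limsup_n\mu_n$ the event $\{Z_n^*>\alpha\}$ is eventually contained in $\{|Z_n^*-\mu_n|>\epsilon\}$ and so has vanishing probability, while for $\alpha<\limsup_n\mu_n$ one evaluates along a subsequence with $\mu_{n_k}\to\limsup_n\mu_n$ to see that $\Pr\{Z_{n_k}^*>\alpha\}\to1$. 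Hence $\bar{D}(\boldsymbol{S},\hat{\boldsymbol{S}}^*)=\limsup_{n\to\infty}\mu_n=D(\boldsymbol{S},\hat{\boldsymbol{S}})$.

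It remains to bound this by $\bar{D}(\boldsymbol{S},\hat{\boldsymbol{S}})$, for which I invoke the general inequality $D\le\bar{D}$ valid for any bounded distortion. Setting $\alpha:=\bar{D}(\boldsymbol{S},\hat{\boldsymbol{S}})=p\text{-}\limsup_{n\to\infty}Z_n$, for each $\epsilon>0$ the splitting $\mathbb{E}[Z_n]\le(\alpha+\epsilon)+d_{\max}\Pr\{Z_n>\alpha+\epsilon\}$ together with $\Pr\{Z_n>\alpha+\epsilon\}\to0$ gives $\limsup_{n\to\infty}\mathbb{E}[Z_n]\le\alpha$, i.e. $D(\boldsymbol{S},\hat{\boldsymbol{S}})\le\bar{D}(\boldsymbol{S},\hat{\boldsymbol{S}})$. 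Combining with the previous paragraph yields $\bar{D}(\boldsymbol{S},\hat{\boldsymbol{S}}^*)=D(\boldsymbol{S},\hat{\boldsymbol{S}})\le\bar{D}(\boldsymbol{S},\hat{\boldsymbol{S}})$, as claimed. I expect the main obstacle to be the concentration step, and specifically the justification of the cross-coordinate independence in (ii): it rests on the memoryless product form of $(\boldsymbol{A}^*,\boldsymbol{U}^*,\boldsymbol{X}^*)$ in \eqref{def:memoryless A}--\eqref{def:memoryless X} together with the symbolwise action of the optimal estimator $g^*$ (equivalently $g_i^*$) established earlier in this section, so that $\hat{S}_i^*$ is a function of the $i$-th coordinate alone; once this independence and the boundedness of $d$ are granted, the Chebyshev estimate and the $p$-$\limsup$ accounting are routine.
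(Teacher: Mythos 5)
Your proposal is correct and follows essentially the same route as the paper: the average-distortion equality comes from the matching per-letter laws $P_{S_i\hat{S}_i}=P_{S_i\hat{S}_i^{*}}$, Chebyshev's inequality (using boundedness and cross-coordinate independence of the pairs $(S_i,\hat{S}_i^{*})$) identifies $\bar{D}(\boldsymbol{S},\hat{\boldsymbol{S}}^{*})$ with $\limsup_n \mu_n = D(\boldsymbol{S},\hat{\boldsymbol{S}})$, and the general inequality $D\leq\bar{D}$ concludes. The only cosmetic difference is that you prove $D(\boldsymbol{S},\hat{\boldsymbol{S}})\leq\bar{D}(\boldsymbol{S},\hat{\boldsymbol{S}})$ directly from $d\leq d_{\max}$ whereas the paper cites uniform integrability via \cite[Lemma 5.3.2]{koga2013information}, and your caveat that the independence in (ii) rests on the estimator acting symbolwise (as the optimal $g^{*}$ does) is a fair, slightly more careful reading of the independence assertion the paper makes without comment.
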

\begin{proof}
  By the assumption that $d_n$ is an additive distortion, we have
\begin{align*}
  \frac{1}{n}\mathbb{E}\left[ d_n(S^n,\hat{S}^n) \right] = \frac{1}{n}\mathbb{E}\left[ \sum_{i=1}^n d(S^{(n)}_i,\hat{S}^{(n)}_i) \right]\overset{(a)}{=} \frac{1}{n}\mathbb{E}\left[ \sum_{i=1}^n d(S^{(n)}_i,\hat{S}^{(n)*}_i) \right],
\end{align*}
where $(a)$ follows by the fact that $P_{S^{(n)}_i\hat{S}^{(n)}_i} = P_{S^{(n)}_i\hat{S}^{(n)*}_i}$. By noticing that $0\leq d(S,\hat{S})\leq D_{max} < \infty$, applying Chebyshev's inequality yields
\begin{align*}
  \bar{D}(\boldsymbol{S},\hat{\boldsymbol{S}}^*) &= p-\limsup_{n\to\infty}\frac{1}{n}d_n(S^n,\hat{S}^{n*})\\
  &=\limsup_{n\to\infty}\frac{1}{n}\sum_{i=1}^n \mathbb{E}\left[ d(S^{(n)}_i,\hat{S}^{(n)*}_i) \right]\\
  &=\limsup_{n\to\infty}\frac{1}{n}\sum_{i=1}^n \mathbb{E}\left[ d(S^{(n)}_i,\hat{S}^{(n)}_i) \right]\\
  &=\limsup_{n\to\infty}\frac{1}{n} \mathbb{E}\left[ d_n(S^n,\hat{S}^n) \right]\\
  &= D(\boldsymbol{S},\hat{\boldsymbol{S}}).
\end{align*}
Now by the uniform integrability condition and \cite[Lemma 5.3.2]{koga2013information}, it follows that $D(\boldsymbol{S},\hat{\boldsymbol{S}}) \leq \bar{D}(\boldsymbol{S},\hat{\boldsymbol{S}})$ and hence, $\bar{D}(\boldsymbol{S},\hat{\boldsymbol{S}}^*) \leq \bar{D}(\boldsymbol{S},\hat{\boldsymbol{S}}).$
For the average distortion, we have
\begin{align*}
  D(\boldsymbol{S},\hat{\boldsymbol{S}}) &= \limsup_{n\to\infty}\frac{1}{n} \mathbb{E}\left[ d(S^n,\hat{S}^n) \right]\\
  &= \limsup_{n\to\infty}\frac{1}{n} \sum_{i=1}^n\mathbb{E}\left[ d(S^{(n)}_i,\hat{S}^{(n)}_i) \right]\\
  &= \limsup_{n\to\infty}\frac{1}{n} \sum_{i=1}^n\mathbb{E}\left[ d(S^{(n)}_i,\hat{S}^{(n)*}_i) \right]\\
  &=\limsup_{n\to\infty}\frac{1}{n} \mathbb{E}\left[ d(S^{n},\hat{S}^{n*}) \right]\\
  &=D(\boldsymbol{S},\hat{\boldsymbol{S}}^*).
\end{align*}  
The proof is completed.
\end{proof}
Lemma \ref{lem: distortion constraint for memoryless case} shows that when the channel states and the channels are discrete memoryless, restricting the action sequences and codewords to be generated in discrete memoryless ways does not violate the distortion constraint. 
In the following subsections, we prove the maximal distortion case for the stationary and memoryless channel and state. The proof of the average distortion case follows similarly.
\subsection{Converse Part for Maximal Distortion Case}
Now, we prove the converse part. By \cite[Theorem 3.5.2]{koga2013information}, we have 
\begin{align*}
  \ubar{I}(\boldsymbol{A},\boldsymbol{U};\boldsymbol{Y}) \leq \liminf_{n\to\infty} \frac{1}{n}I(A^n,U^n;Y^n),\\
  \bar{I}(\boldsymbol{U};\boldsymbol{S}|\boldsymbol{A}) \geq \limsup_{n\to\infty} \frac{1}{n}I(U^n;S^n_e|A^n).
\end{align*}
It follows that
\begin{align*}
  &\ubar{I}(\boldsymbol{A},\boldsymbol{U};\boldsymbol{Y}) - \bar{I}(\boldsymbol{U};\boldsymbol{S}|\boldsymbol{A})\\
  &\leq\liminf_{n\to\infty} \frac{1}{n}I(A^n,U^n;Y^n) - \limsup_{n\to\infty} \frac{1}{n}I(U^n;S^n_e|A^n)\\
  &=\liminf_{n\to\infty} \frac{1}{n} \left[  I(A^n,U^n;Y^n) - I(U^n;S^n_e|A^n) \right],
\end{align*}
where
\begin{align}
  &\frac{1}{n}I(A^n,U^n;Y^n) - I(U^n;S^n_e|A^n)\notag\\
  &=\frac{1}{n}\sum_{i=1}^n I(A^n,U^n;Y_i|Y^{i-1}) - I(U^n;S_{e,i}|S_{e,i+1}^{n},A^n)\notag\\
  &\overset{(a)}{=}\frac{1}{n}\sum_{i=1}^nI(A^n,U^n,S_{e,i+1}^{n};Y_i|Y^{,i-1}) - I(U^n,Y^{i-1};S_{e,i}|S_{e,i+1}^{n},A^n)\notag\\
  &\leq \frac{1}{n}\sum_{i=1}^nI(A^n,U^n,S_{e,i+1}^{n},Y^{i-1};Y_i) - (H(S_{e,i}|S_{e,i+1}^{n},A^n) - H(S_{e,i}|S_{e,i+1}^{n},A^n,U^n,Y^{i-1}))\notag\\
  &\label{eq: nonstationary and memoryless 1}\overset{(b)}{=} \frac{1}{n}\sum_{i=1}^nI(U^{*}_i,A_i;Y_i) - I(S_{e,i};U^{*}_i|A_i)\\
  &=\frac{1}{n}\sum_{i=1}^nI(U^{*}_{Q_n},A_{Q_n};Y_{Q_n}|Q_n=i) - I(S_{e,{Q_n}};U^{*}_{Q_n}|A_{Q_n},Q_n=i)\notag\\
  &=I(U^{*}_{Q_n},A_{Q_n};Y_{Q_n}|Q_n) - I(S_{e,{Q_n}};U^{*}_{Q_n}|A_{Q_n},Q_n)\notag\\
  &\overset{(c)}{\leq} I(U^{*}_{Q_n},A_{Q_n},Q_n;Y_{Q_n}) - I(S_{e,{Q_n}};U^{*}_{Q_n},Q_n|A_{Q_n})\notag\\
  \label{neq: converse upper bound}&\overset{(d)}{=} I(U_n,A_n;Y_n) - I(U_n;S_{e,n}|A_n)
\end{align}
where $(a)$ follows by applying Csisz\'ar's sum identity, $(b)$ follows by setting $U^{*}_i=(A^n,U^n,S_{e,i+1}^{n},Y^{i-1})$ and the memoryless assumption, $(c)$ follows by the independence between $S_{e,Q_n}$ and $Q_n$ given $A_{Q_n}$ and $(d)$ follows by setting $U_n = (U^{*}_{Q_n},Q_n),A_n=A_{Q_n}, S_{e,n} = S_{e,{Q_n}}, Y_n = Y_{Q_n}$. We further define $X_n = X_{Q_n},S_n=S_{Q_n},Z_n = Z_{Q_n}$. Note that for all distributions $P_{S^n_eS^n|U^n}$ and $P_{Y^nZ^n|X^nS^n}$ with memoryless property, the random variables in equation \eqref{neq: converse upper bound} also satisfy the Markov chain relations $(A_n,U_n,S_{e,n},S_{d,n})-(X_n,S_n)-(Y_n,Z_n)$ and $X_n-(U_n,S_{e,n})-A_n$. Hence, for $P_{A^n}=\prod_{i=1}^{n}P_{A_i}$ and $P_{U^n|A^nS^n_e}P_{X^n|U^nS^n_e}=\prod_{i=1}^{n}P_{U_i|A_iS_{e,i}}P_{X_i|U_iS_{e,i}}$, inequality \eqref{neq: converse upper bound} still holds. 

Let $g_{n,i}(A^n,X^n,S^n_e,Z^n) := \hat{S}^{n}_i$ be the $i-$th component of $\hat{S}^n.$ Further note that when $(A^n,U^n,X^n)$ are generated in a memoryless way, we have the best state estimator as defined in \eqref{def: memoryless best state estimator}, denoted by $g^*$. From Lemma \ref{lem: distortion constraint for memoryless case}, we know that restricting processes $(\boldsymbol{A},\boldsymbol{U},\boldsymbol{X})$ to memoryless processes does not violate the distortion constraint.
By the definition of $\bar{D}$ we have 
\begin{align*}
  \bar{D}(\boldsymbol{S},\hat{\boldsymbol{S}})&=\limsup_{n\to\infty}\frac{1}{n}\sum_{i=1}^n\mathbb{E}\left[ d\left(S_i,\hat{S}_i\right) \right]\\
  &=\limsup_{n\to\infty}\frac{1}{n}\sum_{i=1}^n \mathbb{E}\left[ d\left(S_i,g_{n,i}(A^n,X^n,S^n_e,Z^n)\right) \right]\\
  &\overset{(a)}{\geq} \limsup_{n\to\infty}\frac{1}{n}\sum_{i=1}^n \mathbb{E}\left[ d\left(S_i,g^*(A_i,X_i,S_{e,i},Z_i)\right) \right]\\
  &=\limsup_{n\to\infty}\mathbb{E}\left[ \mathbb{E}\left[ d \left( S_{Q_n},g^*(A_{Q_n},X_{Q_n},S_{e,{Q_n}},Z_{Q_n}) \right) \right] | Q_n \right]\\
  &\overset{(b)}{=}\limsup_{n\to\infty} \mathbb{E} \left[ d\left(S_n,g^*(A_n,X_n,S_{e,n},Z_n)\right) \right]\\
  &\overset{(c)}{=}\limsup_{n\to\infty} \mathbb{E} \left[ d\left(S_n,\hat{S}_n\right) \right],
\end{align*}
where $(a)$ follows by replacing the estimator $g_n$ with the best estimator $g^*$, $(b)$ follows by setting $A_n=A_{Q_n}, S_{e,n} = S_{e,{Q_n}}, X_n = X_{Q_n},Z_n = Z_{Q_n}$ as in \eqref{neq: converse upper bound}, $(c)$ follows by setting $\hat{S}_n:=g^*(A_n,X_n,S_{e,n},Z_n)$.
Since Lemma \ref{lem: distortion constraint for memoryless case} implies that restricting input distortion to be memoryless does not violate the constraint, we assume $\bar{D}(\boldsymbol{S},\hat{\boldsymbol{S}}) \leq D$ and hence,
\begin{align}
  \label{neq: single letter distortion constraint}\mathbb{E}\left[ d(S_n,\hat{S}_n) \right] \leq D+ \gamma
\end{align}
for some $\gamma>0$ when $n>n_0$ for some sufficiently large $n_0$. Combining \eqref{neq: converse upper bound} and \eqref{neq: single letter distortion constraint} yields
\begin{align*}
  \ubar{I}(\boldsymbol{A},\boldsymbol{U};\boldsymbol{Y}) - \bar{I}(\boldsymbol{U};\boldsymbol{S}|\boldsymbol{A}) \leq \max_{\substack{P_A,P_{U|AS_e}P_{X|US_e}:\\\mathbb{E}\left[ d(S,\hat{S}) \right] \leq D+ \gamma}} I(A,U;Y) - I(U;S_e|A).
\end{align*}
Since the right-hand side formula is a continuous function with respect to $\gamma,$ by letting $\gamma\to0$, we have
\begin{align*}
  \ubar{I}(\boldsymbol{A},\boldsymbol{U};\boldsymbol{Y}) - \bar{I}(\boldsymbol{U};\boldsymbol{S}_e|\boldsymbol{A}) \leq \max_{\substack{P_A,P_{U|AS_e}P_{X|US_e}:\\\mathbb{E}\left[ d(S_n,\hat{S}_n) \right] \leq D}} I(A,U;Y) - I(U;S_e|A).
\end{align*}

\subsection{Achievability for Maximal Distortion Case}
Let $\bar{\mathcal{P}}_{D,M}$ be the set of random processes in which each collection of random variables $(A^n,U^n,S^n_e,S^n,X^n,Y^n,Z^n)$ satisfying $p-\limsup_{n\to\infty}\frac{1}{n}d_n(S^n,g(X^n,A^n,S^n_e,Z^n)) \leq D$ and $(\boldsymbol{A},\boldsymbol{U},\boldsymbol{X})$ are random processes satisfying memoryless condition in \eqref{def:memoryless A} - \eqref{def:memoryless X}. It follows that
\begin{align*}
  C(D) &\geq \sup_{\bar{\mathcal{P}}_{D,M}} \ubar{I}(\boldsymbol{A},\boldsymbol{U};\boldsymbol{Y}) - \bar{I}(\boldsymbol{U};\boldsymbol{S}_e|\boldsymbol{A})\\
  &=\sup_{\bar{\mathcal{P}}_{D,M}} p-\liminf \frac{1}{n}\log \frac{P_{Y^n|A^nU^n}(Y^n|A^n,U^n)}{P_{Y^n}(Y^n)} - p-\limsup \frac{1}{n}\log \frac{P_{S^n_e|A^nU^n}(S^n_e|A^nU^n)}{P_{S^n_e|A^n}(S^n_e|A^n)}\\
  &\overset{(a)}{\geq} \liminf_{n\to\infty}\frac{1}{n}\log \frac{P_{Y^{n*}|A^{n*}U^{n*}}(Y^{n*}|A^{n*},U^{n*})}{P_{Y^{n*}}(Y^{n*})} - \limsup_{n\to\infty} \frac{1}{n}\log \frac{P_{S^{n*}_e|A^{n*}U^{n*}}(S^{n*}_e|A^{n*}U^{n*})}{P_{S^{n*}_e|A^{n*}}(S^{n*}_e|A^{n*})}\\
  &\overset{(b)}{=}\mathbb{E}\left[ \frac{P_{Y^{*}|A^{*}U^{*}}(Y^{*}|A^{*},U^{*})}{P_{Y^{*}}(Y^{*})}   \right] - \mathbb{E}\left[  \log \frac{P_{S^{*}_e|A^{*}U^{*}}(S^{*}_e|A^{*}U^{*})}{P_{S^{*}_e|A^{*}}(S^{*}_e|A^{*})} \right]\\
  &= I(A^*,U^*;Y^*) - I(U^*;S^*_e|A^*),
\end{align*}
where $(a)$ follows by substituting $(A^*,U^*,S^*_e,Y^*)$ achieving maximum in \eqref{eq: stationary and memoryless capacity 1} into the formula, $(b)$ follows by the stationary and memoryless properties and applying Chebyshev's inequality. For the maximal distortion we have
\begin{align*}
  \bar{D}(\boldsymbol{S},\hat{\boldsymbol{S}}^*) = \mathbb{E}\left[ d(S,\hat{S}^*) \right]=\mathbb{E}\left[ d(S,g^*(A^*,X^*,S_e^*,Z^*)) \right] \leq D.
\end{align*}
The proof is completed.

\subsection{Direct and Converse Part of Average Distortion Case}
By Lemma \ref{lem: distortion constraint for memoryless case} we have
\begin{align*}
  D(\boldsymbol{S},\hat{\boldsymbol{S}}) =  D(\boldsymbol{S},\hat{\boldsymbol{S}}^*).
\end{align*}
The remaining direct and converse proof follows exactly the same as the maximal distortion constraint case.

\subsection{Nonstationary and Memoryless Case}
For the case that the states and channels are memoryless but nonstationary, we first prove the converse part. Upon having \eqref{eq: nonstationary and memoryless 1}, we have
\begin{align*}
  \ubar{I}(\boldsymbol{A},\boldsymbol{U};\boldsymbol{Y}) - \bar{I}(\boldsymbol{U};\boldsymbol{S}|\boldsymbol{A})\leq \frac{1}{n}\sum_{i=1}^nI(U^{(n)*}_i,A^{(n)}_i;Y^{(n)}_i) - I(S^{(n)}_{e,i};U^{(n)*}_i|A^{(n)}_i).
\end{align*}

 Note that Lemma \ref{lem: distortion constraint for memoryless case} only requires the memoryless property. For the maximal and average distortion in this case, we have
\begin{align*}
  D\geq D(\boldsymbol{S},\hat{\boldsymbol{S}}^*)&=\bar{D}(\boldsymbol{S},\hat{\boldsymbol{S}}^*)\\
  &=\limsup_{n\to\infty}\frac{1}{n}\sum_{i=1}^n \mathbb{E}\left[d(S_i,g_i(A^n,X^n,S^n_{e},Z^n)) \right]\\
  &\geq \limsup_{n\to\infty}\frac{1}{n}\sum_{i=1}^n \mathbb{E}\left[ d(S_i,g^*_i(A^{(n)}_i,X^{(n)}_i,S^{(n)}_{e,i},Z^{(n)}_i)) \right].
\end{align*}
This gives the upper bound for the nonstationary and memoryless case. The achievability part is similar to that in \cite[Appendix D]{tan2014formula} with an additional distortion constraint. Note that by Lemma \ref{lem: distortion constraint for memoryless case}, restricting input distributions to be memoryless does not violate the constraint and hence, $\mathcal{P}^n_{NM,D}$ is not an empty set. This completes the proof.

\section{proof of theorem \ref{the: mixed maximal distortion} and theorem \ref{the: mixed average distortion}}\label{app: proof of mixed case results}
In this section, we prove the capacity-distortion results about mixed states and channels in Theorems \ref{the: mixed maximal distortion} and \ref{the: mixed average distortion}.

For given input distribution $P_{A^n}$ and $P_{U^nX^n|A^nS^n_e}$, the joint distribution of $(A^n,U^n,X^n,S^n_e,S^n,S^n_d,Y^n,Z^n)$ is
\begin{align*}
  &P_{A^n,U^n,X^n,S^n_e,S^n,S^n_d,Y^n,Z^n}(a^n,u^n,x^n,s^n_e,s^n,s^n_d,y^n,z^n)\\
  &=P_{A^n}(a^n)\left[ \alpha_1 P_{S^n_{e,1}S^n_1S^n_{d,1}|A^n}(s^n_e,s^n,s^n_d|a^n) + \alpha_2P_{S^n_{e,2}S^n_2S^n_{d,2}|A^n}(s^n_e,s^n,s^n_d|a^n) \right]  P_{U^n|A^nS^n_e}(u^n|a^n,s^n_e)P_{X^n|U^nS^n_e}(x^n|u^n,s^n_e)\\
  &\quad\quad\quad\quad \cdot \left[ \beta_1 P_{Y^n_1Z^n_1|X^nS^n}(y^n,z^n|x^n,s^n) + \beta_2 P_{Y^n_2Z^n_2|X^nS^n}(y^n,z^n|x^n,s^n) \right]\\
  &=\sum_{i=1,2}\sum_{j=1,2}\alpha_i\beta_jP_{A^n}(a^n)P_{S^n_{e,i}S^n_iS^n_{d,i}|A^n}(s^n_e,s^n,s^n_d|a^n)P_{U^n|A^nS^n_{e,i}}(u^n|a^n,s^n_e)P_{X^n|U^nS^n_{e,i}}(x^n|u^n,s^n_e)P_{Y^n_{i,j}Z^n_{i,j}|X^nS^n_i}(y^n,z^n|x^n,s^n)\\
  &=\sum_{i=1,2}\sum_{j=1,2}\alpha_i\beta_j P_{A^n,U^n,X^n,S^n_{e,i},S^n_i,S^n_{d,i},Y^n_{i,j},Z^n_{i,j}}(a^n,u^n,x^n,s^n_e,s^n,s^n_d,y^n,z^n),
\end{align*}
and the marginal distributions of $(A^n,U^n,S^n_e)$ and $(A^n,U^n,S^n_d,Y^n)$ are
\begin{align*}
  P_{A^n,U^n,S^n_e}(a^n,u^n,s^n_e) = \sum_{i=1,2}\alpha_i P_{A^n,U^n_i,S^n_{e,i}}(a^n,u^n,s^n_e),\\
  P_{A^n,U^n,S^n_d,Y^n}(a^n,u^n,s^n_d,y^n) = \sum_{i=1,2}\sum_{j=1,2}\alpha_i\beta_j P_{A^n,U^n_i,S^n_{d,i},Y^n_{i,j}}(a^n,u^n,s^n_d,y^n).
\end{align*}
\subsection{Proof of Theorem \ref{the: mixed maximal distortion}}
Let $\hat{S}^n_{i,j}=g_n(A^n,X^n,S^n_{e,i},Z^n_{i,j})$. By \cite[Lemma 1.4.2]{koga2013information}, which can be easily extended to finite convex combination case, we have
\begin{align}
  \bar{D}(\boldsymbol{S},\hat{\boldsymbol{S}}) = \max\left\{ \bar{D}(\boldsymbol{S}_1,\hat{\boldsymbol{S}}_{1,1}), \bar{D}(\boldsymbol{S}_1,\hat{\boldsymbol{S}}_{1,2}), \bar{D}(\boldsymbol{S}_2,\hat{\boldsymbol{S}}_{2,1}),\bar{D}(\boldsymbol{S}_2,\hat{\boldsymbol{S}}_{2,2}),\right\}.
\end{align}
In addition, \cite[Lemma 3.3.1]{koga2013information} and \cite[Lemma 5.10.1]{koga2013information} imply that
\begin{align*}
  \ubar{I}(\boldsymbol{A},\boldsymbol{U};\boldsymbol{S}_d,\boldsymbol{Y}) = \min\left\{ \ubar{I}(\boldsymbol{A},\boldsymbol{U}_1;\boldsymbol{S}_{d,1},\boldsymbol{Y}_{1,1}),\ubar{I}(\boldsymbol{A},\boldsymbol{U}_1;\boldsymbol{S}_{d,1},\boldsymbol{Y}_{1,2}),
  \ubar{I}(\boldsymbol{A},\boldsymbol{U}_2;\boldsymbol{S}_{d,2},\boldsymbol{Y}_{2,1}),
  \ubar{I}(\boldsymbol{A},\boldsymbol{U}_2;\boldsymbol{S}_{d,2},\boldsymbol{Y}_{2,2}) \right\},\\
  \bar{I}(\boldsymbol{U};\boldsymbol{S}_e|\boldsymbol{A}) = \max\left\{ \bar{I}(\boldsymbol{U};\boldsymbol{S}_{e,1}|\boldsymbol{A}),\bar{I}(\boldsymbol{U};\boldsymbol{S}_{e,2}|\boldsymbol{A}) \right\}.
\end{align*}
Substituting the above terms into Theorem \ref{the: maximal distortion capacity} completes the first part proof of Theorem \ref{the: mixed maximal distortion}.
For memoryless and stationary cases, the achievability  follows similarly to \cite{tan2014formula} and \cite{bloch2008secrecy} by considering the optimal distribution in \eqref{def: rate of mixed memoryless} and define random variables $(S^n_{e,i},S^n_i,S^n_{d,i},U^n,X^n,Y^n_{i,j},Z^n_{i,j})$ by $\prod_{l=1}^{n}P_A(a_l)P_{S_{e,i}S_iS_{d,i}|A}(s_{e,l},s_l,s_{d,l}|a_l)P_{U|AS_e}(u_l|s_{e,l},a_l)$ $P_{X|US_e}(x_l|s_{e,l},u_l)P_{Y_{j}Z_j|XS_i}(y_l,z_l|x_l,s_l)$. By the law of large numbers it follows that
\begin{align*}
  \ubar{I}(\boldsymbol{A},\boldsymbol{U}_i;\boldsymbol{S}_{d,i},\boldsymbol{Y}_{i,j}) = I(A,U;S_{d,i},Y_{i,j}),\;\;\; \text{$\forall i,j\in\{1,2\}$}\\
  \bar{I}(\boldsymbol{U}_i;\boldsymbol{S}_{e,i}|\boldsymbol{A}) = I(U;S_{e,i}|A),\;\;\; \text{$\forall i,j\in\{1,2\}$}.
\end{align*}
The proof is completed.

\subsection{Proof of Theorem \ref{the: mixed average distortion}}

For this case, we first rewrite the capacity-distortion function into distortion-capacity form,
\begin{align*}
  D(C) = \sup D(\boldsymbol{S},\hat{\boldsymbol{S}}),
\end{align*}
where $\sup$ is taken over random processes $(\boldsymbol{A},\boldsymbol{U},\boldsymbol{X},\boldsymbol{S}_{e},\boldsymbol{S},\boldsymbol{S}_{d},\boldsymbol{Y},\boldsymbol{Z})$ such that $\ubar{I}(\boldsymbol{A},\boldsymbol{U};\boldsymbol{Y},\boldsymbol{S}_d) - \bar{I}(\boldsymbol{U};\boldsymbol{S}_e|\boldsymbol{A})\leq C$ with $\boldsymbol{S}_e,\boldsymbol{S},\boldsymbol{S}_d$ having distribution given $\boldsymbol{A}$ as $P_{S^n_eS^nS^n_d|A^n}$ and $(\boldsymbol{Y},\boldsymbol{Z})$ given $(\boldsymbol{X},\boldsymbol{S})$ being distributed as $P_{Y^nZ^n|X^nS^n}$. It follows that
\begin{align*}
  D(\boldsymbol{S},\hat{\boldsymbol{S}}) &= \frac{1}{n}\limsup_{n\to\infty}\mathbb{E}\left[ d_n(S^n,\hat{S}^n) \right]\\
  &=\limsup_{n\to\infty}\mathbb{E}\left[ \sum_{i=1}^2\sum_{j=1}^2\frac{\alpha_i\beta_j}{n}d_n(S^n_i,\hat{S}^n_{i,j}) \right]\\
  &\leq \sum_{i=1}^2\sum_{j=1}^2 \alpha_i\beta_j \limsup_{n\to\infty}\frac{1}{n}\mathbb{E}\left[ d_n(S^n_i,\hat{S}^n_{i,j}) \right]\\
  &=\sum_{i=1}^2\sum_{j=1}^2 \alpha_i\beta_j D(\boldsymbol{S}_i,\hat{\boldsymbol{S}}_{i,j})
\end{align*}
By \cite[Lemma 3.3.1]{koga2013information} and \cite[Lemma 5.10.1]{koga2013information}, we have
\begin{align*}
  \ubar{I}(\boldsymbol{A},\boldsymbol{U};\boldsymbol{Y},\boldsymbol{S}_d) - \bar{I}(\boldsymbol{U};\boldsymbol{S}_e|\boldsymbol{A}) = \min_{i\in\{1,2\},j\in\{1,2\}}\ubar{I}(\boldsymbol{A},\boldsymbol{U}_i;\boldsymbol{S}_{d,i},\boldsymbol{Y}_{i,j}) - \max_{i\in\{1,2\}}\bar{I}(\boldsymbol{U}_i;\boldsymbol{S}_{e,i}|\boldsymbol{A}).
\end{align*}
Note that by the intersection between terms $D(\boldsymbol{S}_i,\hat{\boldsymbol{S}}_{i,j})$ for $i,j \in\{1,2\}$, we cannot take the supremum over each term separately.

\section{proof of theorem \ref{the: rate-limited CSI at encoder}}\label{app: proof of rate-limited csi at encoder}
In this section, we prove the capacity-distortion results for the case that there is rate-limited channel state information at the encoder side and imperfect CSI at the decoder side. In this case, the lossy description of the imperfect CSI can be regarded as common information at both the encoder and decoder sides. In addition, the binning is no longer necessary at the encoder side.

Given input random variables $(A^n,U^n,X^n,V^n,S^n,S^n_d,Y^n)$ with joint distribution $P_{A^n}P_{S^nS^n_d|A^n}P_{V^n|S^n_d}P_{X^n|A^nV^n}P_{Y^n|X^nS^n}$ such that $\frac{1}{n}\mathbb{E}\left[ d_n(S^n,g(X^n,A^n,V^n,Z^n)) \right] \leq D$.

Define mappings $F_n^{\mathcal{C}}:\mathcal{S}^n_d \to \mathcal{V}^n$
and  $\eta_1: \mathcal{U}^n \times \mathcal{A}^n \times \mathcal{V}^n \to \mathbb{R}^{+}$ and $\eta_2: \mathcal{U}^n \times \mathcal{A}^n \times \mathcal{V}^n \to \mathbb{R}^{+}$ as
\begin{align*}
  \eta_1(s^n_d,v^n) :&= \sum_{a^n,x^n,s^n,z^n} \sum_{\substack{y^n:\\(a^n,v^n,x^n,s^n_d,y^n)\notin\mathcal{T}_1}} P_{A^n|S^n_d}(a^n|s^n_d)P_{S^n|A^nS^n_d}(s^n|a^n,s^n_d)P_{X^n|A^nV^n}(x^n|a^n,v^n)P_{Y^nZ^n|X^nS^n}(y^n,z^n|x^n,s^n),\\
  \eta_2(s^n_d,v^n) :&= \sum_{a^n,x^n,s^n,y^n,z^n} P_{A^n|S^n_d}(a^n|s^n_d)P_{S^n|A^nS^n_d}(s^n|a^n,s^n_d)P_{X^n|A^nV^n}(x^n|a^n,v^n)P_{Y^nZ^n|X^nS^n}(y^n,z^n|x^n,s^n)d(s^n,g(a^n,v^n,x^n,z^n)),
\end{align*}
where
\begin{align*}
  \mathcal{T}_1 = \left\{(x^n,v^n,a^n,s^n_d,y^n):\frac{1}{n}\log \frac{P_{Y^nS^n_d|X^nV^nA^n}(y^n,s^n_d|x^n,v^n,a^n)}{P_{Y^nS^n_d|V^n}(y^n,s^n_d|v^n)} \geq \ubar{I}(\boldsymbol{A},\boldsymbol{X};\boldsymbol{Y},\boldsymbol{S}_d|\boldsymbol{V}) - \gamma\right\}.
\end{align*}
Further, define
\begin{align*}
  \mathcal{T}_2 = \left\{ (v^n,s^n_d): \frac{1}{n}\log\frac{P_{V^n|S^n_d}(v^n|s^n_d)}{P_{V^n}(v^n)} \leq \bar{I}(\boldsymbol{V};\boldsymbol{S}_d) + \gamma \right\}.
\end{align*}
and 
\begin{align*}
  \mathcal{T}_3 = \left\{ (v^n,a^n): \frac{1}{n}\log\frac{P_{V^n|A^n}(v^n|a^n)}{P_{V^n}(v^n)} \leq \bar{I}(\boldsymbol{V};\boldsymbol{A}) + \gamma \right\},\\
  \mathcal{B}=\{(s^n_d,v^n):\eta_1(s^n_d,v^n) \leq \pi_1^{\frac{1}{2}}\}.
\end{align*}
Similarly, we have 
\begin{align*}
  Pr\{(A^n,S^n,S^n_d,V^n,X^n,Y^n)\notin \mathcal{T}_1\}=\pi_1\to 0,\\
  Pr\{(V^n,S^n_d)\notin \mathcal{T}_2\}=\pi_2 \to 0,\\
  Pr\{(V^n,A^n)\notin \mathcal{T}_3\}=\pi_3 \to 0
\end{align*}
as $n\to \infty$ and $Pr\{(S^n_d,V^n)\notin \mathcal{B}\}\leq \pi_1^{\frac{1}{2}}$.

\emph{Codebook Generation.} Generate action codebook $\mathcal{A}=\{a^n(m):m\in[1:2^{nR}]\}$ according to $P_{A^n}$. Generate a set of lossy descriptions of the imperfect side information $\boldsymbol{S}_d$ at the decoder side $\mathcal{C}_e=\{v^n(l_e):l_e\in[1:2^{nR_e}]\}$, where $R_e = \bar{I}(\boldsymbol{V};\boldsymbol{S}_d)+2\gamma$, each according to distribution $P_{V^n}$.  For each $v^n(l_e)$, generate a message codebook $\mathcal{C}(l_e)=\{x^n(l_e,m):m\in[1:2^{nR}]\}$ with $R=\ubar{I}(\boldsymbol{A},\boldsymbol{X};\boldsymbol{Y},\boldsymbol{S}_d|\boldsymbol{V}) - 2\gamma$, each according to distribution $P_{X^n|A^nV^n}(\cdot | a^n(l_e,m),v^n(l_e))$. The codebooks are revealed to all the participants in the system.

\emph{Coded Side Information.} Once the message $m$ and action sequence $a^n(m)$ are determined, the state sequence $s^n$ and imperfect side information $s^n_d$ are generated. The state information encoder chooses $v^n(l_e)\in\mathcal{C}_e$ such that $(v^n(l_e),s^n_d)\in \mathcal{B}$.
If there is more than one such sequence, choose
\begin{align*}
  l_e^* = \mathop{\arg\min}_{\substack{l_e:v^n(l_e)\in\mathcal{C}_e,\\(v^n(l_e),s^n_d)\in\mathcal{B}}} \eta_2(v^n(l_e),s^n_d).
\end{align*}
If no such index exists, set
\begin{align*}
  l_e^* = \mathop{\arg\min}_{v^n(l_e)\in\mathcal{C}_e} \eta_2(v^n(l_e),s^n_d).
\end{align*}

\emph{Encoding. } To transmit message $m$ with the observed lossy description index $l_e$, the encoder finds the lossy description $v^n(l_e)$ and selects codeword $x^n(l_e,m)$.

\emph{Decoding.} Given $y^n\in\mathcal{Y}^n, s^n_d\in\mathcal{S}^n_d$ and $v^n$, the decoder looks for a unique message $\hat{m}$ such that
\begin{align*}
  (v^n(l_e),a^n(l_e,\hat{m}),x^n(l_e,\hat{m}),s^n_d,y^n)\in\mathcal{T}_1.
\end{align*}
If there is no such unique message $\hat{m}$ declare an error.

Without loss of generality, suppose message $m=1$ is sent. Define error events as follows.
\begin{align*}
  \mathcal{E}_1 = \{(V^n,S^n_d) \notin \mathcal{B} \;\text{for all $V^n\in\mathcal{C}_e$}\},\\
  \mathcal{E}_2 = \{(V^n(l_e),A^n(l_e,1),X^n(l_e,1),S^n_d,Y^n) \notin  \mathcal{T}_1\},\\
  \mathcal{E}_3 = \{(V^n(l_e),A^n(l_e,\hat{m}),X^n(l_e,\hat{m}),S^n_d,Y^n) \in  \mathcal{T}_1\;\text{for some $\hat{m}\neq 1$}\}
\end{align*}
The decoding error is bounded by $Pr\{\mathcal{E}\}\leq Pr\{\mathcal{E}_1\} + Pr\{\mathcal{E}_2 \cap \mathcal{E}_1^c\}+Pr\{\mathcal{E}_3\}$. 
To bound the first term on the right-hand side of the inequality, we omit the index for simplicity and it follows that
\begin{align*}
  Pr\{\mathcal{E}_1\} &=\sum_{a^n}P_{A^n}(a^n)\sum_{s^n_d}P_{S^n|A^n}(s^n_d|a^n)\left( \sum_{v^n}P_{V^n}(v^n)\mathbb{I}\{((v^n,s^n_d) \notin \mathcal{T}_2 \cup (v^n,a^n) \notin \mathcal{T}_3)\} \right)^{|\mathcal{C}_e|}\\
  &=\sum_{a^n}P_{A^n}(a^n)\sum_{s^n_d}P_{S^n|A^n}(s^n_d|a^n)\left(1 - \sum_{v^n}P_{V^n}(v^n)\mathbb{I}\{((v^n,s^n_d) \in \mathcal{T}_2 \cap (v^n,a^n) \in \mathcal{T}_3)\} \right)^{|\mathcal{C}_e|}\\
  &\leq \sum_{a^n}P_{A^n}(a^n)\sum_{s^n_d}P_{S^n|A^n}(s^n_d|a^n)\left(1 - 2^{-n(\bar{I}(\boldsymbol{V};\boldsymbol{S}_d) + \gamma)}\sum_{v^n}P_{V^n|S^n_d}(v^n|s^n_d)\mathbb{I}\{((v^n,s^n_d) \in \mathcal{T}_2 \cap (v^n,a^n) \in \mathcal{T}_3)\} \right)^{|\mathcal{C}_e|}\\
  &\leq \sum_{a^n}P_{A^n}(a^n)\sum_{s^n_d}P_{S^n|A^n}(s^n_d|a^n) \left(1 + \exp(-2^{n\gamma}) - \sum_{v^n}P_{V^n|S^n_d}(v^n|s^n_d)\mathbb{I}\{((v^n,s^n_d) \in \mathcal{T}_2 \cap (v^n,a^n) \in \mathcal{T}_3)\} \right)\\
  &= 1 + \exp(-2^{n\gamma}) - \sum_{a^n,s^n_d,v^n}P_{A^nS^n_dV^n}(a^n,s^n_d,v^n)\mathbb{I}\{((v^n,s^n_d) \in \mathcal{T}_2 \cap (v^n,a^n) \in \mathcal{T}_3)\}\\
  &\leq Pr\{(V^n,S^n_d)\notin \mathcal{T}_2\} + Pr\{(A^n,V^n)\notin \mathcal{T}_3\} + \exp(-2^{n\gamma}) \leq \pi_2 + \pi_3 + \exp(-2^{n\gamma})
\end{align*}
with $\pi_2\to 0,\;\pi_3 \to 0$ as $n\to \infty$.
The bounds of $\mathcal{E}_2$ and $\mathcal{E}_3$ are similar to the original general channel\cite[Chapter 3]{koga2013information} with $V^n$ being the common information at both encoder and decoder and are omitted here.
The average distortion is 
\begin{align*}
  &\mathbb{E}\left[ d_n(S^n,g(f_A(M),f(M,V^n),Z^n))\right] \\
  &=\frac{1}{|\mathcal{M}|}\sum_{m}\sum_{\mathcal{A}}P_{\textbf{A}}(\mathcal{A})\sum_{s^n_d}P_{S^n_d|A^n}(s^n_d|f_A(m))\sum_{\mathcal{C}_e}P_{\mathbf{C}_e}(\mathcal{C}_e)\sum_{x^n}P_{X^n|A^nV^n}(x^n|f_A(m),F^{\mathcal{C}_e}_n(s^n_d))\sum_{s^n}P_{S^n|A^nS^n_d}(s^n|f_A(m),s^n_d) \\
  &\quad\quad\quad\quad\quad\quad\quad\sum_{y^n}P_{Y^nZ^n|X^nS^n}(y^n,z^n|x^n,s^n)d_n(s^n,g(f_A(m),x^n,F^{\mathcal{C}_e}_n(s^n_d),z^n))\\
  &=\frac{1}{|\mathcal{M}|}\sum_{m}\sum_{a^n}P_{A^n}(a^n)\sum_{s^n_d}P_{S^n_d|A^n}(s^n_d|a^n)\sum_{\mathcal{C}_e}P_{\mathbf{C}_e}(\mathcal{C}_e)\sum_{x^n}P_{X^n|A^nV^n}(x^n|a^n,F^{\mathcal{C}_e}_n(s^n_d))\sum_{s^n}P_{S^n|A^nS^n_d}(s^n|a^n,s^n_d) \\
  &\quad\quad\quad\quad\quad\quad\quad\sum_{y^n}P_{Y^nZ^n|X^nS^n}(y^n,z^n|x^n,s^n)d_n(s^n,g(a^n,x^n,F^{\mathcal{C}_e}_n(s^n_d),z^n))\\
  &=\frac{1}{|\mathcal{M}|}\sum_{m}\sum_{s^n_d}P_{S^n_d}(s^n_d)\sum_{\mathcal{C}_e}P_{\mathbf{C}_e}(\mathcal{C}_e)\sum_{a^n}P_{A^n|S^n_d}(a^n|s^n_d)\sum_{x^n}P_{X^n|A^nV^n}(x^n|a^n,F^{\mathcal{C}_e}_n(s^n_d))\sum_{s^n}P_{S^n|A^nS^n_d}(s^n|a^n,s^n_d) \\
  &\quad\quad\quad\quad\quad\quad\quad\sum_{y^n}P_{Y^nZ^n|X^nS^n}(y^n,z^n|x^n,s^n)d_n(s^n,g(a^n,x^n,F^{\mathcal{C}_e}_n(s^n_d),z^n))\\
  &=\frac{1}{|\mathcal{M}|}\sum_{m}\sum_{s^n_d}P_{S^n_d}(s^n_d)\sum_{\mathcal{C}_e}P_{\mathbf{C}_e}(\mathcal{C}_e)\eta_2(s^n_d,F^{\mathcal{C}_e}_n(s^n_d)).
\end{align*}
The remaining proof is the same as that in Section \ref{sec: proof of average distortion} and is omitted here. The achievability proof is completed.

The converse part of $R_e$ directly follows by the converse part of \cite[Theorem 5.4.1]{koga2013information}. The bound on $R$ is the same as Section \ref{sec: proof of average distortion} by the definition of the code that we assign each message an action sequence and the fact that given $V^n$ the input sequence $X^n$ is also determined by the message. Here $V^n$ is the side information that is recoverable at both the encoder and decoder. By the conditional version of the converse part of \cite[Theorem 3.2.1]{koga2013information}, the proof is completed. 

The proof of the capacity under maximal distortion constraint is almost the same as the average distortion constraint above, except the distortion constraint part is replaced with a similar argument as that in Section \ref{sec: proof of maximal distortion}. We omit the detail here.

\bibliographystyle{ieeetr} 
\bibliography{ref}
\end{document}